\pgfplotsset{compat=1.18}
\renewcommand{\arraystretch}{1.2}
\newcommand{\dsum}{\displaystyle\sum}
\definecolor{pisa}{RGB}{12,102,124}
\definecolor{indigo}{RGB}{75,0,130}
\theoremstyle{plain}
\newtheorem{theorem}{Theorem}[]
\newtheorem{definition}[theorem]{Definition}
\newtheorem{lemma}[theorem]{Lemma}
\newtheorem{proposition}[theorem]{Proposition}
\newtheorem{example}[theorem]{Example}
\newtheorem{axiom}{Axiom}
\journal{}
\begin{document}

\begin{frontmatter}

\title{Power in Sharing Networks with \textit{a priori} Unions}

\author[Mich,Fra,An]
{Michele Aleandri, Francesco Ciardiello, Andrea Di Liddo}

\affiliation[Mich]{organization={Department of AI, Data and Decision Sciences, Luiss Guido Carli},
            addressline={Viale Pola 12}, 
            city={Roma},
            postcode={00198}, 
            state={Lazio},
            country={Italy}}
\affiliation[Fra]{organization={Department of Economics and Statistics, University of Salerno},
            addressline={Via Papa Paolo Giovanni II}, 
            city={Fisciano},
            postcode={84084}, 
            state={Campania},
            country={Italy}}
\affiliation[An]{organization={Department of Economics, University of Foggia},
            addressline={Via Romolo Caggese 1}, 
            city={Foggia},
            postcode={71122}, 
            state={Puglia},
            country={Italy}}
%% Abstract
\begin{abstract}
We introduce and analyze a novel family of power indices tailored for sharing networks in technological markets, where firms operate competitively within, but not across, distinct industrial sectors. In these settings, inter-firm collaboration structures emerge from formal technology licensing agreements. The proposed indices are defined over graphs with a priori unions and combine two key centrality measures—degree-based and rescaled eigenvector centrality—modulated by positive market coefficients that reflect sectoral dynamics.

We first explore the monotonicity properties of these indices, highlighting their responsiveness to local changes in network structure. Interestingly, major economic actors exhibit structural stability when inter-sectoral technological spillovers are minimal. Building on these findings, we provide theoretical underpinnings by characterizing the indices as the Shapley values of a family of coherent and economically interpretable transferable utility (TU) games defined over such graphs. However, for a broad class of network structures, the core of these TU games is often empty, signaling inherent instability in technological sharing arrangements.
Finally, we offer an axiomatic foundation for this family of indices, proving independence of the proposed axioms. This axiomatization extends naturally to exchange networks, even when stage-propagation coefficients are not positive.
\end{abstract}

%%Graphical abstract
%\begin{graphicalabstract}
%\includegraphics{grabs}
%\end{graphicalabstract}

%% Keywords
\begin{keyword}
Graph theory \sep Networks \sep Cooperative games \sep Axioms \sep Innovation \\
{\bf JEL Classification:} C71 \sep D62 \sep D63 \\
{\bf MSC Classification:} 91A12 \sep 91A 43.
\end{keyword}

\end{frontmatter}

%% Add \usepackage{lineno} before \begin{document} and uncomment 
%% following line to enable line numbers
%% \linenumbers

%% main text
%%

%% Use \section commands to start a section
\section{Introduction}

Technological sharing networks are complex and evolving structures that facilitate the exchange of knowledge, capabilities, and technological assets among firms, often through licensing agreements \citep{Shapiro1985patent, fagerberg2005oxford}. These networks typically emerge in digitally enabled environments, where firms can collaborate across sectoral boundaries without engaging in direct market competition \citep{arora2006patent}. In the framework we consider, firms operate in technologically distinct markets and produce non-substitutable goods. This structural segmentation implies that firms across different markets are not strategic rivals, thus allowing innovation and knowledge to circulate across sectors without triggering traditional antitrust concerns.

By contrast, within each individual market, firms often compete intensely, making them generally reluctant to share proprietary technologies directly \citep{fosfuri2006licensing}. In such environments, the sustainability of knowledge-sharing practices depends crucially on the strength of Intellectual Property Rights (IPR) regimes and the presence of enforceable technological patents. Weak IPR protection can significantly erode incentives to engage in technology transfer, ultimately discouraging innovation diffusion and undermining the efficiency of the network \citep{arora2006patent}. Moreover, licensing agreements carry an inherent trade-off: while licensees may cannibalize the licensor’s sales, such losses may be partially offset by royalty revenues \citep{fosfuri2006licensing}. Nonetheless, patent royalties are typically too low to serve as a strong standalone incentive for technology transfer—especially in mature technological sectors where they tend to be negligible. Their impact may be more relevant in emerging markets, but remains generally limited as a driver of collaborative behavior.
Given these limitations, our analysis focuses on sublicensing as a strategic mechanism through which firms can benefit from knowledge circulation without exposing their core technologies to direct competitors. In particular, firms may redistribute technologies acquired from external markets (when permitted by the original licensors) to other firms, including rivals within the same market. This creates a form of indirect collaboration that, while occurring in a non-cooperative setting, can contribute to expanding the overall market. In this sense, sublicensing acts as a tacit collusive device, mitigating the challenges of direct competition and supporting technology diffusion under constrained incentives. Such dynamics exemplify the concept of \textit{coopetition}—a hybrid strategic condition in which firms simultaneously compete and collaborate \citep{brandenburger2011co}. Increasingly, this logic characterizes contemporary innovation ecosystems, where enabling technologies—such as cloud infrastructures, blockchain systems, and digital licensing platforms—support more fluid, decentralized, and scalable exchanges of innovation.

Within this broader context, technological networks raise critical questions about value appropriation, intellectual property governance, and the distribution of power—particularly in the presence of high market concentration. A particularly illustrative example is the 2019 Google–Huawei case: following U.S. sanctions, Google revoked Huawei’s Android license, thereby cutting off access to key services such as Google Play and Maps. This disruption profoundly affected Huawei’s global business trajectory, demonstrating how control over essential technological channels within a network can reshape power dynamics and dependencies among actors \citep{gilsing2008network}.
Motivated by these observations, our goal is to develop a formal power index—a metric that captures the degree of influence or control that each node (i.e., firm) exerts within a technological network under homogenous market assumptions. In our framework, connections represent the channels through which technology transfers occur, and the proposed index allows us to quantify the strategic role of each firm in governing the flow of innovation. This provides a systematic approach to analyzing power asymmetries in complex technological environments.

In Section~\ref{sect1}, we introduce a new family of power indices for firms embedded in technological sharing networks, modeled as undirected graphs with \emph{a priori} unions that represent sectoral partitions. We study key structural properties of these indices, with particular attention to monotonicity: the intuition that a firm’s power should be non-decreasing when new edges are added to the network. Interestingly, we show that this property does not hold in general, highlighting subtle power dependencies in local network structures and multiple sharing. We further conceptualize how power indices naturally induce rankings of firms, and show that such rankings—especially at the top—are useful for identifying the degree of technological power concentration. In particular, we demonstrate that, while these rankings typically depend on market average coefficients, they can become invariant in network topologies characterized by low externalities—a property with potential applications in assessing the degree of effective competition. 

In Section~\ref{coope}, we assume that firms cooperate in sharing. Accordingly, we formulate a family of transferable utility games (TU-games) defined on undirected graphs with a priori unions, representing technological cooperation between firms while preserving the lack of competition across technological sectors. TU-games with \emph{a priori} unions serve as a fundamental theoretical framework for networks where coalitions are pre-assigned or tend to be formed. For instance, the Owen value~\citep{owen1977values} extends the Shapley value to environments with \emph{a priori} unions, while the Banzhaf-Owen value~\citep{owen1981modification} analogously generalizes the Banzhaf index. Indeed, this literature on the Shapley value for graphs and lattice structures has been formalised~\citep{grabisch2006capacities, bilbao2012cooperative, grabisch1997k} and recently studied~\citep{khmelnitskaya2016shapley, hellman2018values, li2019myerson, li2020efficient, li2020myerson}. For instance, cooperative and probabilistic values of the traditional graph centrality measures have been studied in~\citep{szczepanski2014centrality} when graphs have a priori unions.
We show that the Shapley value of these games naturally aligns with our power indices. We find the Shapley value using the potential of the Shapley value as a theoretical tool. At the same time, we show that the same result can be reached using the Harsanyi dividends technique, although this method becomes more problematic as the number of edges in the network with a priori unions increases. We illustrate that this family of games has an empty core in most cases due to the proper sub-additivity of the characteristic function, then the Shapley value  provides fair and stable payoffs under cooperation of nodes. Recently, in \citep{van2022degree,van2024degree}, the authors connect social network theory, particularly network centrality measures, with the economic literature on von Neumann–Morgenstern expected utility functions. Their goal is to provide a utility foundation for centrality measures, enabling the comparison of different positions across various networks

In Section~\ref{Axioms}, we shift to a normative perspective, moving beyond deterministic models of technological sharing (also beyond the cooperative model). We introduce a novel axiomatic characterization of the family of power indices, providing a more general theoretical foundation. This approach dates back to~\citep{myerson1977graphs,owen1977values, hart1983endogenous} and, more recently, is developed in~\citep{vazquez1996owen, alonso2009values,van2015values}.  Our axioms, partially inspired by \citep{van2008characterizations} and \citep{karos2015indirect}, can be seen as normative principles, while preserving the functional form of the indices derived in deterministic contexts. Among these, the  Inter-Union Neighborhood axiom plays a central role: it states that a firm's power depends, in a probabilistic sense, only on its immediate links to firms in different technological domains. This captures the idea that influence is rooted in the firm's capacity to propagate technologies across sectoral boundaries, consistently with the licensing frameworks discussed earlier.
Importantly, our axiomatic result also holds in more general settings where the coefficients of our indices may take a different interpretation. While such cases do not arise in our baseline framework, where coefficients represent average market values and are reasonably non-negative, this broader formulation is conceptually meaningful in domains where the propagation of sharing of ``items'' can produce adverse effects.
In fact, there are important cases where the impact of multiple sharing cannot be considered always positive or negative: it may be initially beneficial, but become harmful beyond a certain stage of propagation. One example is the diffusion of a new communication standard or API (Application Programming Interface): at first, widespread adoption increases interoperability, but excessive dominance can lead to technological lock-in or systemic fragility. Similarly, in epidemic dynamics, controlled exposure to a pathogen (e.g., in the form of targeted immunity strategies) can help build population resilience or prevent more dangerous variants from dominating. Yet, if the contagion (or the sharing in our former language) exceeds containment capacity, the same epidemic network that initially enabled preparedness can fuel a large-scale health crisis.
In such contexts, modeling of propagation effects—including the use of negative or opposite coefficients in sign—is essential. Our axiomatic framework accommodates these possibilities. Nevertheless, we finally provide a coherent and final result for positive power indices in the analyzed context of technological sharing networks. This is achieved through extremely mild assumptions on our axioms, ensuring that the indices remain interpretable under our settings from the normative point of view.

All proofs are provided in the Appendix \ref{app:proofs1}.

\section{A family of power indices in Sharing Networks}\label{sect1}
In our framework, firms are embedded in \emph{a priori} sectors, each representing a distinct and mutually exclusive technological market. Every firm is assumed to be the exclusive owner of a proprietary technology.
Technology sharing occurs through licensing agreements or formal transfers. A firm—referred to as the \emph{licensor}—may grant a license for its technology to another firm—the \emph{licensee}—typically operating in a different technological domain. This license authorises the licensee to use, adapt, or integrate the technology, subject to the terms and limitations of the agreement. The licensed asset may include a patented process, a software tool, or a specific manufacturing technique.
With the licensor’s consent, the licensee may further grant a sublicense to a third party. A sublicense allows the sublicensee to access and use the technology under the original license’s constraints. Sublicenses can be issued either within the same technological sector or across different ones, thereby promoting the diffusion of innovation beyond sectoral boundaries.
Crucially, sublicenses do not generally confer commercialisation rights; rather, they permit restricted use under clearly defined conditions. These limitations serve to safeguard the licensor’s competitive advantage and intellectual property while allowing for a controlled and strategic dissemination of technological capabilities across firms and industries.
Technological interactions among firms are represented by undirected graphs, where nodes denote firms and edges represent mutual access to technologies. 

We formalise these structures using undirected graphs with coalition structures, as described below.
Let \( N \) be a finite set of \( n \) firms (or nodes), where \( n \in \mathbb{N} \) and \( n \geq 3 \), and let \( E \) denote the set of undirected edges connecting elements of \( N \). Formally,
\[
E \subseteq \mathbf{E}(N) = \big\{ \{i, j\} : i, j \in N \big\}.
\]
Two nodes \( i \) and \( j \) are said to be neighbours if \( \{i, j\} \in E \).
Firms are organised into predefined groups, each representing a distinct \emph{technological sector}. Let \( r \in \mathbb{N} \), with \( 2 \leq r \leq n \), denote the number of mutually exclusive sectors. The set of such groupings is denoted by \( \mathbf{\Pi}_r(N) \), the collection of all \( r \)-partitions of \( N \), formally defined as
\[
\mathbf{\Pi}_r(N) = \left\{ \Pi = (S_1, \ldots, S_r) : \bigcup_{\ell=1}^r S_\ell = N,\ S_\ell \neq \emptyset,\ S_\ell \cap S_m = \emptyset\ \text{for all } \ell \neq m \right\}.
\]
Each subset \( S_\ell \) in a partition \( \Pi = (S_1, \ldots, S_r) \in \mathbf{\Pi}_r(N) \) is called an a priori union, or simply a union.
Given a set of links \( E \subseteq \mathbf{E}(N) \) and a partition \( \Pi \in \mathbf{\Pi}_r(N) \), we define the pair
\[
G_r = (E, \Pi)
\]
as a graph with a priori unions. The collection of all such graphs is denoted by
\[
\mathcal{G}(N) = \left\{ G_r = (E, \Pi) : E \subseteq \mathbf{E}(N),\ \Pi \in \mathbf{\Pi}_r(N),\ r \leq n \right\}.
\]
This framework captures, for instance, scenarios in which firms from distinct sectors—such as automotive, semiconductors, and telecommunications—own proprietary technologies and selectively engage in cross-sector technology sharing.

\begin{example}\label{ex:threeU}
Car manufacturers such as Toyota, Ford, and BMW operate within the same technological sector and are direct competitors. As a result, they typically do not share proprietary technologies with one another. Nevertheless, they may acquire advanced components—such as high-performance chips—from firms in other technological domains, such as semiconductor companies like NVIDIA.
For instance, if Toyota licenses a cutting-edge chip design from NVIDIA, it may later transmit derivative knowledge or generalised insights to other automotive firms, such as Ford. However, such diffusion is strictly limited—typically involving high-level performance specifications or integration details, without disclosing the underlying architecture or design. The extent of this knowledge sharing is governed by the original licensing agreement between the licensor (NVIDIA) and the licensee (Toyota), including specific provisions regarding sublicensing. Here a list of technologies possessed by firms:
\begin{description}
    \item \textit{Sector $S_1$: Automotive}
    \begin{description}
        \item Toyota: Electric and hybrid vehicles, advanced driver assistance systems (ADAS).
        \item Ford: Electric engines, autonomous driving software.
        \item BMW: Electric mobility technologies, advanced infotainment interfaces.
    \end{description}
    \item \textit{Sector $S_2$: Semiconductor}
    \begin{description}
        \item NVIDIA: AI chips and autonomous driving processors.
        \item Intel: High-performance processors, cloud computing architectures.
        \item Waymo: Autonomous driving systems based on machine learning.
    \end{description}
    \item \textit{Sector $S_3$: Telecommunication}
    \begin{description}
        \item Verizon: 5G network infrastructure.
        \item AT\&T: Data transmission and mobile communication technologies.
        \item Ericsson: Advanced cellular networks and telecommunications infrastructure.
        \item Huawei: 5G chips, edge computing solutions.
    \end{description}
\end{description}
The interconnections among firms in the technological market are illustrated in Figure~\ref{fig:Iexample}, represented as a graph with \emph{a priori} unions. The set of firms is given by $N = \{1, \ldots, 10\}$, and the partition into technological sectors is $\Pi = \big\{\{1,2,3\}, \{4,5,6\}, \{7,8,9,10\}\big\} \in \Pi_{3}(N)$. Nodes $1$, $2$, and $3$ correspond to the automotive sector $S_1$; nodes $4$, $5$, and $6$ belong to the semiconductor sector $S_2$; and nodes $7$, $8$, $9$, and $10$ form the telecommunication sector $S_3$.

\begin{figure}[htbp]
  	\centering 

\begin{tikzpicture}[scale=0.80]
\Vertex[x=0,y=0,size=.6,color=Green,opacity=.4,label=1]{1}
\node at ($(1) + (-1,-0.3)$) {Toyota};
\Vertex[x=0,y=-2.0,size=.6,color=Green,opacity=.4,label=2]{2}
\node at ($(2) + (-1,0)$) {Ford};
\Vertex[x=0,y=-4.0,size=.6,color=Green,opacity=.4,label=3]{3}
\node at ($(3) + (-1,0.2)$) {BMW};
% Technology Firms (Sector B)
\Vertex[x=4,y=0.5,size=.6,color=red,opacity=.4,label=4]{4}
\node at ($(4) + (+1.2,-0.4)$) {NVIDIA};
\Vertex[x=4,y=-2.0,size=.6,color=red,opacity=.4,label=5]{5}
\node at ($(5) + (+1,-0.2)$) {Intel};
\Vertex[x=4,y=-4.0,size=.6,color=red,opacity=.4,label=6]{6}
\node at ($(6) + (1,+0.4)$) {Waymo};

% Telecommunications firms (Sector C)
\Vertex[x=10,y=0,size=.6,color=blue!,opacity=.4,label=7]{7}
\node at ($(7) + (0,-1)$) {Verizon};
\Vertex[x=8,y=-1.0,size=.6,color=blue,opacity=.4,label=8]{8}
\node at ($(8) + (0,-1)$) {AT\&T};
\Vertex[x=8,y=-3.5,size=.6,color=blue,opacity=.4,label=9]{9}
\node at ($(9) + (0,-1)$) {Ericsson};
\Vertex[x=10,y=-4.0,size=.6,color=blue,opacity=.4,label=10]{10}
\node at ($(10) + (0,+1)$) {Huawei};

\Edge[lw=0.5,color=black,bend=0](1)(2)
\Edge[lw=0.5,color=black,bend=0](1)(4)
\Edge[lw=0.5,color=black,bend=0](1)(5)
\Edge[lw=0.5,color=black,bend=0](1)(6)
\Edge[lw=0.5,color=black,bend=-15](1)(7)
\Edge[lw=0.5,color=black,bend=0](2)(4)
\Edge[lw=0.5,color=black,bend=0](3)(4)
\Edge[lw=0.5,color=black,bend=-30](3)(10)
\Edge[lw=0.5,color=black,bend=0](4)(7)
\Edge[lw=0.5,color=black,bend=0](5)(8)
\Edge[lw=0.5,color=black,bend=0](6)(9)
\Edge[lw=0.5,color=black,bend=0](6)(10)
\Edge[lw=0.5,color=black,bend=0](4)(5)
\Edge[lw=0.5,color=black,bend=0](5)(6)
\node[draw, circle, fit=(1) (2) (3), label=above:$S_1$: Automotive] {};
\node[draw, circle, fit=(4) (5) (6), label=above:$S_2$: Semiconductor] {};
\node[draw, ellipse, fit=(7) (8) (9) (10), label=above:$S_3$: Telecommunication] {};
\end{tikzpicture}  
  \caption{Representation of a technological market.}
        \label{fig:Iexample}
  \end{figure}
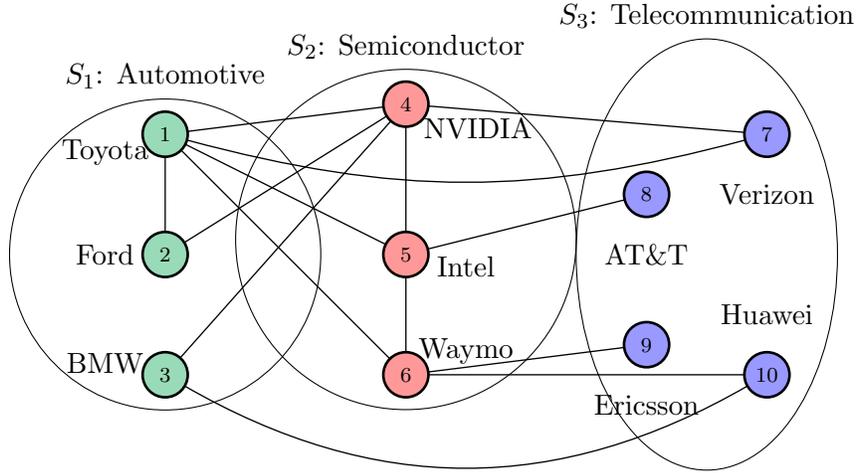
  \end{example}
Understanding  the power of involved firms is crucial for analysing the structure of technology sharing networks. 
A power index on graphs with a priori unions is defined as follows:  
\begin{align*}
\phi: \mathcal{G}(N) \to & \quad \mathbb{R}^n\\ 
(E,\Pi) \to &  \{\phi_i(E,\Pi)\}_{i=1}^{n}
\end{align*} 
We introduce a family of power indices parameterised by two key values. The first, $\alpha \geq 0$, represents the average market value of a primary license—i.e., the standard cost of granting initial access to a proprietary technology. The second, $\beta \geq 0$, reflects the average market value of a sublicense, capturing the value of indirect technological diffusion. Both parameters are treated as exogenous and uniform across firms and sectors, reflecting a competitive and homogeneous market environment. This assumption is particularly reasonable in standardised industries such as telecommunications, software, and pharmaceuticals, where licensing terms tend to converge around market equilibrium. Each index assigns a level of power to a firm based on the number of licenses and sublicenses it issues, weighted respectively by $\alpha$ and $\beta$. The framework thus measures a firm’s influence not only as a direct licensor, but also as a facilitator of technological dissemination throughout the network.\\

\paragraph{Preliminary notation}
Let \( G_r = (N, E) \) a graph with a priori union, then, for each node \( i \in N \), we adopt the following notation:

\begin{itemize}
    \item[-]\( N(i) \subseteq N \): the set of neighbors of node \( i \) in the graph;
    \item[-]\( S^i\in\Pi \): the union to which node \( i \) belongs;
    \item[-]\( S^{-i} := N \setminus S^i \): the set of nodes outside the union of node \( i \);
    \item[-]\( N(i, S) := N(i) \cap S \): the set of neighbors of node \( i \) within a given subset \( S \subseteq N \);
    \item[-]\( d_i(S) := |N(i, S)| \): the number of neighbors of node \( i \) within \( S \);
    \item[-]\( d_i := d_i(N) = |N(i)| \): the total degree of node \( i \).
    
\end{itemize}

Some of the quantities above are particularly relevant for our analysis and are repeatedly used throughout the paper. For clarity and convenience, we introduce dedicated notations:

\begin{itemize}
\item[-]\( d_i^{\mathrm{ext}} := d_i(S^{-i}) = |N(i, S^{-i})| \): the number of neighbors of node \( i \) outside its technological sector or union (also referred to as the external degree of $i$).
    \item[-]\( d_i^{\mathrm{int}} := d_i(S^i) = |N(i, S^i)| \): the number of neighbors of node \( i \) inside its technological sector or union (internal degree);
    \item[-]\( N^{\mathrm{ext}} := \{ i \in N : d_i^{\mathrm{ext}} > 0 \} \): the set of bridge nodes, i.e., nodes that have at least a neighbor in a different union;
    \item[-]\( \epsilon^{\mathrm{ext}} := \left| \left\{ \{i,j\} \in E : S^i \neq S^j \right\} \right| \): the number of links that connect nodes in different unions (also referred to as external links of the technological networks with a priori unions).
\end{itemize}
 
With a slight abuse of notation, we omit the explicit dependence of the above topological quantities on the underlying structure \((E, \Pi)\). However, these quantities inherently depend on such structure. 
For instance, \(N(i\,|\,E,\Pi)\) stands for the neighborhood of node \(i\) under a given edge set and partition, and would normally coincide with \(N(i)\). When the context is clear, we use the simplified form without explicitly specifying the dependence. Nonetheless, in cases where multiple network structures or partitions are being considered, we will write expressions such as \(S^i(E,\Pi)\), \(N(i,S\,|\,E,\Pi)\), or \(d_i(S\,|\,E,\Pi)\) to disambiguate the reference.
This convention allows us to streamline the notation without sacrificing precision where it matters (especially in our proofs).
\begin{definition}
    Given $(\alpha,\beta)\in \mathbb{R}_{\geq 0}^2$, the \emph{$(\alpha,\beta)$-power index} $\varphi^{(\alpha,\beta)} : \mathcal{G}(N) \to \mathbb{R}^n$ is defined by
    \begin{align*}
        \varphi_i^{(\alpha,\beta)}(E,\Pi) & = \alpha\, d_i(S^{-i}) + \beta \sum_{j\in N(i,S^{-i})} \frac{d_j(S^{-i})}{d_j(S^{i})}, \quad \forall (E,\Pi)\in\mathcal{G}(N),\ i\in N.
    \end{align*}
    If $i$ is not a bridge node, the above expression is understood as $\varphi_i^{(\alpha,\beta)}(E,\Pi) = 0$.
    \end{definition}
We allow the parameters $(\alpha,\beta)$ to be non-negative, so that one of the two components can vanish. The case $(\alpha,\beta) = (0,0)$ yields a trivial index equal to zero for all nodes. The \((\alpha,\beta)\)-power index for the ten firms in the technological network \( (E, \Pi) \) of Example~\ref{ex:threeU} is now calculated.
 
\begin{example}[Continuation of Example~\ref{ex:threeU}]
Let \( (\alpha, \beta) \in \mathbb{R}^2_{\geq 0} \). The corresponding \((\alpha,\beta)\)-power index for each firm is summarised in Table~\ref{tab:power_indices_example}.

\begin{table}[H]
\centering
\renewcommand{\arraystretch}{1.4}
\begin{tabular}{|c|c||c|c|}
\hline
Firm & \( \varphi^{(\alpha,\beta)}_i(E,\Pi) \) & Firm & \( \varphi^{(\alpha,\beta)}_i(E,\Pi) \) \\
\hline
1  & \( 4\alpha + \tfrac{23}{3}\beta \) & 6  & \( 3\alpha + \tfrac{5}{3}\beta \) \\
2  & \( \alpha + \tfrac{2}{3}\beta \)   & 7  & \( 2\alpha + 8\beta \) \\
3  & \( 2\alpha + \tfrac{5}{3}\beta \)  & 8  & \( \alpha + 3\beta \) \\
4  & \( 4\alpha + \tfrac{11}{3}\beta \) & 9  & \( \alpha + \beta \) \\
5  & \( 2\alpha + \tfrac{2}{3}\beta \)  & 10 & \( 2\alpha + 2\beta \) \\
\hline
\end{tabular}
\caption{\scriptsize \((\alpha,\beta)\)-power index for the firms in Example~\ref{ex:threeU}.}
\label{tab:power_indices_example}
\end{table}
\end{example}
If market values exhibit a specific symmetry, power indices can be simplified as follows.
\begin{proposition}
\label{prop10}
For any $\alpha\in\mathbb{R}_{\geq 0}$ 
\begin{equation*}
\varphi_i^{(\alpha,\alpha)}(E,\Pi)=\alpha \dsum_{j\in N(i,S^{-i})}\dfrac{d_j}{d_{j}(S^i)} \quad
\forall (E,\Pi)\in \mathcal{G}(N), i\in N,
\end{equation*}

with $\varphi_i^{(\alpha,\alpha)}(E,\Pi)=0$ if $i$ is not a bridge. 
\end{proposition}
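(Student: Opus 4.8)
The plan is to reduce the claim to the defining formula of $\varphi^{(\alpha,\beta)}$ by rewriting the total degree $d_j$ that appears in the proposed simplified expression, exploiting the fact that the union $S^i$ and its complement $S^{-i}$ form a partition of $N$. Concretely, fix a bridge node $i$ and a neighbour $j\in N(i,S^{-i})$. Since $\{S^i,S^{-i}\}$ partitions $N$, the neighbourhood of $j$ decomposes as the disjoint union $N(j)=N(j,S^i)\sqcup N(j,S^{-i})$, whence $d_j=d_j(S^i)+d_j(S^{-i})$. Moreover, because $j\in N(i)$ and $i\in S^i$, we have $i\in N(j,S^i)$, so $d_j(S^i)\ge 1$; in particular every denominator below is well defined and strictly positive.

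Substituting this decomposition into the right-hand side of the claimed identity gives
\begin{equation*}
\alpha\sum_{j\in N(i,S^{-i})}\frac{d_j}{d_j(S^i)}
=\alpha\sum_{j\in N(i,S^{-i})}\frac{d_j(S^i)+d_j(S^{-i})}{d_j(S^i)}
=\alpha\sum_{j\in N(i,S^{-i})}1+\alpha\sum_{j\in N(i,S^{-i})}\frac{d_j(S^{-i})}{d_j(S^i)}.
\end{equation*}
Since $\sum_{j\in N(i,S^{-i})}1=|N(i,S^{-i})|=d_i(S^{-i})$, the last line equals $\alpha\,d_i(S^{-i})+\alpha\sum_{j\in N(i,S^{-i})}\tfrac{d_j(S^{-i})}{d_j(S^i)}$, which is exactly $\varphi_i^{(\alpha,\alpha)}(E,\Pi)$ by the definition of the $(\alpha,\beta)$-power index with $\beta=\alpha$.

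It remains to treat the degenerate case: if $i$ is not a bridge node, then $N(i,S^{-i})=\emptyset$, so the sum on the right-hand side is empty and equals $0$, matching the convention $\varphi_i^{(\alpha,\alpha)}(E,\Pi)=0$. As $i$ and $(E,\Pi)\in\mathcal G(N)$ were arbitrary, this establishes the identity for all nodes. There is no genuine obstacle in this argument; the only point requiring a line of care is the observation that $d_j(S^i)\neq 0$ for every $j\in N(i,S^{-i})$, which is immediate because such a $j$ has $i$ itself as a neighbour inside $S^i$, so the division carried out in the first displayed step is always legitimate.
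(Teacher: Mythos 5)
Your proposal is correct and follows essentially the same route as the paper's proof: both rest on the decomposition $d_j = d_j(S^i) + d_j(S^{-i})$ together with $\sum_{j\in N(i,S^{-i})}1 = d_i(S^{-i})$, the only difference being that you read the computation from the simplified formula back to the definition rather than the other way around, and you make explicit the (harmless) observation that $d_j(S^i)\geq 1$ for $j\in N(i,S^{-i})$.
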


The following result examines how power indices respond—either increasing or decreasing—when additional links are added to an existing network with \emph{a priori} unions.
While it is expected that adding a new link between a bridge node \( i \) and a node \( m \) outside its own union increases \( i \)'s power index, an external power effect raises on some of its neighboring nodes. 
To capture this phenomenon, we define a relevant subset of bridge nodes, referred to as \emph{incomplete bridges}. A node \( i \in N \) is said to be \emph{incomplete} if there exists a coalition \( S_\ell \neq S^i \) such that
\[
\emptyset \subsetneq N(i) \cap S_\ell \subsetneq S_\ell, \quad \text{and} \quad N(i) \cap (N \setminus S_\ell) \neq \emptyset.
\]
That is, node \( i \) is partially connected to the coalition \( S_\ell \), and it also has neighbors outside \( S_\ell \). 
In Figure~\ref{fig:incomplete_bridge_extension}(a), node \( 1 \) is an incomplete bridge and is a neighbor of node \( 4 \) in another union. In Figure~\ref{fig:incomplete_bridge_extension}(b), the addition of a second link between nodes \( 1 \) and \( 6 \) increases the power of node \( 1 \), while potentially diminishing the relative power of node \(4\).

\begin{figure}[htbp]
\centering
\begin{minipage}[b]{0.45\textwidth}
\centering
\textbf{(a)} Initial graph $(E,\Pi)$
%\vspace{0.3cm}
\begin{tikzpicture}[scale=0.5, every node/.style={circle, draw, minimum size=0.3cm, font=\small}]
\node[fill=blue!20] (1) at (0,3) {1};
\node[fill=blue!20] (2) at (-2,3) {2};
\node[fill=blue!20] (3) at (2,3) {3};
\node[fill=green!20] (4) at (0,0) {4};
\node[fill=green!20] (5) at (2,0) {5};
\node[fill=green!20] (6) at (4,0) {6};

% Edges
\draw[thick] (1) -- (2);
\draw[thick] (1) -- (3);
\draw[thick] (1) -- (4);
\draw[thick] (4) -- (5);
\draw[thick] (5) -- (6);
\node[draw, ellipse, fit=(1) (2) (3) , label=right:$S_1$] {};
\node[draw, ellipse, fit=(4) (5) (6) , label=right:$S_2$] {};

\end{tikzpicture}
\end{minipage}
%\hfill
\begin{minipage}[b]{0.45\textwidth}
\centering
\textbf{(b)} Extended graph $(E',\Pi)$
\begin{tikzpicture}[scale=0.5, every node/.style={circle, draw, minimum size=0.6cm, font=\small}]
\node[fill=blue!20] (1) at (0,3) {1};
\node[fill=blue!20] (2) at (-2,3) {2};
\node[fill=blue!20] (3) at (2,3) {3};
\node[fill=green!20] (4) at (0,0) {4};
\node[fill=green!20] (5) at (2,0) {5};
\node[fill=green!20] (6) at (4,0) {6};

% Edges
\draw[thick] (1) -- (2);
\draw[thick] (1) -- (3);
\draw[thick] (1) -- (4);
\draw[thick] (4) -- (5);
\draw[thick] (5) -- (6);
\draw[thick, red] (1) to (6); % red curved edge

\node[draw, ellipse, fit=(1) (2) (3) , label=right:$S_1$] {};
\node[draw, ellipse, fit=(4) (5) (6) , label=right:$S_2$] {};
\end{tikzpicture}
\end{minipage}

\vspace{0.5cm}
\caption{In the initial network \( (E, \Pi) \), node \( 1 \) (in blue, part of \( S_1 = \{1,2,3\} \)) has a unique neighbor \( 4 \) in the green union \( S_2 = \{4,5,6\} \), making it an incomplete bridge.
In the extended network \( (E', \Pi) \), a second link is added from \( 1 \) to node \( 6 \).}
\label{fig:incomplete_bridge_extension}
\end{figure}
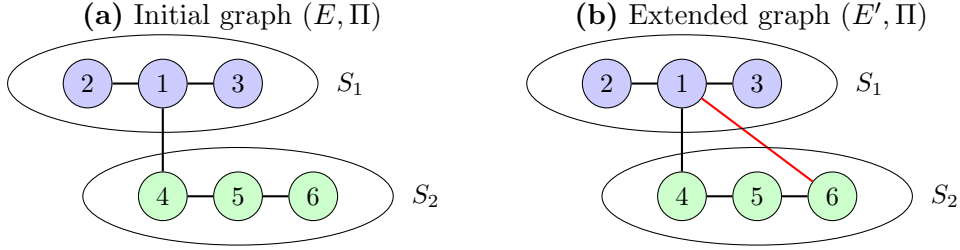

The term \emph{incomplete} indicates that the neighborhood of \( i \) in \( S_\ell \) is non-empty but not exhaustive.
In the context of technological networks, the presence of incomplete bridge nodes signals underlying structural informational asymmetries. Specifically, firms operating within the same technological sector may not have equal access to available technologies, as some connections are selectively mediated through bridge firms. This asymmetry is particularly significant in networks with a priori unions, where it is closely linked to the anti-monotonicity property of the power index discussed earlier. We formalize these consideration in the following result.
\begin{proposition} \label{monot}
Suppose that \( (E, \Pi) \) contains an incomplete bridge node. 
There exists an extended set of edges \( E' \supsetneq E \) for which:
\begin{itemize}
\item[-] there exists a node \( i \in N \) whose power index strictly increases, i.e., \( \varphi_i^{(\alpha,\beta)}(E', \Pi) > \varphi_i^{(\alpha,\beta)}(E, \Pi) \)
with the additional requirement that \( \alpha > 0 \);
\item[-] 
there exists a node \( j \in N \) whose power index strictly decreases, i.e.,    
$\varphi_j^{(\alpha,\beta)}(E', \Pi) < \varphi_j^{(\alpha,\beta)}(E, \Pi)$,
with the additional requirement that \( \beta > 0 \).
\end{itemize}
\end{proposition}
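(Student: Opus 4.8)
The plan is to exhibit an explicit edge extension witnessing both claims, using the incomplete bridge node as the seed. Let $i$ be an incomplete bridge node and let $S_\ell \neq S^i$ be a union with $\emptyset \subsetneq N(i)\cap S_\ell \subsetneq S_\ell$ and $N(i)\cap(N\setminus S_\ell)\neq\emptyset$. Pick a node $m \in S_\ell \setminus N(i)$ (which exists by incompleteness) and set $E' = E \cup \{\{i,m\}\}$. I would first verify the \emph{increase} claim: adding the edge $\{i,m\}$ raises $d_i(S^{-i})$ by exactly one (since $m \in S_\ell \subseteq S^{-i}$ and $m$ was not previously a neighbor of $i$), so the $\alpha$-term of $\varphi_i^{(\alpha,\beta)}$ grows by $\alpha$. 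The $\beta$-term of node $i$ can only change through the summand indexed by the new neighbor $m$, which contributes $\beta\, d_m(S^{-i})/d_m(S^i)$ with the ratios computed in $E'$; but one must check this is well defined, i.e. $d_m(S^i) = d_m(S^\ell) > 0$ — this may fail, which is precisely why the statement attaches the requirement $\alpha>0$ to the increase: if $\alpha>0$, the $+\alpha$ gain dominates and strict increase holds regardless of whether the $\beta$-term moves up, down, or is undefined; one handles the undefined case by a minor perturbation (choosing $m$ with an internal neighbor, or noting that the $\beta$-summand is nonnegative whenever defined). So $\varphi_i^{(\alpha,\beta)}(E',\Pi) \geq \varphi_i^{(\alpha,\beta)}(E,\Pi) + \alpha > \varphi_i^{(\alpha,\beta)}(E,\Pi)$.

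Next I would establish the \emph{decrease} claim, which is the more delicate half and the real content of the proposition. The intuition from Figure~\ref{fig:incomplete_bridge_extension} is that a node $j$ in $S_\ell$ that was previously a ``private channel'' from $i$'s union into $S_\ell$ loses relative weight once $i$ gains an alternative channel $m$. Concretely, take $j \in N(i)\cap S_\ell$ (nonempty by incompleteness). When the edge $\{i,m\}$ is added, for any node $k$ that is a neighbor of both $j$ (inside $S^j = S_\ell$) — wait, more carefully: the candidate for a strictly decreasing index is a neighbor of $m$, or $j$ itself, whose $\beta$-summands involve ratios $d_\cdot(S^{-\cdot})/d_\cdot(S^\cdot)$ where the \emph{denominator} increases. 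Adding $\{i,m\}$ increases $d_m(S^{-m})$ by one if $i \notin S_\ell$ (which holds since $S^i \neq S_\ell$). I would instead look for a node whose \emph{internal} degree toward the relevant union goes up: but $\{i,m\}$ is an inter-union edge for $m$, so it raises $m$'s external degree, not internal. The cleaner route is to pick the decreasing node among those whose $\beta$-weight comes via $m$: choose a bridge node $p \in S_\ell$ with $m \in N(p)$... Alternatively — and this is the approach I would actually commit to — enrich the extension: add \emph{two} edges if necessary, $\{i,m\}$ together with an internal edge inside $S_\ell$ at an appropriate vertex, so that some node's internal degree $d_\cdot(S^\cdot)$ strictly increases while its external degree is unchanged, forcing a strict drop in a $\beta$-summand of one of its external neighbors; since $\varphi^{(\alpha,\beta)}$ only sums nonnegative terms, a strict drop in one summand with no compensating increase yields $\varphi_j^{(\alpha,\beta)}(E',\Pi) < \varphi_j^{(\alpha,\beta)}(E,\Pi)$, and this requires $\beta>0$.

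The key structural fact I would isolate and prove as a sub-claim is: \emph{there is an edge, or a short set of edges, addable to $E$ such that for some node the quantity $d_j(S^{-j})/d_j(S^j)$ appearing inside $\varphi_i^{(\alpha,\beta)}$ for an external neighbor $i$ strictly decreases, while no other summand in $\varphi_i^{(\alpha,\beta)}$ increases}. Incompleteness of the bridge is exactly what guarantees room to do this: there is a vertex in $S_\ell$ not yet attached to $i$, so attaching it (or attaching it internally) strictly grows a denominator $d_\cdot(S_\ell)$ for a node in $S_\ell$ that is an external neighbor of someone, without touching that someone's $\alpha$-term or other $\beta$-terms. I would then verify the monotonicity-breaking node $j$ is genuinely a bridge (so its index is not identically zero by definition) and assemble the two witnesses — possibly the same $E'$ works for both bullets, possibly two different extensions are needed, and the statement only asserts existence of some $E' \supsetneq E$, so I would present one extension and check both consequences on it, falling back to separate extensions only if a single one cannot simultaneously deliver both.

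The main obstacle I anticipate is bookkeeping the $\beta$-term of the \emph{increasing} node $i$ and the ratios of all of $m$'s and $j$'s neighbors simultaneously: adding $\{i,m\}$ perturbs $d_m(S^{-m})$, hence every summand of $\varphi_\cdot^{(\alpha,\beta)}$ indexed by $m$ across all of $m$'s external neighbors, and one must ensure that the chosen decreasing node's \emph{net} change is negative rather than merely having one negative summand offset by a positive one elsewhere. Choosing the extension edge to be incident to a vertex of $S_\ell$ whose \emph{only} relevant coupling is through the denominator — i.e.\ isolating the effect — is the crux, and incompleteness of the bridge is the hypothesis that makes such a choice available.
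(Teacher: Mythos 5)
Your first half (the strict increase for node $i$) is essentially the paper's argument: take the incomplete bridge $i$, pick $m \in S_\ell \setminus N(i)$, and set $E' = E \cup \{\{i,m\}\}$; the $\alpha$-term rises by $\alpha$ and the new $\beta$-summand is nonnegative. Note that your worry about well-definedness is unfounded: once $\{i,m\}$ is added, $i \in S^i$ is itself a neighbor of $m$, so $d_m(S^i\,|\,E') \geq 1$ automatically and the new summand $d_m(S^{-i})/(d_m(S^i)+1)$ is always defined; no perturbation or special choice of $m$ is needed.

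The decrease half, however, contains a genuine gap. You circle the correct witness, $j \in N(i) \cap S_\ell$, but never identify the mechanism, and the route you ``actually commit to'' is based on a misreading of the index. In $\varphi_j^{(\alpha,\beta)}$ the summand attached to an external neighbor $h$ of $j$ is $d_h(S^{-j})/d_h(S^{j})$: the \emph{denominator} is $h$'s degree toward $j$'s union $S_\ell$, not $h$'s internal degree within its own union. Consequently your proposed enrichment---adding an internal edge inside $S_\ell$ so that ``some node's internal degree $d_\cdot(S^\cdot)$ strictly increases''---does not touch any denominator; for a node $h \in S_\ell$ it increases the numerator $d_h(S^{-p})$ seen by external neighbors $p$, pushing their summands \emph{up}, not down. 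The paper's point is that no second edge is needed: the single edge $\{i,m\}$ already increases $d_i(S_\ell)$ by one, which is precisely the denominator of the summand indexed by $h=i$ inside $\varphi_j^{(\alpha,\beta)}$ for every $j \in N(i,S_\ell)$, while the numerator $d_i(N\setminus S_\ell)$ is unchanged and strictly positive by the incompleteness hypothesis ($N(i)\cap(N\setminus S_\ell)\neq\emptyset$). Since $j$'s external degree and all its other summands are untouched by $\{i,m\}$, one gets $b_j(E') < b_j(E)$ and hence $\varphi_j^{(\alpha,\beta)}(E',\Pi) < \varphi_j^{(\alpha,\beta)}(E,\Pi)$ whenever $\beta>0$, with the same $E'$ serving both bullets. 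As written, your proposal leaves this half as an unproven ``sub-claim'' and the fallback construction you commit to would fail, so the proof is incomplete.
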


Constructing rankings of firms in this context helps to evaluate the market concentration of technological markets. Firms' rankings highlight which firms exert the greatest power—either through direct licensing or via indirect control mechanisms such as sub-licensing—and thus help to identify the key actors in complex technological landscapes. Firms in \( N \) are ranked according to an ordering induced by the power index, which reflects their relative power within  technological networks. Given a network configuration \( (E, \Pi) \in \mathcal{G}(N) \) and $(\alpha,\beta)\in\mathbb{R}_{\geq0}^2$, the ranking order \( \mathcal{R}^0 \) on $N$ is defined as follows: \( (i,j)\in \mathcal{R}^0\), meaning that firm \( i \) is at least as important than  firm \( j \), if and only if
$\varphi_i^{(\alpha,\beta)}(E, \Pi) \geq \varphi_j^{(\alpha,\beta)}(E, \Pi)$; moreover we say that $i$ precedes $j$ if $(i,j)\in  \mathcal{R}^0$ and $(j,i)\notin  \mathcal{R}^0$. Firms are thus ordered in decreasing order of power according to \( \mathcal{R}^0 \). Ranks are assigned as integers.
%using a dense ranking method. %if two firms \( i \) and \( j \) have the same power index, they are assigned the same rank, and the next firm receives the immediate subsequent rank.
We expect that, if a node \( i \) exhibits both stronger direct licensing strength and greater sublicensing potential than node \( j \), then \( i \) will have a higher power index—and thus a higher rank—than \( j \), independently of the specific values of the parameters \( \alpha \) and \( \beta \). More intriguing is the scenario in which one node has stronger direct licensing strength, while the other has greater sublicensing potential. In such cases, the comparison between the two nodes—in terms of both power index and resulting rank—depends on the values of \( \alpha \) and \( \beta \), and more specifically on their ratio \( \rho = \frac{\beta}{\alpha} \).
The following proposition clarifies this insight.

\begin{proposition} \label{thm change}
Let $(E,\Pi)\in\mathcal{G}(N)$ and suppose that there are \( i, j \in N \) such that

\begin{equation} \label{assumption change}
\Big( d_i(S^{-i}) -d_j(S^{-j}) \Big)
\left(\sum_{h \in N(i, S^{-i})} \frac{d_h(S^{-i})}{d_h(S^{i})}-  
\sum_{h \in N(j, S^{-j})} \frac{d_h(S^{-j})}{d_h(S^{j})}
\right)<0.
\end{equation}
Let \( \rho = \frac{\beta}{\alpha} \), with $\alpha>0$.  Then there exists a threshold \( \rho^{*} = \rho^{*}(i, j) > 0 \) such that one of the two alternative scenarios holds:
\begin{itemize}
    \item[-] firm \( i \) precedes \( j \) in \( \mathcal{R}^{0} \) whenever \( \rho > \rho^{*} \), and firm \( j \) precedes firm \( i \) when \( \rho < \rho^{*} \);
    \item[-] firm \( i \) precedes firm \( j \) in \( \mathcal{R}^{0} \) whenever \( \rho < \rho^{*} \), and firm \( j \) precedes firm \( i \) when \( \rho > \rho^{*} \).
\end{itemize}
\end{proposition}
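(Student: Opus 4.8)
The plan is to reduce the statement to an elementary fact about the sign of an affine function of the single variable $\rho$. First I would introduce the shorthand $A_k := d_k(S^{-k})$ for the direct licensing strength of node $k$ and $B_k := \sum_{h\in N(k,S^{-k})} d_h(S^{-k})/d_h(S^{k})$ for its sublicensing potential, so that $\varphi_k^{(\alpha,\beta)}(E,\Pi)=\alpha A_k+\beta B_k$ for every node $k$ (with the convention $A_k=B_k=0$ when $k$ is not a bridge). Observe that $A_k\ge 0$ trivially and $B_k\ge 0$, the latter because every summand has numerator $\ge 0$ and denominator $d_h(S^{k})\ge 1$ (indeed $k\in S^{k}$ and $k$ is a neighbour of $h$).

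Next I would rewrite the comparison between $i$ and $j$. Since $\alpha>0$,
\[
\varphi_i^{(\alpha,\beta)}(E,\Pi)-\varphi_j^{(\alpha,\beta)}(E,\Pi)=\alpha\big[(A_i-A_j)+\rho\,(B_i-B_j)\big]=:\alpha\,g(\rho),
\]
where $g$ is affine in $\rho$ with slope $B_i-B_j$ and value $A_i-A_j$ at $\rho=0$. Hypothesis \eqref{assumption change} is exactly $(A_i-A_j)(B_i-B_j)<0$, which forces both factors to be nonzero and of opposite sign (in particular it also forces $i$ and $j$ to be bridge nodes, since otherwise one factor would be a product of two nonnegative numbers and the product could not be negative). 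Hence $g$ is strictly monotone and has the unique root
\[
\rho^{*}=\rho^{*}(i,j):=-\frac{A_i-A_j}{B_i-B_j}=\frac{|A_i-A_j|}{|B_i-B_j|}>0,
\]
which I would take as the claimed threshold.

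Finally I would read off the two scenarios from the sign of the slope $B_i-B_j$. Recalling that ``$i$ precedes $j$ in $\mathcal{R}^0$'' means precisely $\varphi_i^{(\alpha,\beta)}(E,\Pi)>\varphi_j^{(\alpha,\beta)}(E,\Pi)$, i.e.\ $g(\rho)>0$: if $B_i-B_j<0$ (equivalently $A_i-A_j>0$), then $g$ is strictly decreasing with $g(0)>0$, so $g(\rho)>0$ for $\rho<\rho^{*}$ and $g(\rho)<0$ for $\rho>\rho^{*}$, giving the second bulleted scenario; if instead $B_i-B_j>0$ (equivalently $A_i-A_j<0$), then $g$ is strictly increasing with $g(0)<0$, giving the first scenario. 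Since $\mathcal{R}^0$ compares the two indices through the sign of their difference and the factor $\alpha>0$ does not affect that sign, this closes the argument.

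There is no substantial obstacle: the whole proof is the affine-function remark above. The only two points deserving a line of care are (i) verifying that \eqref{assumption change} genuinely excludes $A_i=A_j$ and $B_i=B_j$, so that $g$ is non-constant and $\rho^{*}$ is well defined and strictly positive, and (ii) matching the two scenarios of the statement to the two sign cases of $B_i-B_j$ — a bookkeeping step rather than a mathematical one. One could additionally note that at $\rho=\rho^{*}$ the two firms tie and neither precedes the other, but the proposition asserts strict precedence only away from $\rho^{*}$, so nothing further is required.
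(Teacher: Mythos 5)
Your proposal is correct and follows essentially the same route as the paper's own proof: both reduce the comparison to the sign of the affine function $(a_i-a_j)+\rho\,(b_i-b_j)$, define $\rho^{*}=-\frac{a_i-a_j}{b_i-b_j}$, and split into the two cases according to the sign of $b_i-b_j$. Your additional remarks (that hypothesis~\eqref{assumption change} forces both differences to be nonzero and of opposite sign, hence $\rho^{*}>0$, and that $i,j$ must be bridge nodes) only make explicit what the paper leaves implicit, so nothing further is needed.
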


Proposition~\ref{thm change} shows that, for any pair of firms \( i \) and \( j \), there may exist a critical value \( \rho^*(i, j) \) at which the two firms exchange their ranks. As a result, multiple rank reversals may occur across the rankings as \( \rho \) varies.  This phenomenon is illustrated in Table~\ref{ranking esempio1 r0}, which reports the firm rankings from Example~\ref{ex:threeU} for a range of values of \( \rho \) between 0 and 1. Condition \( \rho \leq 1 \) corresponds to the situation where \( \alpha>0 \) is greater than \( \beta \geq 0 \), implying that the market value of the licensed technology is higher than the sublicensing usage of the same technology.

\begin{table}[H]
\centering
\resizebox{\textwidth}{!}{%
\begin{tabular}{lccccccccccc}
\toprule
Rank & $\rho = 0$ & $0 < \rho < \frac{3}{19}$ & $\rho = \frac{3}{19} \approx 0.16$ & $\frac{3}{19} < \rho < \frac{3}{7}$ & $\rho = \frac{3}{7} \approx 0.43$ & $\frac{3}{7} < \rho < \frac{6}{13}$ & $\rho = \frac{6}{13} \approx 0.46$ & $\frac{6}{13} < \rho < \frac{3}{4}$ & $\rho = \frac{3}{4} = 0.75$ & $\frac{3}{4} < \rho < 1$ & $\rho = 1$ \\
\midrule
Rank 1 & 1, 4 & 1 & 1 & 1 & 1 & 1 & 1 & 1 & 1 & 1 & 1 \\
Rank 2 & 6 & 4 & 4 & 4 & 4 & 4, 7 & 7 & 7 & 7 & 7 & 7 \\
Rank 3 & 3, 5, 7, 10 & 6, 7 & 7 & 7 & 7 & 6 & 4 & 4 & 4 & 4 & 4 \\
Rank 4 & 2, 8, 9 & 10 & 6 & 6 & 6 & 10 & 6 & 6 & 6 & 6 & 6 \\
Rank 5 &  & 3 & 10 & 10 & 10 & 3 & 10 & 10 & 10 & 8, 10 & 8, 10 \\
Rank 6 &  & 5 & 3 & 3 & 3 & 8 & 3 & 3, 8 & 8 & 3 & 3 \\
Rank 7 &  & 8 & 5 & 5, 8 & 8 & 5 & 8 & 5 & 3 & 5 & 5 \\
Rank 8 &  & 9 & 8 & 9 & 5 & 9 & 5 & 9 & 5 & 9 & 9 \\
Rank 9 &  & 2 & 9 & 2 & 9 & 2 & 9 & 2 & 9 & 2 & 2 \\
Rank 10 &  &  & 2 &  & 2 &  & 2 &  & 2 &  &  \\
\bottomrule
\end{tabular}
}
\caption{Node rankings across multiple intervals of the value $\rho = \beta/\alpha$ subject to $ \rho \leq 1$.}
\label{ranking esempio1 r0} 
\end{table}
Rankings evolve as the parameter \( \rho = \frac{\beta}{\alpha} \) increases, reflecting a shift in the economic balance between sub-licensing and direct licensing. A higher value of \( \rho \) means that sub-licenses become more valuable relative to direct licenses, increasing the weight of indirect technological influence in the network and thus altering the ranking of firms. Specifically:
Firm~1 consistently holds the top rank for all values of \( \rho \), highlighting its dominant position regardless of the relative importance of sub-licensing. Firm~4 is ranked second when \( 0 \leq \rho \leq \frac{6}{13} \), but falls to third place once \( \rho > \frac{6}{13} \), reflecting its reliance on direct licensing strength.
Firm~7, despite a lower degree, overtakes firm~4 when \( \rho > \frac{6}{13} \), showing that higher sub-licensing value enhances indirect power and improves its position in the ranking.

We now examine whether firms can be prevented from altering their market-ranks under specific topologies of technological networks. To this end, we identify sector-specific configurations in which firms interact exclusively within a single technological sector. Under such conditions, cross-sector technological spillovers are structurally constrained, as formalized below.
\begin{definition}
A graph with \emph{a priori} unions \( (E, \Pi) \) is said to satisfy the \emph{restricted spillover property} if, for each firm \( i \in N \), there exists a unique index \( \psi(i) \in \{1, \dots, r\} \) such that \( N(i) \subseteq S_{\psi(i)} \), with \( S_{\psi(i)} \neq S^i \).
\end{definition}
When this structural assumption holds, each firm may interact with firms outside its own sector, but only within a single, distinct technological union. As a result, cross-sector interactions are strictly limited, and firms are unable to share technologies from multiple technological sectors. Consequently, market power is more limited. In particular, rankings remain fixed for all values of \( \alpha \) and \( \beta \), as stated in the following result.
\begin{proposition}
\label{prop29}
Let \( (E, \Pi) \) be a graph with a priori unions that satisfies the restricted spillover property. Then, the rankings induced by \( \mathcal{R}^{0} \) remain invariant under any variation in the market parameters \( \alpha \) and \( \beta \), with $\alpha\neq 0$.
\end{proposition}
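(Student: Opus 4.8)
The plan is to show that, under the restricted spillover property, the $(\alpha,\beta)$-power index collapses to a scalar multiple of the ordinary degree, so that $\mathcal{R}^{0}$ reduces to the ranking by degree, which is a purely combinatorial object that does not depend on $(\alpha,\beta)$. The first and crucial step is a structural observation: if $(E,\Pi)$ satisfies the restricted spillover property, then for every node $i$ and every neighbour $j\in N(i)$ one has $N(j)\subseteq S^{i}$. Indeed, applying the property to $j$ yields $N(j)\subseteq S_{\psi(j)}$ with $S_{\psi(j)}\neq S^{j}$; since the graph is undirected, $i\in N(j)$, hence $i\in S_{\psi(j)}$; as $i\in S^{i}$ and the blocks of $\Pi$ are pairwise disjoint, this forces $S_{\psi(j)}=S^{i}$, i.e. $N(j)\subseteq S^{i}$.

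Next I would use this to simplify the index. From the observation above, for every $j\in N(i,S^{-i})$ we have $N(j)\cap S^{-i}=\emptyset$, that is $d_{j}(S^{-i})=0$, so the entire $\beta$-summand in the definition of $\varphi_{i}^{(\alpha,\beta)}$ vanishes. Moreover $N(i)\subseteq S_{\psi(i)}\subseteq S^{-i}$ because $S_{\psi(i)}\neq S^{i}$, hence $d_{i}(S^{-i})=d_{i}$. Combining these, and checking the degenerate case separately (an isolated node is not a bridge, so $\varphi_{i}^{(\alpha,\beta)}=0=\alpha\,d_{i}$ trivially), we obtain $\varphi_{i}^{(\alpha,\beta)}(E,\Pi)=\alpha\,d_{i}$ for every $i\in N$ and every $(\alpha,\beta)\in\mathbb{R}_{\geq 0}^{2}$.

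Finally, since $\alpha\neq 0$ forces $\alpha>0$, for any $i,j\in N$ we have $\varphi_{i}^{(\alpha,\beta)}(E,\Pi)\geq\varphi_{j}^{(\alpha,\beta)}(E,\Pi)$ if and only if $d_{i}\geq d_{j}$. Thus $(i,j)\in\mathcal{R}^{0}$ is equivalent to $d_{i}\geq d_{j}$, a condition free of $\alpha$ and $\beta$, so the induced ranking is invariant under any variation of the market parameters. The argument is short; the only delicate points are the disjointness/uniqueness bookkeeping used to pin down $S_{\psi(j)}=S^{i}$ in the first step and the careful treatment of the non-bridge convention when establishing $\varphi_{i}^{(\alpha,\beta)}=\alpha\,d_{i}$, so I do not anticipate a genuine obstacle beyond getting these details right.
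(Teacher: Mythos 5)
Your proposal is correct and follows essentially the same route as the paper's proof: you apply the restricted spillover property to each external neighbour $j$ of $i$ to deduce $S_{\psi(j)}=S^{i}$, hence $d_{j}(S^{-i})=0$, so the $\beta$-term vanishes and the index reduces to $\alpha\,d_i(S^{-i})$, yielding a parameter-free ranking. Your additional observation that $d_i(S^{-i})=d_i$ (so the ranking is literally the degree ranking) and your handling of the non-bridge convention are harmless refinements of the same argument.
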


An illustrative example of a network satisfying the restricted spillover property is provided in Figure~\ref{net13}.
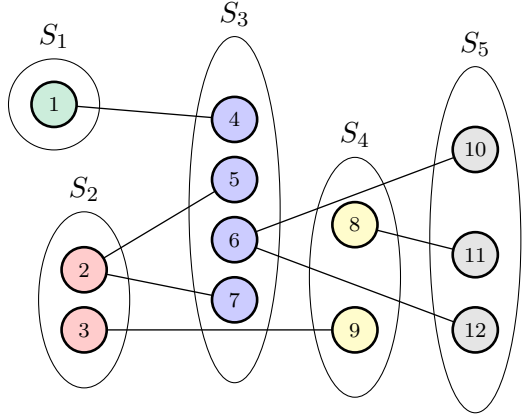
\begin{figure}[htbp]
\centering
\begin{tikzpicture}[scale=0.4]
% Vertices
\Vertex[size=0.6,color=Green,opacity=.2,label={1}]{1}
\Vertex[x=1,y=-5.5,size=0.6,color=red,opacity=.2,label={2}]{2}
\Vertex[x=1,y=-7.5,size=0.6,color=red,opacity=.2,label={3}]{3}
\Vertex[x=6,y=-0.5,size=0.6,color=blue,opacity=.2,label={4}]{4}
\Vertex[x=6,y=-2.5,size=0.6,color=blue,opacity=.2,label={5}]{5}
\Vertex[x=6,y=-4.5,size=0.6,color=blue,opacity=.2,label={6}]{6}
\Vertex[x=6,y=-6.5,size=0.6,color=blue,opacity=.2,label={7}]{7}
\Vertex[x=10,y=-4.0,size=0.6,color=yellow,opacity=.2,label={8}]{8}
\Vertex[x=10,y=-7.5,size=0.6,color=yellow,opacity=.2,label={9}]{9}
\Vertex[x=14,y=-1.5,size=0.6,color=gray,opacity=.2,label={10}]{10}
\Vertex[x=14,y=-5.0,size=0.6,color=gray,opacity=.2,label={11}]{11}
\Vertex[x=14,y=-7.5,size=0.6,color=gray,opacity=.2,label={12}]{12}
% Edges
\Edge[lw=0.5,color=black,bend=0](1)(4)
\Edge[lw=0.5,color=black,bend=0](2)(5)
\Edge[lw=0.5,color=black,bend=0](2)(7)
\Edge[lw=0.5,color=black,bend=0](3)(9)
\Edge[lw=0.5,color=black,bend=0](6)(10)
\Edge[lw=0.5,color=black,bend=0](6)(12)
\Edge[lw=0.5,color=black,bend=0](8)(11)

\node[draw, ellipse, fit=(1) , label=above:$S_1$] {};
\node[draw, ellipse, fit=(2) (3) , label=above:$S_2$] {};
\node[draw, ellipse, fit=(4) (5) (6) (7), label=above:$S_3$] {};
\node[draw, ellipse, fit=(8) (9), label=above:$S_4$] {};
\node[draw, ellipse, fit=(10) (11) (12), label=above:$S_5$] {};
\end{tikzpicture}
\caption{A network with \emph{a priori} unions and restricted spillovers.}
\label{net13}
\end{figure}

\section{On the cooperative representation of indices}\label{coope}
In Section~\ref{sect1}, the power index \( \varphi^{(\alpha,\beta)} \) assigned to a node (firm) quantifies not only its licensing power but also its capacity to influence the network through licenses and sub-licenses. In this section, we explore whether this index remains a reliable measure in a cooperative setting, where firms engage in joint licensing activities. In such contexts, the index should capture the stable distribution of power arising from technology exchange within the network.

Before proceeding with this analysis, we introduce some key definitions from TU-games. 
When firms collaborate, these indices reflect a stable distribution of power, resulting from the sharing of technologies across the network.
Recall that a \emph{TU-game} is defined as a pair \( (N, \nu) \), where \( N \in \mathbb{N} \) is the set of players, with \( |N| = n \), and \( \nu : 2^N \to \mathbb{R} \) is the \emph{characteristic function}. This function assigns a real worth to each coalition and satisfies the standard normalisation \( \nu(\emptyset) = 0 \). The set \( 2^N \) denotes the power set of \( N \), and any subset \( T \subseteq N \) is referred to as a \emph{coalition}, with \( \nu(T) \) representing its \emph{worth}.

We associate a TU-game to each $(\alpha,\beta)\in\mathbb{R}_{\geq0}^2$ and each graph 
 \( (E, \Pi) \in \mathcal{G}(N) \) , where \( \Pi \in \mathbf{\Pi}_r(N) \) is a partition of firms into \( r \) distinct technological sectors. The characteristic function of the game captures both the coalition's direct strength to provide technology and its potential to facilitate its diffusion through the network. 
The value of each coalition should reflect both the individual strength of its nodes and their capacity to facilitate diffusion.

We begin by specifying the worth of a coalition consisting entirely of firms from the same technological sector. Let \( T \subseteq S_\ell \), where \( S_\ell \in \Pi \) and \( \ell \in \{1, \dots, r\} \). The worth \( \nu(T) \) is then defined as a measure of the coalition’s strength to reach firms outside its own sector—either directly or through sublicensing connections.
The one-step worth of a coalition \(T\), denoted by \(\nu_1(T)\), quantifies the number of firms (nodes) that can receive the technology via direct licensing from members of \(T\), excluding firms already within the same union. Formally:
\[
\nu_1(T) := \alpha \sum_{i \in T} d_i(N\setminus S_\ell),
\]
where %\(\alpha \in \mathbb{R}\), 
$\alpha\geq0$  represents the average value of a direct license in that technological market.
The two-step worth, \(\nu_2(T)\), accounts for the potential for further diffusion through sublicensing. It measures the number of firms that can be reached via neighbors of \(T\) located outside \(S_\ell\), again excluding those already within the union. Formally:
\[
\nu_2(T) := \beta \sum_{h \in N(T) \cap (N \setminus S_\ell)} d_h(N \setminus S_\ell),
\]
where \(N(T)\) is the set of neighbors of \(T\). The parameter %\(\beta \in \mathbb{R}\), 
$\beta\ge 0$  represents the average value of a sublicense - i.e., a secondary transfer of technology mediated by firms that initially received it from the coalition \(T\).
When a coalition \(T\) includes firms from multiple technological sectors, its worth is computed separately for each sector.
Indeed, this is coherent with our initial assumption that sharing proprietary technologies across different markets is not a loss. No market externalities are assumed in the technological network.
 Consequently, the total worth of 
\(T\) is the sum of its contributions within each union.
We define the characteristic function \(\nu: 2^N \to \mathbb{R}\) as follows:

\begin{equation}\label{eq:charactfunc}
	\begin{aligned}
	   &\nu(\emptyset) := 0,\\
	   &\nu(T) := \sum_{\ell=1}^{r} \nu_1(T \cap S_\ell)+\nu_2(T \cap S_\ell)\footnote{Note that if $T\cap S_\ell=\emptyset$ then $\nu_1(\emptyset)=\nu_2(\emptyset)=0$.},
	\end{aligned} 
\end{equation} 
where \( r \) is the number of distinct technological markets, and \( T \cap S_\ell \) represents the subset of firms in \( T \) operating in the technological market \( S_\ell \). 
Let \( \mathcal{U}(N) \) denote the set of all transferable utility (TU) games defined on set \( N \).
The central definition follows:
\begin{definition}
For any graph with a priori unions, $(E,\Pi)\in\mathcal{G}(N)$, and average market values , $(\alpha,\beta)\in\mathbb{R}_{\geq0}^2$, let \( (N, \nu; (E, \Pi), \alpha, \beta)\) be the associated TU-game with characteristic function $\nu$ defined in \eqref{eq:charactfunc}.\footnote{For brevity, and when the context is clear, we will sometimes denote the game simply as 
$(N,\nu)$, leaving implicit the network structure, the partition, and other coefficients.}
\end{definition}

We study some basic properties of the proposed TU game as follows.
\begin{proposition}\label{cic}
Let \( (N, \nu; (E, \Pi), \alpha, \beta) \) be the proposed TU-game. All the following sentences are true:
\begin{enumerate}
\item
The game is subadditive.
\item
If \( \beta = 0 \) then the game is additive.
\item
Suppose that \( \Pi \) consists entirely of singleton unions then the game is additive.
\item
Suppose $\beta >0$ and that there exist two bridge nodes \( i, j \in S_\ell \) within the same union \( S_\ell \) such that
$
N(i, N \setminus S_\ell) \cap N(j, N \setminus S_\ell) \neq \emptyset,$
then the game is non-additive.
\end{enumerate}
\end{proposition}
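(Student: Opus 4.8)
The plan is to exploit the block decomposition
\[
\nu(T)=\sum_{\ell=1}^{r}\bigl(\nu_1(T\cap S_\ell)+\nu_2(T\cap S_\ell)\bigr).
\]
For disjoint $S,T\subseteq N$ one has $(S\cup T)\cap S_\ell=(S\cap S_\ell)\sqcup(T\cap S_\ell)$, so every assertion reduces to understanding how $\nu_1$ and $\nu_2$ behave under disjoint unions of subsets of a single union $S_\ell$. The preliminary remark that drives everything is that $A\mapsto\nu_1(A)=\alpha\dsum_{i\in A}d_i(N\setminus S_\ell)$ is a sum of node-wise terms, hence exactly additive; consequently every sub-/super-additivity effect is carried by $\nu_2$ alone, and $\nu_2$ in turn is governed by the set identity $N(A\cup B)=N(A)\cup N(B)$.

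For part (1), since the index set of $\nu_2\bigl((A\cup B)\cap S_\ell\bigr)$ is the union of those of $\nu_2(A\cap S_\ell)$ and $\nu_2(B\cap S_\ell)$ and every weight $d_h(N\setminus S_\ell)$ is nonnegative, summing over a union of index sets is at most the sum of the two individual sums; hence $\nu_2(A\cup B)\le\nu_2(A)+\nu_2(B)$ block-wise, and adding the additivity of $\nu_1$ and summing over $\ell$ gives $\nu(S\cup T)\le\nu(S)+\nu(T)$. Part (2) is immediate: $\beta=0$ kills $\nu_2$, leaving $\nu(T)=\alpha\dsum_{i\in T}d_i(S^{-i})$, which is manifestly additive. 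For part (3), if every $S_\ell$ is a singleton then for disjoint $A,B$ at most one of $A\cap S_\ell$, $B\cap S_\ell$ is nonempty for each $\ell$, so the two index sets appearing in the $\nu_2$-step are never simultaneously nonempty; the inequality in (1) is then an equality and $\nu$ is additive. (Equivalently one just reads $\nu(T)=\dsum_{i\in T}\bigl(\alpha d_i+\beta\dsum_{h\in N(i)}d_h(N\setminus\{i\})\bigr)$ off the definition.)

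For part (4), I would take the two distinct bridge nodes $i,j\in S_\ell$ supplied by the hypothesis and test additivity on the disjoint pair $\{i\},\{j\}$. Only the $S_\ell$-block is active and its $\nu_1$ contributions cancel, so
\[
\nu(\{i,j\})-\nu(\{i\})-\nu(\{j\})=\nu_2(\{i,j\})-\nu_2(\{i\})-\nu_2(\{j\});
\]
since the index set of $\nu_2(\{i,j\})$ is $N(i,N\setminus S_\ell)\cup N(j,N\setminus S_\ell)$, inclusion--exclusion rewrites the right-hand side as
\[
-\,\beta\dsum_{h\in N(i,N\setminus S_\ell)\cap N(j,N\setminus S_\ell)}d_h(N\setminus S_\ell).
\]
This index set is nonempty by hypothesis; one then checks that the sum is strictly positive --- equivalently, that a shared external neighbour $h$ of $i$ and $j$ carries weight $d_h(N\setminus S_\ell)\ge 1$ --- so that $\nu(\{i,j\})<\nu(\{i\})+\nu(\{j\})$ and the game is not additive (consistently with the subadditivity from (1)).

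Parts (1)--(3) are essentially bookkeeping on the block decomposition. The one delicate point --- and the step I expect to be the main obstacle --- is the strict inequality in part (4): certifying that the residual $\nu_2$-term is strictly positive rather than merely nonnegative, which is precisely where the structural hypothesis on the two bridge nodes and their common neighbour outside $S_\ell$ has to be invoked in full.
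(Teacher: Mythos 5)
Your parts (1)--(3) are correct and essentially coincide with the paper's own argument: decompose $\nu$ block-by-block over the unions, use the exact additivity of $\nu_1$, and apply inclusion--exclusion to $N(T\cup S)\cap(N\setminus S_\ell)$ in the $\nu_2$-term; the paper does the same computation, and singletons/$\beta=0$ are handled identically.

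The genuine gap is precisely the step you flagged in part (4) and then deferred with ``one then checks'': strict positivity of $\beta\sum_{h\in N(i,N\setminus S_\ell)\cap N(j,N\setminus S_\ell)}d_h(N\setminus S_\ell)$ does \emph{not} follow from the stated hypothesis. Nonemptiness of the index set only gives a common external neighbour $h$ of $i$ and $j$; the weight it carries is $d_h(N\setminus S_\ell)$, the number of $h$'s neighbours \emph{outside} $S_\ell$, and the edges $\{h,i\}$, $\{h,j\}$ do not contribute to it because $i,j\in S_\ell$. That weight can be zero. Concretely, take $N=\{i,j,h,k\}$, $\Pi=\{S_\ell,S_m\}$ with $S_\ell=\{i,j\}$, $S_m=\{h,k\}$, and $E=\big\{\{i,h\},\{j,h\}\big\}$: then $i,j$ are bridge nodes sharing the external neighbour $h$, yet $d_h(N\setminus S_\ell)=0$, so $\nu(\{i,j\})=2\alpha=\nu(\{i\})+\nu(\{j\})$, and a quick check of the other block shows this game is additive on all disjoint pairs even though $\beta>0$. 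So the missing step is not merely delicate --- it cannot be closed from the hypothesis as written; one needs in addition that some shared external neighbour $h$ satisfies $d_h(N\setminus S_\ell)\geq 1$ (e.g.\ $h$ has a neighbour in its own union or in a third union), after which your inclusion--exclusion identity immediately yields the strict inequality. It is worth noting that the paper's own proof of part (4) makes exactly the leap you were wary of, asserting that the overlap term is strictly negative solely because the intersection of neighbourhoods is nonempty; your instinct about where the difficulty sits was correct, but the proposal (like the published argument) does not actually establish it.
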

Under the hypothesis of point 4 of the previous proposition the following condition then holds:
\begin{equation}\label{stadd}
\nu(T \cup S) < \nu(T) + \nu(S) \quad \mbox{for some disjoint coalitions} \quad T, S \subseteq N.
\end{equation}
In such cases, one of the \textit{most desirable solution concepts}—the core— fails to provide any viable allocation outcome. We, then, adopt the Shapley value as a solution concept to provide foundations of stability to our power indices.
The Shapley value is a function \( \mathrm{Sh} : \mathcal{U}(N) \to \mathbb{R}^n \) that assigns to each TU-game \( (N, \nu) \in \mathcal{U}(N) \) a vector of payoffs. For each player \( i \in N \), it is given by:
\[
\mathrm{Sh}_i(N, \nu) = \sum_{T \subseteq N \setminus \{i\}} \frac{|T|!\, (n - |T| - 1)!}{n!} \Big( \nu(T \cup \{i\}) - \nu(T) \Big).
\]
It is well known that computing the Shapley value can be computationally demanding.
If a game is additive, the Shapley value of node $i$ coincides with the value of the characteristic function in the singleton $\{i\}$. Therefore, the computation of the Shapley value is interesting only in the case of non additive games (see Proposition \ref{cic}).  We will use the approach based on the Harsanyi dividends to calculate the Shapley value (see \citep{harsanyi1959bargaining, billot2005share}). 
The computation of the Shapley value can be greatly simplified if one can calculate Harsanyi dividends. This approach is particularly convenient for TU-games that have a relatively poor decomposition in the basis of unanimity games.

\paragraph{The  calculation of the Shapley value via Harsanyi dividends}
Let $T$ a nonempty subset of $N$, $(N,u_T)$ is a unanimity game if $u_T(S)=1$, if $T\subseteq S$, and $0$, otherwise. Every TU-game $(N, \nu) \in \mathcal{U}(N)$ admits a unique decomposition as a linear combination of unanimity games:
\[
\nu = \sum_{\emptyset \neq T \subseteq N} \Delta_{\nu}(T)\, u_T,
\]
where the coefficients $\Delta_{\nu}(T)$, known as \emph{Harsanyi dividends}, are given by
\[
\Delta_{\nu}(T) = \sum_{\emptyset \neq S \subseteq T} (-1)^{|T| - |S|} \nu(S),
\]
or equivalently, via the recursive formula: $\Delta_{\nu}(\emptyset) = 0$ and $\Delta_{\nu}(T) = \nu(T) - \sum_{S \subsetneq T} \Delta_{\nu}(S)$ (see~\citep{shapley1953b}).
This decomposition allows the Shapley value of $\nu$ to be written as a linear combination of the Shapley values of unanimity games. For each node $i \in N$ it holds that
\[
\mathrm{Sh}_i(\nu) = \sum_{T \ni i} \frac{\Delta_{\nu}(T)}{|T|},
\]
since $\mathrm{Sh}_i(u_T) = 1/|T|$ when $i \in T$ and $0$ otherwise.
The next result shows that for any game \( (N, \nu) \), all Harsanyi dividends associated with coalitions not entirely contained within a single technological sector vanish, greatly reducing the computational burden.
\begin{proposition}
\label{prop9}
Let \( (N, \nu; (E, \Pi), \alpha, \beta) \) be the proposed TU-game, and let \( T \subseteq N \) be a coalition not entirely contained in any union \( S_\ell \), \( \ell \in \{1, \ldots, r\} \). Then,
$
\Delta_{\nu}(T) = 0.
$
\end{proposition}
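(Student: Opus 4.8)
The plan is to exploit the fact that the characteristic function $\nu$ splits additively into one game per technological sector, and then to combine the linearity of the Harsanyi dividend operator with a null-player cancellation argument. Concretely, for each $\ell\in\{1,\dots,r\}$ define the auxiliary game $\nu^{(\ell)}\in\mathcal{U}(N)$ by
\[
\nu^{(\ell)}(T):=\nu_1(T\cap S_\ell)+\nu_2(T\cap S_\ell),\qquad T\subseteq N,
\]
so that by \eqref{eq:charactfunc} one has $\nu=\sum_{\ell=1}^r\nu^{(\ell)}$. Since, for every fixed coalition $T$, the map $\nu\mapsto\Delta_{\nu}(T)$ is linear (it is the fixed linear combination $\Delta_{\nu}(T)=\sum_{\emptyset\neq S\subseteq T}(-1)^{|T|-|S|}\nu(S)$ of the values $\nu(S)$, $S\subseteq T$), it suffices to prove that $\Delta_{\nu^{(\ell)}}(T)=0$ for every $\ell$ and every $T$ not contained in $S_\ell$; summing over $\ell$ then yields the claim whenever $T$ is not contained in any single union.

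The key structural remark is that $\nu^{(\ell)}(T)$ depends on $T$ only through the intersection $T\cap S_\ell$: both $\nu_1(T\cap S_\ell)$ and $\nu_2(T\cap S_\ell)$ are, by definition, functions of $T\cap S_\ell$ alone (the first is a sum over $i\in T\cap S_\ell$, the second a sum over $N(T\cap S_\ell)\cap(N\setminus S_\ell)$). Hence every player $k\in N\setminus S_\ell$ is a null player of the game $\nu^{(\ell)}$ in the strong sense that $\nu^{(\ell)}(S\cup\{k\})=\nu^{(\ell)}(S)$ for all $S\subseteq N$, because $(S\cup\{k\})\cap S_\ell=S\cap S_\ell$.

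Now fix $\ell$ and a coalition $T\not\subseteq S_\ell$, and pick $k\in T\setminus S_\ell$. Using $\Delta_{\nu^{(\ell)}}(T)=\sum_{S\subseteq T}(-1)^{|T|-|S|}\nu^{(\ell)}(S)$ (the term $S=\emptyset$ contributes $0$ since $\nu^{(\ell)}(\emptyset)=0$), split the sum according to whether $k\in S$ or not: for each $S'\subseteq T\setminus\{k\}$ the two subsets $S'$ and $S'\cup\{k\}$ occur with opposite signs, as $|S'\cup\{k\}|=|S'|+1$, while $\nu^{(\ell)}(S'\cup\{k\})=\nu^{(\ell)}(S')$ by the null-player property; therefore the contributions cancel in pairs and $\Delta_{\nu^{(\ell)}}(T)=0$. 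Finally, if $T$ is not entirely contained in any $S_\ell$, then, because $\Pi$ is a partition, $T$ meets at least two distinct unions, so $T\not\subseteq S_\ell$ for every $\ell$; applying the above to each $\ell$ and summing gives $\Delta_{\nu}(T)=\sum_{\ell=1}^r\Delta_{\nu^{(\ell)}}(T)=0$. There is no serious obstacle here beyond carrying out the null-player cancellation cleanly; the only point needing a moment's care is confirming that the summand $\nu_1(T\cap S_\ell)+\nu_2(T\cap S_\ell)$ genuinely ignores the part of $T$ outside $S_\ell$, which is immediate from the definitions of $\nu_1$ and $\nu_2$.
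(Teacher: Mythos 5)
Your proof is correct, but it takes a genuinely different route from the paper's. The paper argues by induction on $|T|$, using the recursive formula $\Delta_\nu(T)=\nu(T)-\sum_{R\subsetneq T}\Delta_\nu(R)$ together with the sector-wise additivity of $\nu$: the inductive hypothesis eliminates all proper subsets of $T$ that are not contained in a single union, and the surviving terms must then be regrouped through the identity $\nu(T\cap S_\ell)=\sum_{\emptyset\neq R\subseteq T\cap S_\ell}\Delta_\nu(R)$ so that $\Delta_\nu(T)=\nu(T)-\sum_{\ell}\nu(T\cap S_\ell)=0$. You instead decompose the game itself as $\nu=\sum_{\ell}\nu^{(\ell)}$ with $\nu^{(\ell)}(T)=\nu_1(T\cap S_\ell)+\nu_2(T\cap S_\ell)$, invoke linearity of the dividend operator in the game, and observe that every $k\notin S_\ell$ is a null player of $\nu^{(\ell)}$, so the M\"obius sum for $\Delta_{\nu^{(\ell)}}(T)$ cancels in the pairs $(S',S'\cup\{k\})$ as soon as $T\setminus S_\ell\neq\emptyset$; this is non-inductive and actually proves the stronger localization fact that the carrier of $\nu^{(\ell)}$ is $S_\ell$, from which the proposition is immediate. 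Your route also sidesteps the bookkeeping that makes the inductive step delicate: the subsets $R=T\cap S_\ell$ themselves must be kept among the surviving terms, and as written the paper's closing appeal to the inductive hypothesis for $T\cap S_\ell$ is not what finishes the argument, since coalitions inside a union can have nonzero dividends (e.g.\ $\Delta_\nu(\{1,2\})=-2\beta$ in Table~\ref{dividends}); the intended induction is repaired exactly by the identity above. What the paper's induction buys in exchange is that it works directly with $\nu$ and the recursive definition of the dividends, without introducing the auxiliary games $\nu^{(\ell)}$.
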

Hence, the only nontrivial Harsanyi dividends correspond to coalitions \( T \subseteq S_\ell \) for some \( \ell \). Exploiting the linearity of dividends, we observe:
$
\Delta_{\nu_1 + \nu_2}(T) = \Delta_{\nu_1}(T) + \Delta_{\nu_2}(T)$ for any $T \subseteq S_\ell.
$
Since \( (N, \nu_1) \) is additive, \( \Delta_{\nu_1}(T) = 0 \) for all \( |T| \geq 2 \), and for singleton coalitions:
$
\Delta_{\nu_1}(\{i\}) = \nu_1(\{i\}).$
Indeed, the challenge is whether non-zero Harsanyi dividends can be calculated in  a relatively simple way.
Unfortunately, due to their recursive and combinatorial nature, Harsanyi dividends grow more complex as the number of nodes with overlapping neighborhoods increases. 
However, in Table~\ref{dividends}, we report the nonzero Harsanyi dividends for the network with unions in Example~\ref{ex:threeU}. Notably, only 19 out of the 1023 possible coalitions have nonzero dividends.
\begin{table}[H]
\centering
\scalebox{1.0}{

\begin{tabular}{|c|r|c|r|c|r|}
\hline
$T$ & $\Delta_\nu(T)$ & $T$ & $\Delta_\nu(T)$ & $T$ & $\Delta_\nu(T)$ \\
\hline
$\{1\}$ & $4\alpha + 9\beta$ & $\{8\}$ & $\alpha + 3\beta$ & $\{4,6\}$ & $-2\beta$ \\
\hline
$\{2\}$ & $\alpha + 2\beta$ & $\{9\}$ & $\alpha + 2\beta$ & $\{5,6\}$ & $-2\beta$ \\
\hline
$\{3\}$ & $2\alpha + 3\beta$ & $\{10\}$ & $2\alpha + 3\beta$ &  $\{9,10\}$ & $-2\beta$ \\
\hline
$\{4\}$ & $4\alpha + 5\beta$ & $\{1,2\}$ & $-2\beta$ & $\{1,2,3\}$ & $2\beta$ \\
\hline
$\{5\}$ & $2\alpha + 2\beta$ & $\{1,3\}$ & $-2\beta$ & $\{4,5,6\}$ & $2\beta$  \\
\hline
$\{6\}$ & $3\alpha + 3\beta$ & $\{2,3\}$ & $-2\beta$ &   & \\
\hline
$\{7\}$ & $2\alpha + 8\beta$ & $\{4,5\}$ & $-2\beta$ & &\\
\hline
\end{tabular}
}
\caption{\scriptsize Nonzero Harsanyi dividends for the cooperative game associated with Example~\ref{ex:threeU}.}
\label{dividends}
\end{table}
Using the dividends listed in Table~\ref{dividends}, we can compute the Shapley value for each firm. For example, firm~1 appears in four nonzero coalitions.
By reading the values of
$\Delta_\nu(\{1\})$,  $\Delta_\nu(\{1,2\})$,  $\Delta_\nu(\{1,3\})$,  $\Delta_\nu(\{1,2,3\})$
we calculate the Shapley value of firm $1$ as follows:
\[
\Delta_\nu(\{1\}) + \frac{\Delta_\nu(\{1,2\})}{2} + \frac{\Delta_\nu(\{1,3\})}{2} + \frac{\Delta_\nu(\{1,2,3\})}{3} = 4\alpha + \frac{23}{3}\beta = \varphi_1^{(\alpha,\beta)}.
\]
For firm~7, only one nonzero dividend appears. Reading the value of
$\Delta_\nu(\{7\}) = 2\alpha + 8\beta$
we obtain that the Shapley value of node 7 is
$\varphi_7^{(\alpha,\beta)}.$
We leave to the reader the calculation of the solution for the remaining firms as an exercise.

Due to the recursive definition of Harsanyi dividends and the computational burden that arises with a high distribution of external edges, we choose to compute the Shapley value of the proposed TU game through its potential function, as detailed in the following subsection.

\paragraph{The calculation of the Shapley value via the potential of the Shapley value}
As shown in \citep{hart1988potential}, the Shapley value can be interpreted as a vector of marginal contributions derived from a potential function defined on the space of TU-games, as follows.
Let \( \Gamma = \{ \mathcal{U}(N) : N \in \mathbb{N} \} \) denote the class of all TU-games. A function \( \mathcal{P} : \Gamma \to \mathbb{R} \) is said to define the \emph{marginal contribution} of player \( i \in N \) in a game \( (N, \nu) \) as
\[
D_i \mathcal{P}(N, \nu) = \mathcal{P}(N, \nu) - \mathcal{P}(N \setminus \{i\}, \nu),
\]
where \( (N \setminus \{i\}, \nu) \) is the subgame obtained by restricting \( \nu \) to coalitions in \( 2^{N \setminus \{i\}} \).
A function \( \mathcal{P} : \Gamma \to \mathbb{R} \) is called a \emph{potential function} if it satisfies the normalization \( \mathcal{P}(\emptyset, \nu) = 0 \) and the identity:
\begin{equation} \label{sumdi}
\sum_{i \in N} D_i \mathcal{P}(N, \nu) = \nu(N),
\end{equation}
for all TU-games \( (N, \nu) \). This property implies that the sum of the marginal contributions of all players equals the total value of the grand coalition.
The following result characterizes the Shapley value via such a potential function:
\begin{theorem}[Theorem A, p.~129, \citep{hart1988potential}]
\label{mainre}
There exists a unique potential function \( \mathcal{P} \) such that, for every TU-game \( (N, \nu) \), the vector of marginal contributions \( \left( D_i \mathcal{P}(N, \nu) \right)_{i \in N} \) coincides with the Shapley value of the game. Moreover, this function is uniquely determined by equation~\eqref{sumdi}, applied to the game and all its subgames \( (T, \nu) \) for \( T \subseteq N \).
\end{theorem}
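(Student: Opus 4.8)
The plan is to prove the statement in two stages: first, to show that equation~\eqref{sumdi}, imposed on a game together with all of its subgames and the normalization $\mathcal{P}(\emptyset,\nu)=0$, determines one and only one function $\mathcal{P}:\Gamma\to\mathbb{R}$; second, to exhibit a closed-form candidate for $\mathcal{P}$ in terms of Harsanyi dividends and check that its vector of marginal contributions reproduces the Shapley value.

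For the first stage, I would rewrite \eqref{sumdi} for a fixed nonempty player set $N$ as $|N|\,\mathcal{P}(N,\nu)-\sum_{i\in N}\mathcal{P}(N\setminus\{i\},\nu)=\nu(N)$ and solve it for $\mathcal{P}(N,\nu)$, obtaining
\[
\mathcal{P}(N,\nu)=\frac{1}{|N|}\Bigl(\nu(N)+\sum_{i\in N}\mathcal{P}(N\setminus\{i\},\nu)\Bigr).
\]
Since the right-hand side involves $\mathcal{P}$ only at strictly smaller player sets, an induction on $|N|$ --- with base case $\mathcal{P}(\emptyset,\nu)=0$ --- proves simultaneously the existence and the uniqueness of $\mathcal{P}$. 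This is exactly the content of the ``uniquely determined by~\eqref{sumdi} applied to the game and all its subgames'' clause.

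For the second stage, I would propose the explicit formula
\[
\mathcal{P}(N,\nu)=\sum_{\emptyset\neq S\subseteq N}\frac{\Delta_{\nu}(S)}{|S|},
\]
and verify that it satisfies \eqref{sumdi}. Indeed $D_i\mathcal{P}(N,\nu)=\mathcal{P}(N,\nu)-\mathcal{P}(N\setminus\{i\},\nu)=\sum_{S\ni i}\Delta_{\nu}(S)/|S|$, and summing over $i\in N$ yields $\sum_{\emptyset\neq S\subseteq N}\Delta_{\nu}(S)=\nu(N)$, because $\nu=\sum_{\emptyset\neq S\subseteq N}\Delta_{\nu}(S)\,u_S$ and $u_S(N)=1$ for every nonempty $S\subseteq N$. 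By the uniqueness established in the first stage, this candidate is the potential. Finally, $D_i\mathcal{P}(N,\nu)=\sum_{S\ni i}\Delta_{\nu}(S)/|S|$ is precisely the dividend expression for $\mathrm{Sh}_i(N,\nu)$ recalled earlier, so $\left(D_i\mathcal{P}(N,\nu)\right)_{i\in N}$ coincides with the Shapley value.

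The argument is essentially bookkeeping, and there is no genuine obstacle --- consistent with the fact that the statement is quoted from~\citep{hart1988potential}. The only point requiring care is to keep the domain of $\mathcal{P}$ straight: it is a single function on the whole class $\Gamma$ of games over all finite player sets, and the recursion must be read as pinning it down one cardinality at a time. Reconstructing the proof in the dividend language already introduced in this section is convenient, since it feeds directly into the subsequent explicit computation of $\mathcal{P}$ (and hence of the Shapley value) for the games $(N,\nu;(E,\Pi),\alpha,\beta)$.
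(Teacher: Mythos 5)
Note first that the paper does not prove this statement at all: Theorem~\ref{mainre} is imported verbatim from \citep{hart1988potential} and used as a black box (the paper's own work starts afterwards, with Lemma~\ref{lemmae2} and Proposition~\ref{pot}). Your reconstruction is correct and is essentially the original Hart--Mas-Colell argument: rewriting \eqref{sumdi} as $\mathcal{P}(N,\nu)=\frac{1}{|N|}\bigl(\nu(N)+\sum_{i\in N}\mathcal{P}(N\setminus\{i\},\nu)\bigr)$ and inducting on $|N|$ gives existence and uniqueness, and the closed form $\mathcal{P}(N,\nu)=\sum_{\emptyset\neq S\subseteq N}\Delta_{\nu}(S)/|S|$ identifies $D_i\mathcal{P}$ with the dividend expression for $\mathrm{Sh}_i$. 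The only step you use tacitly and should state is that the Harsanyi dividends of the subgame $(N\setminus\{i\},\nu)$ coincide with $\Delta_{\nu}(S)$ for every $S\subseteq N\setminus\{i\}$; this is immediate from $\Delta_{\nu}(S)=\sum_{\emptyset\neq T\subseteq S}(-1)^{|S|-|T|}\nu(T)$, which depends only on the restriction of $\nu$ to subsets of $S$, but without it the identity $D_i\mathcal{P}(N,\nu)=\sum_{S\ni i}\Delta_{\nu}(S)/|S|$ is not literally justified. With that remark added, your proof is complete and, working in the dividend language the paper has already set up, it dovetails naturally with the subsequent computation of the potential for the games $(N,\nu;(E,\Pi),\alpha,\beta)$.
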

We now construct an explicit function \( P : 2^N \to \mathbb{R} \) corresponding to the potential of the TU-game defined in the previous section. This formulation will allow us to derive a closed-form expression for the Shapley value.
\begin{definition} \label{potential}
Let \( (N, \nu; (E, \Pi), \alpha, \beta) \) be the proposed TU-game, where \( \Pi = \{S_1, \dots, S_r\} \) with \( r \geq 2 \). Define the map \( P : 2^N \to \mathbb{R} \) as follows:
\begin{itemize}
    \item[-] \( P(\emptyset) = 0 \);
    \item[-] If \( T \subseteq S_\ell \) for some \( \ell \in \{1, \dots, r\} \), then
    \begin{equation} \label{kernel}
    P(T) = \alpha \sum_{j \in T} d_j(S^{-j}) + \beta \sum_{h \in N \setminus S_\ell} \lambda_h(T),
    \end{equation}
    where
    \begin{equation} \label{kernel1}
    \lambda_h(T) = 
    \begin{cases}
     \dsum_{t=1}^{d_h(T)} \dfrac{1}{t}\ d_h(N \setminus S_\ell)  & \text{if } d_h(T) \geq 1, \\
    0 & \text{if } d_h(T) = 0.
    \end{cases}
    \end{equation}
    \item[-] For coalition \( T \subseteq N \), define
    \begin{equation} \label{kernel2}
    P(T) = \sum_{\ell = 1}^{r} P(T \cap S_\ell).
    \end{equation}
\end{itemize}
\end{definition}
The next result shows that this function provides an exact expression for the potential of the TU-game.
\begin{proposition} \label{pot}
For any TU-game \( (N, \nu; (E, \Pi), \alpha, \beta) \) associated to $(E, \Pi)\in\mathcal{G}(N)$ and $(\alpha, \beta)\in\mathbb{R}_{\geq0}^2$, the value of the potential function \( \mathcal{P}(N, \nu) \) of Theorem \ref{mainre} coincides with \( P(N) \), where \( P \) is the function defined in Definition~\ref{potential}.
\end{proposition}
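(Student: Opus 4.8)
The plan is to invoke the uniqueness part of Theorem~\ref{mainre}. Writing $\widehat{\mathcal P}(T):=\mathcal P(T,\nu)$ for the potential of the subgame on $T\subseteq N$, that theorem says $\widehat{\mathcal P}$ is the \emph{unique} set function on $2^N$ with $\widehat{\mathcal P}(\emptyset)=0$ satisfying $\sum_{i\in T}\bigl(\widehat{\mathcal P}(T)-\widehat{\mathcal P}(T\setminus\{i\})\bigr)=\nu(T)$ for every $\emptyset\neq T\subseteq N$ (this recursion determines $\widehat{\mathcal P}(T)$ from $\nu(T)$ and the values on strict subsets, so existence and uniqueness follow by induction on $|T|$). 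Hence it suffices to show that the function $P$ of Definition~\ref{potential} satisfies the same recursion; then $P=\widehat{\mathcal P}$, and in particular $P(N)=\mathcal P(N,\nu)$.

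First I would decouple the problem union by union. By \eqref{eq:charactfunc} and \eqref{kernel2}, both $\nu(T)=\sum_{\ell}\bigl(\nu_1(T\cap S_\ell)+\nu_2(T\cap S_\ell)\bigr)$ and $P(T)=\sum_{\ell}P(T\cap S_\ell)$ split additively along $\Pi$, and since each $i\in T$ lies in a unique union $S_{\ell(i)}$, removing $i$ only changes the $\ell(i)$-th term of $P$, so $P(T)-P(T\setminus\{i\})=P(T\cap S_{\ell(i)})-P\bigl((T\cap S_{\ell(i)})\setminus\{i\}\bigr)$. Grouping $\sum_{i\in T}$ by union reduces the claim to proving $\sum_{i\in T}\bigl(P(T)-P(T\setminus\{i\})\bigr)=\nu_1(T)+\nu_2(T)$ for $T\subseteq S_\ell$.

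For such a $T$ and $i\in T$ I would compute the increment $D_iP(T):=P(T)-P(T\setminus\{i\})$ from \eqref{kernel}--\eqref{kernel1}: the $\alpha$-sum telescopes to $\alpha\,d_i(S^{-i})$, while for the $\beta$-sum, since $d_h(T\setminus\{i\})$ equals $d_h(T)-1$ when $h\in N(i)$ and $d_h(T)$ otherwise, the truncated-harmonic form of $\lambda_h$ gives $\lambda_h(T)-\lambda_h(T\setminus\{i\})=\frac{1}{d_h(T)}\,d_h(N\setminus S_\ell)$ when $h\in N(i,S^{-i})$ and $0$ otherwise (the degenerate cases $d_h(T)\in\{0,1\}$ check out directly, as does the convention that a non-bridge $i$ has $d_i(S^{-i})=0$). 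Thus $D_iP(T)=\alpha\,d_i(N\setminus S_\ell)+\beta\sum_{h\in N(i,S^{-i})}d_h(N\setminus S_\ell)/d_h(T)$. Summing over $i\in T$, the $\alpha$-part is $\nu_1(T)$ by definition; for the $\beta$-part I swap the summation order as $\sum_{i\in T}\sum_{h\in N(i,S^{-i})}=\sum_{h\in N(T)\cap(N\setminus S_\ell)}\sum_{i\in T\cap N(h)}$, noting the inner sum contains exactly $d_h(T)$ copies of $d_h(N\setminus S_\ell)/d_h(T)$ and hence equals $d_h(N\setminus S_\ell)$, so the $\beta$-part collapses to $\beta\sum_{h\in N(T)\cap(N\setminus S_\ell)}d_h(N\setminus S_\ell)=\nu_2(T)$. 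This yields the desired identity, and with it the claim. The only delicate point is this bookkeeping --- the harmonic-difference computation and the summation swap, together with the degenerate cases --- and I do not expect a deeper obstacle, since Proposition~\ref{prop9} already signals that the game, and hence its potential, decomposes cleanly along the unions.
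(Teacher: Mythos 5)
Your proposal is correct and mirrors the paper's own argument: the identity you verify, $\sum_{i\in T}\bigl(P(T)-P(T\setminus\{i\})\bigr)=\nu(T)$ for all $T\subseteq N$, is exactly the paper's Lemma~\ref{lemmae2}, proved there by the same union-by-union decomposition, the same computation of the increments of the $\alpha$-sum and of $\lambda_h$, and the same summation swap, after which the conclusion follows from the uniqueness statement in Theorem~\ref{mainre} just as you argue.
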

As a direct consequence, we obtain one of the main results of this paper that returns the Shapley value as equal to the power index.
\begin{theorem} \label{thmshapley}
Let \( (N, \nu; (E, \Pi), \alpha, \beta) \) be the proposed TU-game, and let \( i \in N \). Then, the Shapley value of node \( i \) is given by
\[
\mathrm{Sh}_i(N, \nu; (E, \Pi), \alpha, \beta)= \varphi_i^{(\alpha,\beta)}(E, \Pi).
\]
\end{theorem}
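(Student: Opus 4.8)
The plan is to exploit the potential-function characterization of the Shapley value. By Theorem~\ref{mainre}, $\mathrm{Sh}_i(N,\nu) = D_i\mathcal{P}(N,\nu) = \mathcal{P}(N,\nu) - \mathcal{P}(N\setminus\{i\},\nu)$, where $\mathcal{P}$ is the unique potential. Proposition~\ref{pot} identifies $\mathcal{P}(N,\nu)$ with $P(N)$; since the potential is pinned down by the identity~\eqref{sumdi} on the game and all of its subgames (Theorem~\ref{mainre}), the argument proving Proposition~\ref{pot} in fact shows $\mathcal{P}(T,\nu)=P(T)$ for every $T\subseteq N$ — in particular for $T=N\setminus\{i\}$. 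Hence $\mathrm{Sh}_i(N,\nu) = P(N) - P(N\setminus\{i\})$, and everything reduces to computing this marginal difference from the closed form in Definition~\ref{potential}.

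Next I would collapse the difference to a single union. Writing $S_\ell := S^i$ and using the decomposition~\eqref{kernel2}, $P(N) = \sum_{m=1}^r P(S_m)$ while $P(N\setminus\{i\}) = P(S_\ell\setminus\{i\}) + \sum_{m:\,S_m\neq S_\ell} P(S_m)$, since removing $i$ affects only the block $S_\ell$. Therefore $\mathrm{Sh}_i(N,\nu) = P(S_\ell) - P(S_\ell\setminus\{i\})$, and I only need to compare $P$ on $S_\ell$ and on $S_\ell\setminus\{i\}$, both of which are subsets of a single union, so formula~\eqref{kernel} applies directly to each.

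The remaining computation splits along the two parameters. The $\alpha$-part of $P(S_\ell)-P(S_\ell\setminus\{i\})$ is $\alpha\,d_i(S^{-i})$, recovering the degree term of $\varphi_i^{(\alpha,\beta)}$. For the $\beta$-part, note that for $h\in N\setminus S_\ell$ we have $d_h(S_\ell\setminus\{i\}) = d_h(S_\ell) - \mathbf{1}[h\in N(i)]$, so $\lambda_h(S_\ell) - \lambda_h(S_\ell\setminus\{i\})$ vanishes unless $h$ is a neighbor of $i$, and when $h\in N(i)$ the harmonic telescoping $\sum_{t=1}^{k}\tfrac1t - \sum_{t=1}^{k-1}\tfrac1t = \tfrac1k$ gives $\lambda_h(S_\ell) - \lambda_h(S_\ell\setminus\{i\}) = d_h(N\setminus S_\ell)/d_h(S_\ell)$ (with the convention in~\eqref{kernel1} handling the case $d_h(S_\ell)=1$). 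Summing over $h\in N(i)\cap(N\setminus S_\ell) = N(i,S^{-i})$ and recalling $N\setminus S_\ell = S^{-i}$ and $S_\ell = S^i$, the $\beta$-part equals $\beta\sum_{h\in N(i,S^{-i})} d_h(S^{-i})/d_h(S^i)$. Adding the two parts yields exactly $\varphi_i^{(\alpha,\beta)}(E,\Pi)$; when $i$ is not a bridge both $d_i(S^{-i})=0$ and $N(i,S^{-i})=\emptyset$, so the value is $0$, consistent with the convention in the definition of $\varphi^{(\alpha,\beta)}$.

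The only genuinely delicate point is the first one: justifying $\mathcal{P}(N\setminus\{i\},\nu) = P(N\setminus\{i\})$ rather than just $\mathcal{P}(N,\nu)=P(N)$. If the proof of Proposition~\ref{pot} already verifies that $P$ satisfies~\eqref{sumdi} for every subgame $(T,\nu)$ — which is the natural way to prove it via Theorem~\ref{mainre} — this is immediate; otherwise one must observe that the restriction of $P$ to $2^{N\setminus\{i\}}$ has the same structural form (the union blocks and all relevant degrees are unchanged) and re-run that verification on the subgame. Everything after that is the routine term-by-term calculation above, whose only non-bookkeeping ingredient is the telescoping of the harmonic partial sums.
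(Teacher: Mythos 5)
Your proposal is correct and follows essentially the same route as the paper: it invokes the potential characterization (Theorem~\ref{mainre} together with Proposition~\ref{pot}), reduces the marginal difference to $P(S^i)-P(S^i\setminus\{i\})$ via the block decomposition, and evaluates it from the closed form of $P$. The subtle point you flag about the subgame potential is indeed covered in the paper, since Lemma~\ref{lemmae2} verifies the identity~\eqref{sumdi} for every coalition $T\subseteq N$, and your explicit harmonic telescoping is the same computation that appears inside the proof of that lemma.
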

The sum of the Shapley values across all firms (or nodes) in technological networks provides a comprehensive measure of the overall power in networks with \emph{a priori} unions, that is, $\nu(N)$. More specifically, the result is the following.

\begin{proposition}
\label{totalpower}
For any network configuration \( (E, \Pi) \in \mathcal{G}(N) \) and any pair of parameters \( (\alpha, \beta) \), we obtain the following:
\begin{equation} \label{eff} 
\sum_{i=1}^n \varphi_i^{(\alpha,\beta)}(E, \Pi)= 2 \alpha\,\epsilon^{\text{ext}} + \beta \sum_{i \in N} \sum_{\ell \in u(i)} d_i(N \setminus S_\ell),
\end{equation}
where the set 
\[
u(i)  = \left\{ \ell \in \{1, \ldots, r\} : d_i(S_\ell) \neq 0 \text{ and } S_\ell \neq S^i \right\}.
\]
\end{proposition}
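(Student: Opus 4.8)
The plan is to expand the left-hand side using the definition of $\varphi_i^{(\alpha,\beta)}$ and to handle the $\alpha$-term and the $\beta$-term separately, in each case reorganizing a double sum by switching the order of summation. First I would note that the convention $\varphi_i^{(\alpha,\beta)}(E,\Pi)=0$ for non-bridge nodes is already consistent with the displayed formula: if $i\notin N^{\mathrm{ext}}$, then $d_i(S^{-i})=0$ and $N(i,S^{-i})=\emptyset$, so both summands vanish automatically. Hence we may write $\sum_{i=1}^{n}\varphi_i^{(\alpha,\beta)}(E,\Pi)=\alpha\sum_{i\in N}d_i(S^{-i})+\beta\sum_{i\in N}\sum_{j\in N(i,S^{-i})}\frac{d_j(S^{-i})}{d_j(S^i)}$ with no exceptional cases to carry along.

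For the $\alpha$-term, I would observe that $\sum_{i\in N}d_i(S^{-i})$ counts ordered pairs $(i,j)$ with $\{i,j\}\in E$ and $S^i\neq S^j$. Each external edge $\{i,j\}$ is counted exactly twice, once as $(i,j)$ and once as $(j,i)$, so this sum equals $2\epsilon^{\mathrm{ext}}$, which is the first term on the right.

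For the $\beta$-term, the key move is to swap the roles of $i$ and $j$: since $j\in N(i,S^{-i})$ is equivalent to $\{i,j\}\in E$ and $S^i\neq S^j$, which is in turn equivalent to $i\in N(j,S^{-j})$, I can rewrite $\sum_{i\in N}\sum_{j\in N(i,S^{-i})}(\cdot)$ as $\sum_{j\in N}\sum_{i\in N(j,S^{-j})}(\cdot)$. The crucial observation is then that the summand $\frac{d_j(S^{-i})}{d_j(S^i)}$ depends on $i$ only through the union $S^i$ containing $i$. Partitioning the inner index set $N(j,S^{-j})$ according to that union, the unions that occur are precisely the $S_\ell$ with $S_\ell\neq S^j$ and $d_j(S_\ell)>0$, i.e. $\ell\in u(j)$, and for each such $\ell$ there are exactly $|N(j)\cap S_\ell|=d_j(S_\ell)$ nodes $i$, each contributing $\frac{d_j(N\setminus S_\ell)}{d_j(S_\ell)}$. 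Summing over those $i$ gives $d_j(S_\ell)\cdot\frac{d_j(N\setminus S_\ell)}{d_j(S_\ell)}=d_j(N\setminus S_\ell)$, so the $\beta$-term becomes $\beta\sum_{j\in N}\sum_{\ell\in u(j)}d_j(N\setminus S_\ell)$, which is the second term on the right after relabeling $j$ as $i$. Adding the two contributions yields \eqref{eff}.

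I do not expect a genuine obstacle here; the single point that needs care is the cancellation of the denominator $d_j(S^i)$ against the multiplicity $d_j(S_\ell)$ of the nodes $i$ that share the union $S_\ell$, since this is exactly what makes the final expression independent of internal degrees. I would also explicitly check the boundary cases—$d_j(S_\ell)=0$ (excluded from $u(j)$, with no corresponding $i$) and $d_j(S^i)=0$ (impossible when $i\in N(j)$)—to confirm that every fraction appearing in the argument is well defined.
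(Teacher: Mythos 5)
Your proof is correct, but it follows a genuinely different route from the paper's. The paper obtains \eqref{eff} as a corollary of the cooperative machinery: by Theorem~\ref{thmshapley} the indices are the Shapley values of the game $(N,\nu;(E,\Pi),\alpha,\beta)$, so efficiency gives $\sum_{i}\varphi_i^{(\alpha,\beta)}(E,\Pi)=\nu(N)=\sum_{\ell=1}^{r}\bigl[\nu_1(S_\ell)+\nu_2(S_\ell)\bigr]$, and the right-hand side of \eqref{eff} is then read off from the definitions of $\nu_1$ and $\nu_2$ (the condition $h\in N(S_\ell)\cap(N\setminus S_\ell)$ being exactly $\ell\in u(h)$). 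You instead work directly from the definition of $\varphi^{(\alpha,\beta)}$: the $\alpha$-part is a standard handshake count giving $2\epsilon^{\mathrm{ext}}$, and for the $\beta$-part you swap the order of summation, group the external neighbors $i$ of a fixed $j$ by their union $S_\ell$, and use the cancellation $d_j(S_\ell)\cdot\frac{d_j(N\setminus S_\ell)}{d_j(S_\ell)}=d_j(N\setminus S_\ell)$ — which is precisely the mechanism that makes the total independent of internal degrees. Your argument is self-contained and more elementary (it does not invoke Theorem~\ref{thmshapley}, the potential, or efficiency), and it is valid for arbitrary real $(\alpha,\beta)$ rather than only the nonnegative pairs for which the TU-game is defined; what the paper's route buys in exchange is the interpretation of the total power as the worth $\nu(N)$ of the grand coalition, which is the point the surrounding discussion of ``efficient value'' and shadow power builds on. Your attention to the degenerate cases (non-bridge nodes, $d_j(S^i)\geq 1$ whenever $i\in N(j)$) is appropriate and closes the only places where the direct computation could be questioned.
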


This efficient value — that is, $\nu(N)$ — or the sum of power indices across all firms is not sum-constant across the set of networks with \emph{a priori} unions.  

However, the efficient value encompasses two terms on the right-hand side of equality~\eqref{eff}. The first term measures the power $\epsilon^{\text{ext}}$ due to the external edges in technological networks with \emph{a priori} unions. The second term can be interpreted as a form of \textit{shadow} power associated with \textit{potential} external links in networks. Specifically, these \textit{shadow} novel external edges are the ones visible if firms unilaterally transition from one technological sector to another union; and such a move is allowed only if they are connected to the union in the original network.

\section{On the axiomatic representation of  indices}\label{Axioms}

In Section 2, we derived the power index under the assumption that firms actively participate in licensing and sub-licensing agreements, as structured within the network we described. In Section 3, we considered a scenario where firms engage in deterministic mutual exchanges of technologies and licenses, influenced by the effects of cooperation. 

Next we take a more abstract approach: rather than assuming that any exchange is predetermined, we investigate whether the proposed power index can be characterized axiomatically, considering the inherent uncertainty in firms' interactions. In this section only, we do not restrict $\alpha$ and $\beta$ to be positive; instead, we allow them to take arbitrary real values. Accordingly, our results hold for any $\alpha, \beta \in \mathbb{R}$.
We focus on identifying the fundamental principles that a power measure should satisfy. This allows us to determine whether our index naturally arises from a set of minimal, intuitive conditions that govern technology sharing within integrated markets. In the following, we present a series of axioms and briefly discuss their economic interpretation within technological networks.
\begin{axiom}[Edge nullity (EN)]\label{Ax-en}
A power index $\phi$ satisfies the \emph{Edge nullity} property if, for every partition 
$\Pi \in \mathbf{\Pi}_r(N)$ with $r \leq n$, it holds that:
\begin{equation*}
    \phi(\emptyset, \Pi) = \underline{0}.
\end{equation*}
\end{axiom}

This axiom states that if there are no edges in the technological network (i.e., the set $E$ is empty), then the power of firms is simply zero.

\begin{axiom}[Inter-Union Neighborhood (IUN)]\label{Ax-ld}
A power index $\phi$ satisfies the \emph{Inter-Union Neighborhood} (IUN) property if, for every $(E,\Pi) \in \mathcal{G}(N)$, it holds that:
\begin{equation*}
    \phi_i(E, \Pi) = \phi_i(E_i^*, \Pi),
\end{equation*}
where $E_i^* = \left\{ \{j, k\} \in E : j \in N(i, S^{-i}|E,\Pi), k \in N \right\}$.
\end{axiom}
The power of a  firm  is determined by the relationships it maintains with its neighbors, with particular emphasis on its ability to connect across different unions or technological sectors. The term Inter-Union Neighborhood reflects this notion, asserting that a firm's influence is contingent upon its capacity to interact with firms in different technological markets. If a firm has no edges with nodes in distinct unions, its overall influence is null. 

\begin{axiom}[Anonimity (A)]\label{Ax-a}
A power index $\phi$ satisfies \emph{Anonimity} property if, for any permutation $\sigma: N \to N$ and for all $(E, \Pi) \in \mathcal{G}(N)$, the following condition holds:
\begin{equation*}
    \phi_i(E, \Pi) = \phi_{\sigma i}(\sigma E, \sigma \Pi),
\end{equation*}
where $\sigma E := \left\{ \{\sigma i, \sigma j\} : \{i, j\} \in E \right\}$ and $\sigma \Pi := (\sigma S_1, \ldots, \sigma S_r)$, with for any $A \subseteq N$, $\sigma A := \{\sigma a : a \in A\}$.
\end{axiom}
Anonimity axiom expresses that the power of a firm is independent of its label or identity. In other words, the influence of a firm, within the technological network, is determined solely by its position and its relationships with other firms, not by which specific node it is. Under this axiom, if a permutation of the set of nodes is applied to both the set of edges and the coalition structure, the power index of each firm remains unchanged.

\begin{axiom}[Linearity (Li)]\label{Ax-l}
A power index $\phi$ satisfies the \emph{Linearity} property if, for every $(E,\Pi)\in\mathcal{G}(N)$ and for any $i \in N$ such that 
$d_i (S^{-i}|E, \Pi) = p$, the following holds:
\[
\phi_i(E, \Pi) = \sum_{h=1}^p \phi_i(E_h^i, \Pi),
\]
where $\{E_h^i\}_{h=1}^p$ is the family of all possible sets of edges obtained by removing $p - 1$ edges connecting node $i$ with nodes in $N(i, S^{-i}|E, \Pi)$.
\end{axiom}
Linearity axiom reflects the principle that the power of a firm can be decomposed into the contributions derived from individual interactions with other firms. Specifically, if a firm $i$ interacts with $p$ other firms, its total power is the sum of its power across each of these interactions considered separately. This axiom captures the idea that power accumulates additively over independent connections.

\begin{axiom}[Union Indifference (UI)]\label{Ax-UI}
A power index $\phi$ satisfies the \emph{Union Indifference} property if, for every $(E,\Pi) \in \mathcal{G}(N)$, for all $i \in N$, and for any $S \in \Pi$ such that $N(i, S) = \emptyset$, the following holds:
\[
\phi_i(E, \Pi) = \phi_i(E, \tilde{\Pi}),
\]
where $\tilde{\Pi} \in \mathbf{\Pi}_{r-1}(N)$ is obtained from $\Pi$ by merging $S$ with some other union $S' \in \Pi$, with $S' \neq S^i$.
\end{axiom}
Union Indifference axiom states that the power of a firm is sometimes unaffected by the merging of technological sectors. Specifically, if a firm $i$ does not interact with a given technological sector or union, its power remains unchanged even if such a sector merges with another one. This highlights that only active connections influence power, while passive structural changes in unrelated parts of the network do not.

To introduce the final two axioms, we first define a special class of elementary technological networks. Within the model of technological networks, we define simpler structures of graphs with a priori unions, namely \emph{unanimity graphs}.

\begin{definition}[Unanimity Graph]\label{unanumity}
A graph $(E, \Pi) \in \mathcal{G}(N)$ is called a \emph{unanimity graph} if the following conditions hold:
\begin{description}
    \item[i)] $\Pi \in \Pi_2(N)$, i.e., the node set $N$ is partitioned into exactly two disjoint unions;
    \item[ii)] there exists at least a node $i$ such that:
    \begin{description}
    \item[-]  $\epsilon^{\text{ext}} = d_i^{\text{ext}} \geq 1$; 
    \item[-] each node $j \in N(i, S^i)$ satisfies: $d_j^{\text{int}} = 1$;
    \item[-] each node $j \notin N(i, S^i)$ satisfies: $d_j^{\text{int}} = 0$.
    \end{description}
\end{description}
\end{definition}
Figure \eqref{fig:firstUgraph} depicts an example of unanimity graphs. 
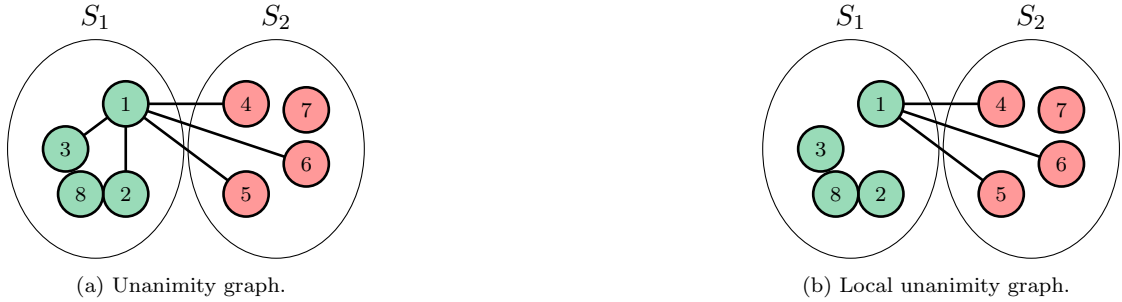
\begin{figure}[htbp]
\centering
\subfloat[Unanimity graph.\label{fig:firstUgraph}]{
 \begin{tikzpicture}[scale=0.40]
% Vertices
\Vertex[size=.6,color=Green,opacity=.4,label=1]{1}
\Vertex[x=0,y=-3.0,size=.6,color=Green,opacity=.4,label=2]{2}
\Vertex[x=-2,y=-1.5,size=.6,color=Green,opacity=.4,label=3]{3}
\Vertex[x=4,y=0.0,size=.6,color=red,opacity=.4,label=4]{4}
\Vertex[x=4,y=-3.0,size=.6,color=red,opacity=.4,label=5]{5}
\Vertex[x=6,y=-2,size=.6,color=red,opacity=.4,label=6]{6}
\Vertex[x=6,y=-0.2,size=.6,color=red,opacity=.4,label=7]{7}
\Vertex[x=-1.5,y=-3,size=.6,color=Green,opacity=.4,label=8]{8}
% Edges
\Edge[lw=1,color=black,bend=0](1)(4)
\Edge[lw=1,color=black,bend=0](1)(5)
\Edge[lw=1,color=black,bend=0](1)(6)
\Edge[lw=1,color=black,bend=0](1)(3)
\Edge[lw=1,color=black,bend=0](1)(2)
\node[draw, ellipse, fit=(1) (2) (3), label=above:$S_1$] {};
\node[draw, ellipse, fit=(4) (5) (6), label=above:$S_2$] {};
\end{tikzpicture} }\hfill
\subfloat[Local unanimity graph.\label{fig:secondUgraph}]{ \begin{tikzpicture}[scale=0.40]
% Vertices
\Vertex[size=.6,color=Green,opacity=.4,label=1]{1}
\Vertex[x=0,y=-3.0,size=.6,color=Green,opacity=.4,label=2]{2}
\Vertex[x=-2,y=-1.5,size=.6,color=Green,opacity=.4,label=3]{3}
\Vertex[x=4,y=0.0,size=.6,color=red,opacity=.4,label=4]{4}
\Vertex[x=4,y=-3.0,size=.6,color=red,opacity=.4,label=5]{5}
\Vertex[x=6,y=-2,size=.6,color=red,opacity=.4,label=6]{6}
\Vertex[x=6,y=-0.2,size=.6,color=red,opacity=.4,label=7]{7}
\Vertex[x=-1.5,y=-3,size=.6,color=Green,opacity=.4,label=8]{8}
% Edges
\Edge[lw=1,color=black,bend=0](1)(4)
\Edge[lw=1,color=black,bend=0](1)(5)
\Edge[lw=1,color=black,bend=0](1)(6)
\node[draw, ellipse, fit=(1) (2) (3), label=above:$S_1$] {};
\node[draw, ellipse, fit=(4) (5) (6), label=above:$S_2$] {};
\end{tikzpicture} }
\caption{Examples of two unanimity graphs with a priori unions.}
\end{figure}

\begin{axiom}[Constant Increment (ConI)]\label{Ax-ci}
A power index $\phi$ satisfies the \emph{Constant Increment} axiom 
if the following is true. Let $(E, \Pi), \, (E', \Pi) \in \mathcal{G}(N)$ be two unanimity graphs.
Suppose that node $i \in N$ has the \emph{maximum external degree} 
in both graphs with a priori unions:
\[
d_i^{\text{ext}}(E, \Pi) = \epsilon^{\text{ext}}(E, \Pi) \quad \text{and} \quad d_i^{\text{ext}}(E', \Pi) = \epsilon^{\text{ext}}(E', \Pi).
\]
Then, for any node $k \in S^i$ such that $\{i,k\} \notin E$ and $\{i,k\} \notin E'$, we have:
\begin{multline*} 
\sum_{j \in N(i, S^{-i}| E, \Pi)} \big[ \phi_j(E \cup \{\{i,k\}\}, \Pi) - \phi_j(E, \Pi) \big]
\\ =
\sum_{j \in N(i, S^{-i}| E', \Pi)} \big[ \phi_j(E' \cup \{\{i,k\}\}, \Pi) - \phi_j(E', \Pi) \big].
\end{multline*}
\end{axiom}
Constant Increment axiom models a situation where a firm $i$, which 
is exposed to external technology inflows 
(i.e., it receives many licenses from outside its technological sector), 
decides to share some of these technologies with a new peer firm 
$k$ inside its own union. The axiom states 
that the aggregate impact on the external firms 
— in terms of changes to their power index — does 
not depend on the topology of the technological markets. 
This captures the idea that internal sharing of externally 
sourced technologies has a scalable effect on the rest of the network.
For the next and last axiom we need to restrict the above class of unanimity graphs. 
\begin{definition}[Local Unanimity Graph]
A graph $(E,\Pi)\in\mathcal{G}(N)$ is a \emph{local unanimity graph} if 
it satisfies Definition \ref{unanumity} with the following supplementary property to the condition $ii)$: $d_i^{\text{int}}=0$.   
\end{definition}
Figure \eqref{fig:secondUgraph} depicts an example of a local unanimity graph.  

\begin{axiom}[Balancedness (Ba)]\label{Ax-Ba}
Let $(E,\Pi)\in\mathcal{G}(N)$ be a \emph{local unanimity graph}, and let $i \in N$ be a node such that $d_i^{\text{ext}} = \epsilon^{\text{ext}}$. A power index $\phi$ satisfies \emph{Balancedness} if:
\begin{equation*}
    \phi_i(E, \Pi)  -  \sum_{j \in N(i, S^{-i}|E,\Pi)} \phi_j(E, \Pi) = 0.
\end{equation*}
\end{axiom}
Balancedness axiom expresses a form of power conservation in simple technological networks: the power held by a pivotal firm $i$ is exactly balanced by the total power distributed among its external neighbors. In such minimal networks, the power of firm $i$ is counterbalanced by the collective power it confers to its external neighbors.\\

%We are ready now to state our results in this Section.
The next result states that the $(\alpha,\beta)$-power index satisfies the proposed axioms.
\begin{theorem}\label{teo:suff}
Let $(\alpha,\beta)\in\mathbb{R}^2$. The $(\alpha,\beta)$-power index $\varphi^{(\alpha,\beta)}$ satisfies Axioms \ref{Ax-en} (EN), \ref{Ax-ld} (IUN), \ref{Ax-a} (A), \ref{Ax-l} (Li), \ref{Ax-UI} (UI),\ref{Ax-ci} (ConI) and \ref{Ax-Ba} (Ba).
\end{theorem}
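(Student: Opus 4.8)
The proof is a verification of the seven axioms directly from the closed form of $\varphi^{(\alpha,\beta)}$. The organising observation is that $\varphi_i^{(\alpha,\beta)}(E,\Pi)$ depends on $(E,\Pi)$ only through the integers $d_i(S^{-i})$ and, for each external neighbour $j\in N(i,S^{-i})$, the pair $\bigl(d_j(S^{-i}),\,d_j(S^i)\bigr)$ (no denominator vanishes, since $j\in N(i,S^{-i})$ forces $i\in N(j,S^i)$, so $d_j(S^i)\geq1$). Each axiom then reduces to tracking how these finitely many quantities respond to the operation it describes; below I write $\varphi_i$ for $\varphi_i^{(\alpha,\beta)}$, and note that every computation is linear in $\alpha,\beta$, so allowing $\alpha,\beta\in\mathbb{R}$ changes nothing.

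\emph{The structural axioms.} \textbf{(EN)} For $E=\emptyset$ every node has external degree $0$, hence is not a bridge, so $\varphi(\emptyset,\Pi)=\underline{0}$. \textbf{(A)} Degrees and neighbourhoods are intrinsic, so all quantities above are invariant under a simultaneous relabelling of $N$, $E$ and $\Pi$, whence $\varphi_{\sigma i}(\sigma E,\sigma\Pi)=\varphi_i(E,\Pi)$. \textbf{(IUN)} Since $i\notin N(i,S^{-i})$, the only edges of $E$ incident to $i$ surviving in $E_i^*$ are those joining $i$ to its external neighbours, so $N(i,S^{-i}|E_i^*)=N(i,S^{-i}|E)$ and $d_i(S^{-i})$ is unchanged; and every edge incident to any $j\in N(i,S^{-i})$ survives in $E_i^*$, so each pair $\bigl(d_j(S^{-i}),d_j(S^i)\bigr)$ is unchanged — hence $\varphi_i(E_i^*,\Pi)=\varphi_i(E,\Pi)$ (if $i$ is not a bridge, both sides are $0$). \textbf{(UI)} Merging two unions $S,S'\neq S^i$ leaves the set $S^i$, and hence $S^{-i}=N\setminus S^i$, literally unchanged, so none of the quantities defining $\varphi_i$ is affected. \textbf{(Li)} Listing the external neighbours of $i$ as $j_1,\dots,j_p$, the graph $E_h^i$ keeps, among the external edges of $i$, only $\{i,j_h\}$, while leaving the neighbourhood of $j_h$ untouched (the deleted edges $\{i,j_m\}$, $m\neq h$, are not incident to $j_h$); thus $\varphi_i(E_h^i,\Pi)=\alpha+\beta\,d_{j_h}(S^{-i})/d_{j_h}(S^i)$, and summing over $h=1,\dots,p$ rebuilds $\alpha\,d_i(S^{-i})+\beta\sum_{j\in N(i,S^{-i})}d_j(S^{-i})/d_j(S^i)=\varphi_i(E,\Pi)$.

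\emph{The axioms on (local) unanimity graphs.} One first unwinds Definition~\ref{unanumity}: if $i$ is the distinguished node and $\Pi=\{S^i,S^{-i}\}$, then, with $B:=N(i,S^{-i})$, every node of $B$ has degree exactly $1$ with unique neighbour $i$ and internal degree $0$, while $i$ has external degree $|B|=\epsilon^{\mathrm{ext}}$; in a \emph{local} unanimity graph moreover $d_i^{\mathrm{int}}=0$. Consequently, for every $b\in B$ only the term $j=i$ survives in the sum defining $\varphi_b$, and $\varphi_b(E,\Pi)=\alpha+\beta\,d_i(S^i)/|B|$. \textbf{(ConI)} Adding $\{i,k\}$ with $k\in S^i$ raises $d_i(S^i)$ by one and leaves every other quantity entering any $\varphi_b$ ($b\in B$) unchanged ($k\notin B$, and $k$ is adjacent to no $b$), so $\varphi_b(E\cup\{\{i,k\}\},\Pi)-\varphi_b(E,\Pi)=\beta/|B|$; summing over the $|B|$ external neighbours gives exactly $\beta$ — a value independent of which unanimity graph one started from — so the two sides of the axiom coincide. \textbf{(Ba)} In a local unanimity graph $d_i(S^i)=0$, hence $\varphi_i(E,\Pi)=\alpha|B|$ and $\varphi_b(E,\Pi)=\alpha$ for each $b\in B$, so $\varphi_i(E,\Pi)-\sum_{b\in B}\varphi_b(E,\Pi)=\alpha|B|-\alpha|B|=0$.

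The only genuinely delicate step is reading Definition~\ref{unanumity} correctly — recognising that the distinguished node $i$ is the unique node allowed internal degree greater than one and that every other node has degree at most one — after which ConI and Ba reduce to one-line computations; the rest is routine bookkeeping of which of $d_i(S^{-i})$, $d_j(S^{-i})$, $d_j(S^i)$ actually moves under each operation.
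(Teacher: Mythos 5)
Your proof is correct and takes essentially the same route as the paper's: a direct axiom-by-axiom verification from the closed form, with (EN), (A), (IUN), (UI), (Li) handled by tracking the invariance of $d_i(S^{-i})$ and of the pairs $\bigl(d_j(S^{-i}),d_j(S^i)\bigr)$, and with (ConI), (Ba) reduced to explicit values on (local) unanimity graphs, producing the same graph-independent constants $\beta$ and $0$ that the paper obtains. One cosmetic caveat: when $\epsilon^{\mathrm{ext}}=1$ the node $i$ of Axiom ConI need not be the distinguished hub of the unanimity graph (its unique external neighbour also has maximal external degree), so your claim that every node of $B$ has degree exactly one can fail; however, since $d_i^{\mathrm{ext}}=\epsilon^{\mathrm{ext}}$ already forces every external edge to be incident to $i$, the identities $N(b,S^{-b})=\{i\}$, $\varphi_b=\alpha+\beta\,d_i(S^i)/|B|$ and the per-neighbour increment $\beta/|B|$ remain valid verbatim in that case, so the argument goes through unchanged.
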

The independence of the proposed Axioms is proven as follow.
\begin{theorem}\label{indipendence}
Axioms \ref{Ax-en} (EN), \ref{Ax-ld} (IUN), \ref{Ax-a} (A), \ref{Ax-l} (Li), \ref{Ax-UI} (UI), \ref{Ax-ci} (ConI) and \ref{Ax-Ba} (Ba) are logically independent.   \end{theorem}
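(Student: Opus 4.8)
The plan is to prove logical independence of the seven axioms by exhibiting, for each axiom, a power index that violates that axiom while satisfying all the other six. The standard structure is to construct seven counterexample indices $\phi^{(1)},\dots,\phi^{(7)}$, one per axiom, and verify the relevant (non)compliances. Since Theorem~\ref{teo:suff} already establishes that $\varphi^{(\alpha,\beta)}$ satisfies all seven axioms, natural candidates are obtained by perturbing $\varphi^{(\alpha,\beta)}$ (or one of its two additive components $\alpha\, d_i(S^{-i})$ and $\beta\sum_{j\in N(i,S^{-i})} d_j(S^{-i})/d_j(S^i)$) in a way that breaks exactly one structural feature.

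The concrete constructions I would try are as follows. For the independence of \emph{Edge nullity} (EN), take $\phi_i(E,\Pi) = \varphi_i^{(\alpha,\beta)}(E,\Pi) + c$ for a nonzero constant $c$ (or $c$ times the all-ones vector): this kills EN but, being a constant shift, is compatible with anonymity, union indifference, and the two aggregate-difference axioms ConI and Ba only if $c$ cancels — so more care is needed; a better choice is to add a term that is constant on $\mathcal{G}(N)$ but vanishes structurally elsewhere. For \emph{Inter-Union Neighborhood} (IUN), use an index that also counts \emph{internal} degree, e.g. $\phi_i = \varphi_i^{(\alpha,\beta)} + \gamma\, d_i^{\mathrm{int}}$; this depends on edges not incident to $N(i,S^{-i})$, violating IUN, while linearity-over-external-edges, anonymity and EN survive. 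For \emph{Anonymity} (A), fix a distinguished node $i_0\in N$ and multiply that node's value by a scalar $\neq 1$ (or add a label-dependent bonus), keeping every other node's value equal to $\varphi^{(\alpha,\beta)}$; this breaks only A. For \emph{Linearity} (Li), replace the linear-in-edges external term by a strictly nonlinear function of the external degree, e.g. $\alpha\, (d_i(S^{-i}))^2$ in place of $\alpha\, d_i(S^{-i})$, which fails the additive decomposition over single external edges but retains EN, A, IUN, and UI. For \emph{Union Indifference} (UI), introduce explicit dependence on the number of unions $r$ or on sizes $|S_\ell|$ of unions not neighboring $i$, e.g. multiply by $r$; merging two empty-neighborhood unions then changes the value. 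For \emph{Constant Increment} (ConI) and \emph{Balancedness} (Ba), which are the delicate ones, I would perturb only the $\beta$-component or only the $\alpha$-component: since ConI is an aggregate-increment identity and Ba is a conservation identity holding on (local) unanimity graphs, scaling the $\alpha$-term and the $\beta$-term by \emph{different} constants, or adding a term supported away from unanimity graphs, should break exactly one of these two while preserving the other and the first five axioms.

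The key steps, in order, are: (1) recall from Theorem~\ref{teo:suff} that $\varphi^{(\alpha,\beta)}$ satisfies all seven axioms, so each perturbation need only be checked to break its target axiom and preserve the remaining six; (2) define the seven indices explicitly; (3) for each, verify the single violation by exhibiting an explicit small graph with \emph{a priori} unions (the figures already in the paper, e.g.\ Figure~\ref{fig:incomplete_bridge_extension} or the unanimity graphs of Figure~\ref{fig:firstUgraph}, give ready-made witnesses); (4) for each, verify the six preservations, which for EN, A, IUN, Li, UI are immediate from the algebraic form of the perturbation, and for ConI and Ba require checking that the perturbation term either cancels in the relevant difference/sum or is identically zero on (local) unanimity graphs.

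The main obstacle will be steps concerning ConI and Ba: these two axioms are close in spirit (both are conservation/aggregation statements restricted to unanimity-type graphs), and a naive perturbation risks breaking both at once, or breaking one of EN/A/IUN/Li/UI as a side effect. The resolution is to exploit the fact that ConI only constrains \emph{increments} when adding an internal edge $\{i,k\}$ at a maximal-external-degree node, whereas Ba constrains a \emph{level} identity on \emph{local} unanimity graphs (where $d_i^{\mathrm{int}}=0$); a perturbation that is nonzero precisely when $d_i^{\mathrm{int}}\geq 1$ (hence nonzero on general unanimity graphs but zero on local ones) can be tuned to violate ConI while leaving Ba intact, and conversely a perturbation active only on local unanimity graphs, or one that rescales the two additive components asymmetrically, handles Ba. Carefully bookkeeping which graphs serve as witnesses and confirming the perturbation's vanishing on the complementary class is where the real work lies; everything else is routine substitution into the definitions.
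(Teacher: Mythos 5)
Your overall strategy (one counterexample index per axiom, each violating its target while satisfying the other six) is exactly the paper's approach, but the concrete constructions you propose contain genuine gaps, and the delicate cases are not resolved. (i) Your IUN candidate $\phi_i=\varphi_i^{(\alpha,\beta)}+\gamma\,d_i^{\mathrm{int}}$ does not isolate IUN: since each $E_h^i$ in Axiom~\ref{Ax-l} retains all internal edges of $i$, for a node with $d_i^{\mathrm{ext}}=p\geq 2$ and $d_i^{\mathrm{int}}\geq 1$ one gets $\sum_{h=1}^p\phi_i(E_h^i,\Pi)=\varphi_i^{(\alpha,\beta)}(E,\Pi)+p\gamma\,d_i^{\mathrm{int}}\neq\phi_i(E,\Pi)$, so Li is violated as well (and Li also forces $\phi_i=0$ when $d_i^{\mathrm{ext}}=0$, which your index fails). (ii) Your Li candidate $\alpha\,(d_i^{\mathrm{ext}})^2+\beta(\cdot)$ breaks Ba: on a local unanimity graph with apex degree $p\geq 2$ the apex gets $\alpha p^2$ while its external neighbours sum to $\alpha p$; with $\alpha=0$ Li is no longer violated, so the example collapses. (iii) Your A candidate (scaling or giving a bonus to one labelled node on top of $\varphi^{(\alpha,\beta)}$) breaks Ba when that node is the apex of a local unanimity graph, or ConI when it is an external neighbour in a unanimity graph, unless $\alpha=\beta=0$, in which case A is not violated at all; the paper avoids this by taking $d_i^{\mathrm{ext}}$ as the base index and adding $+\tfrac12$ and $-\tfrac12$ to two fixed nodes only on the restricted class $\mathcal{G}^A(N)$, so that the perturbation cancels in every sum relevant to Li, ConI and Ba. (iv) For ConI/Ba your first suggestion, ``scaling the $\alpha$-term and the $\beta$-term by different constants,'' provably cannot work: it yields $\varphi^{(c_1\alpha,\,c_2\beta)}$, another member of the family, which by Theorem~\ref{teo:suff} satisfies all seven axioms. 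The remaining suggestion (a perturbation supported where $d_i^{\mathrm{int}}\geq1$, or only on local unanimity graphs) is a heuristic with no explicit index and no verification of the other five axioms; the paper's actual witnesses are $\phi_i=\sum_{j\in N(i,S^{-i})}d_j$ for ConI and a $+1$-perturbation of $d^{\mathrm{ext}}$ on $\mathcal{G}^A(N)$ for Ba. (v) Your EN fix is left undefined (``constant on $\mathcal{G}(N)$ but vanishes structurally elsewhere'' is not a construction), and the plain constant shift you start from also violates Li and Ba; the paper simply sets the index equal to $1$ on the empty edge set and $\varphi^{(\alpha,\beta)}$ otherwise.

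In short, only your UI example (multiplication by the number of unions, essentially the paper's $\phi_i=|\Pi|\,d_i^{\mathrm{ext}}$) works as stated. The core difficulty of this theorem is precisely the bookkeeping you defer: perturbations of $\varphi^{(\alpha,\beta)}$ almost always disturb Li, ConI or Ba simultaneously, which is why the paper builds its counterexamples from simpler bases (such as $d^{\mathrm{ext}}$ or purely second-neighbourhood counts) and confines label-dependent modifications to special graph classes where the relevant sums cancel. As it stands, the proposal does not establish the independence of EN, IUN, A, Li, ConI or Ba.
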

The next theorem is the main result of this section and it fully characterizes the $(\alpha,\beta)$-power index.
\begin{theorem}\label{main}
Let $\phi$ be a power index satisfying Axioms~\ref{Ax-en} (EN), \ref{Ax-ld} (IUN), \ref{Ax-a} (A), \ref{Ax-l} (Li), \ref{Ax-UI} (UI), \ref{Ax-ci} (ConI), and \ref{Ax-Ba} (Ba). Then, there exists a vector $(\alpha,\beta) \in \mathbb{R}^2$ such that $\phi = \varphi^{(\alpha,\beta)}$.
\end{theorem}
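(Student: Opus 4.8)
\textbf{Proof proposal for Theorem \ref{main}.}

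The plan is to show that any power index $\phi$ satisfying the seven axioms is forced, step by step, to have the functional form $\varphi^{(\alpha,\beta)}$ for a suitable pair $(\alpha,\beta)$. The key insight is that Axioms EN, IUN and Li reduce the problem to understanding $\phi_i$ on ``single-external-edge'' configurations, where $i$ has exactly one neighbor outside its own union; Anonymity then collapses the remaining degrees of freedom to two scalar parameters, which ConI and Ba pin down, and UI guarantees consistency across partitions of different sizes.

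I would proceed as follows. \textbf{Step 1 (Reduction via IUN and Li).} Fix $(E,\Pi)$ and a bridge node $i$ with $d_i(S^{-i}) = p \geq 1$. By IUN, $\phi_i(E,\Pi) = \phi_i(E_i^*,\Pi)$, so we may assume every edge of the network is incident to an external neighbor of $i$. By Li, $\phi_i(E_i^*,\Pi) = \sum_{h=1}^p \phi_i(E_h^i,\Pi)$, where each $E_h^i$ retains exactly one of $i$'s external edges, say $\{i,j_h\}$. Applying IUN again inside each $E_h^i$, we may further delete every edge not incident to $j_h$'s relevant neighborhood, so the computation of $\phi_i$ is reduced to configurations in which $i$ has a single external neighbor $j$, and the only edges present are: the edge $\{i,j\}$, the internal edges of $j$ inside $S^i$ (which, together with $\{i,j\}$, determine $d_j(S^i)$... more precisely $d_j(S^i)\geq 1$ since $i$ is such a neighbor), and the external edges of $j$ (which determine $d_j(S^{-i})$). \textbf{Step 2 (Two-parameter family on elementary graphs).} On such an elementary single-edge configuration, by Anonymity the value $\phi_i$ can depend only on the isomorphism type of the local structure, i.e.\ on the integers $a := d_j(S^i)$ and $b := d_j(S^{-i})$ (and possibly on ambient data such as $n$, $r$, the union sizes; UI will be used to kill the dependence on $r$ and on unions disjoint from the active part). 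Write this value as $f(a,b)$. \textbf{Step 3 (Determining $f$).} Decompose a configuration with general $b$ by Li applied now from the point of view of $j$... but $j$ is external, so instead: use ConI and Ba on (local) unanimity graphs to extract the normalization. In a local unanimity graph with pivot $i$ of maximal external degree, $d_i^{\text{int}} = 0$, every $j\in N(i,S^{-i})$ has $d_j^{\text{int}} = 1$, i.e.\ $a = 1$; Ba then gives $\phi_i = \sum_j \phi_j$, and by IUN/Anonymity each $\phi_j$ equals a fixed function of its own external degree. Letting $d_i^{\text{ext}} = 1$ isolates a single $j$ and yields one equation; ConI, comparing two local/unanimity graphs as an internal edge $\{i,k\}$ is added, forces the increment $\sum_j[\phi_j(\cdot\cup\{i,k\}) - \phi_j(\cdot)]$ to be topology-independent, which translates into the statement that $f(a,b)$ depends on $a$ only through $1/a$ telescoping, i.e.\ $f(a,b) = \alpha \cdot[\text{contribution to } d_i^{\text{ext}}] + \beta\, b/a$ for constants $\alpha,\beta$. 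Reassembling via Step 1, $\phi_i(E,\Pi) = \alpha\, d_i(S^{-i}) + \beta \sum_{j\in N(i,S^{-i})} d_j(S^{-i})/d_j(S^i) = \varphi_i^{(\alpha,\beta)}(E,\Pi)$. \textbf{Step 4 (Consistency and non-bridge nodes).} EN handles the edgeless case and, combined with IUN (which says a non-bridge node's value equals its value in the empty graph restricted to its empty external neighborhood), forces $\phi_i = 0$ for non-bridge $i$, matching the convention for $\varphi^{(\alpha,\beta)}$. Finally check that the $(\alpha,\beta)$ extracted from the elementary graphs is the same across all graphs — this is automatic since $\alpha$ and $\beta$ are read off once and for all from two fixed small unanimity graphs.

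The main obstacle I anticipate is \textbf{Step 3}: converting the ConI and Ba axioms, which are stated as aggregate (summed over external neighbors) identities on the rather rigidly-defined unanimity and local unanimity graphs, into a clean separation of the $\alpha$-part (linear in the external degree of $i$) from the $\beta$-part (the harmonic-type sum $\sum_j d_j(S^{-i})/d_j(S^i)$). In particular one must verify that the class of (local) unanimity graphs is rich enough: that by varying the number of external neighbors of the pivot, their external degrees, and the internal edge being added, one generates enough independent linear equations to solve for both constants and to confirm the $1/d_j(S^i)$ weighting (rather than some other decreasing function of the internal degree). I would handle this by first treating $\beta$ in isolation — set up a minimal local unanimity graph where $\alpha$'s contribution is transparent, use ConI across a chain of internal-edge additions to force the telescoping sum $\sum_{t=1}^{a} 1/t$ structure already visible in the potential function of Definition~\ref{potential}, and then peel off $\alpha$ using Ba on the $d_i^{\text{ext}} = 1$ case. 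The bookkeeping tying the local reductions back to a general $(E,\Pi)$ (Step 1) is routine but must be done carefully to ensure no cross-terms between distinct external neighbors of $i$ are missed, which is exactly what the combination of IUN and Li is designed to prevent.
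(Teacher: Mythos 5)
Your proposal follows essentially the same route as the paper's proof: IUN and Li reduce the computation of $\phi_i$ to single-external-edge configurations, Ba on local unanimity graphs pins down $\alpha$, a ConI-based chain of internal-edge additions at the pivot (with A and UI guaranteeing a universal aggregate increment $\beta$, which Anonymity/IUN split equally among the pivot's external neighbors) yields the $\beta\, d_j(S^{-i})/d_j(S^i)$ term, EN together with IUN forces $\phi_i=0$ for non-bridge nodes, and UI collapses the case $r>2$ to $r=2$ — exactly the paper's Lemmas and STEPS 1--5. One minor correction that does not affect the argument: the $1/d_j(S^i)$ weighting comes from dividing the constant increment $\beta$ among the pivot's $d_j(S^i)$ external neighbors (so the elementary value is $\alpha+\beta\, b/a$, linear in $b$), not from a harmonic telescoping $\sum_{t\leq a}1/t$ as in the potential of Definition~\ref{potential}, which belongs to the Shapley-value computation rather than to the axiomatization.
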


We conclude the section by observing that, if we restrict to positive power indices and we modify Axiom \ref{Ax-ci} (ConI), we ensure that $\alpha$ and $\beta$ are non negative. 

\begin{axiom}[Edge positivity (EP)]\label{Ax-pen}
A power index $\phi$ satisfies the \emph{Edge positivity} property if, for every partition 
$\Pi \in \mathbf{\Pi}_r(N)$ with $r \leq n$ and $i\in N$, it holds that:
\begin{enumerate}[(i)]
    \item $\phi_i(\emptyset, \Pi) = 0$,
    \item $\phi_i(E,\Pi)\geq 0$, $\quad\forall E\in\mathbf{E}(N)$. 
\end{enumerate}
\end{axiom}
\begin{axiom}[Positive Constant Increment (PConI)]\label{Ax-pci}
A power index $\phi$ satisfies the \emph{Positive Constant Increment} axiom 
if Axiom \ref{Ax-ci} (ConI) is satisfied and, under the same notation, 
\[
\sum_{j \in N(i, S^{-i}| E, \Pi)} \big[ \phi_j(E \cup \{\{i,k\}\}, \Pi) - \phi_j(E, \Pi) \big]
\geq0.
\]
\end{axiom}
Then next theorem characterized the $(\alpha,\beta)$-power index introduced in Section \ref{sect1}. 
\begin{theorem}\label{mainp}
Let $\phi$ be a power index satisfying Axioms~\ref{Ax-pen} (EP), \ref{Ax-ld} (IUN), \ref{Ax-a} (A), \ref{Ax-l} (Li), \ref{Ax-UI} (UI), \ref{Ax-pci} (PConI), and \ref{Ax-Ba} (Ba). Then, there exists a vector $(\alpha,\beta) \in \mathbb{R}^2_{\geq 0}$ such that $\phi = \varphi^{(\alpha,\beta)}$.
\end{theorem}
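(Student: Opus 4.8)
The plan is to leverage Theorem~\ref{main}, which already establishes that any power index satisfying EN, IUN, A, Li, UI, ConI, and Ba must equal $\varphi^{(\alpha,\beta)}$ for some $(\alpha,\beta)\in\mathbb{R}^2$. Since Axiom~\ref{Ax-pen} (EP) includes $\phi_i(\emptyset,\Pi)=0$, which is exactly Axiom~\ref{Ax-en} (EN), and Axiom~\ref{Ax-pci} (PConI) includes all of Axiom~\ref{Ax-ci} (ConI) by definition, the hypotheses of Theorem~\ref{mainp} imply the hypotheses of Theorem~\ref{main}. Therefore $\phi = \varphi^{(\alpha,\beta)}$ for some real pair $(\alpha,\beta)$, and it only remains to show that the extra positivity conditions force $\alpha\geq 0$ and $\beta\geq 0$.

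The key step is then to pin down the signs of $\alpha$ and $\beta$ by evaluating $\varphi^{(\alpha,\beta)}$ on a small number of carefully chosen networks. For $\beta$: I would take a \emph{local unanimity graph} (or more generally a configuration satisfying the hypotheses of PConI) in which node $i$ has maximum external degree and internal degree $0$, and in which adding the edge $\{i,k\}$ to an internal node $k$ creates, for some external neighbor $j\in N(i,S^{-i})$, a strictly positive change $\tfrac{d_j(S^{-i})}{d_j(S^i)}$ that is proportional to $\beta$; concretely, after adding $\{i,k\}$ we get $d_j(S^i)$ going from its old value so that the aggregate sum $\sum_{j\in N(i,S^{-i})}\bigl[\varphi_j^{(\alpha,\beta)}(E\cup\{\{i,k\}\},\Pi)-\varphi_j^{(\alpha,\beta)}(E,\Pi)\bigr]$ equals a strictly positive multiple of $\beta$. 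Then PConI forces that quantity to be $\geq 0$, hence $\beta\geq 0$. For $\alpha$: I would use Axiom~\ref{Ax-pen}(ii), $\varphi_i^{(\alpha,\beta)}(E,\Pi)\geq 0$ for all edge sets $E$. Choosing a network where some bridge node $i$ has $d_i(S^{-i})=1$ and its unique external neighbor $j$ has $d_j(S^{-i})=1$ and a large internal degree $d_j(S^i)=M$, we get $\varphi_i^{(\alpha,\beta)}=\alpha+\beta/M\geq 0$ for every $M$; letting $M\to\infty$ (i.e.\ choosing $M$ large enough, or taking a second network with a different $M$ and subtracting) yields $\alpha\geq 0$. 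Alternatively, once $\beta\ge 0$ is known, pick a network with $d_i(S^{-i})=1$ where the external neighbor has no other external edge, so $\varphi_i^{(\alpha,\beta)} = \alpha + \beta\cdot\frac{1}{d_j(S^i)}$, and combine with a network where $d_i(S^{-i})=1$ and $d_j(S^i)$ is as large as we like to isolate $\alpha\ge 0$.

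The main obstacle is making sure the witness networks used to extract the sign of $\beta$ genuinely satisfy the structural constraints in PConI — namely that $(E,\Pi)$ and $(E\cup\{\{i,k\}\},\Pi)$ are (local) unanimity graphs with $i$ of maximum external degree, and that $k\in S^i$ with $\{i,k\}\notin E$ — while still producing a change that is a \emph{strictly positive} multiple of $\beta$ rather than zero. One must check that the internal neighbors of the external neighbors $j$ behave correctly (the definition of unanimity graph constrains $d_j^{\mathrm{int}}$ for $j\in N(i,S^i)$ versus $j\notin N(i,S^i)$), so the added edge $\{i,k\}$ indeed changes some $d_j(S^i)$ appearing in the sum. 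Once a valid such pair of graphs is exhibited, the argument is a one-line computation. I would also remark that $(\alpha,\beta)=(0,0)$ trivially satisfies all axioms, consistent with the conclusion $(\alpha,\beta)\in\mathbb{R}^2_{\geq 0}$, so no further nondegeneracy needs to be addressed.
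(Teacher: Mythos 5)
Your overall strategy is sound and close in substance to the paper's: the paper does not invoke Theorem~\ref{main} as a black box but retraces its proof, observing that the constant $\alpha$ defined in STEP~1 equals $\phi_i(E,\Pi)$ on a minimal network and is therefore nonnegative by Axiom~\ref{Ax-pen}(ii), and that replacing Axiom~\ref{Ax-ci} by Axiom~\ref{Ax-pci} in Lemma~\ref{lemma:ConI+A+UI} forces $\beta\geq 0$. Your reduction (EP implies EN, PConI implies ConI, hence Theorem~\ref{main} applies and one only needs to pin the signs) is legitimate, and your $\beta$-argument is correct: for any admissible local unanimity witness (e.g.\ $\Pi=\{S_1,S_2\}$, $E=\{\{i,j\}\}$ with $i\in S_1$, $j\in S_2$, $k\in S_1\setminus\{i\}$), the computation in the proof of Theorem~\ref{teo:suff} shows the aggregate increment equals exactly $\beta$, so PConI yields $\beta\geq 0$.

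The gap is in your argument for $\alpha\geq 0$. The node set $N$ is fixed with $|N|=n$, so $d_j(S^i)=M$ is bounded by $n-1$ and ``letting $M\to\infty$'' is not available; subtracting two inequalities of the form $\alpha+\beta/M_1\geq 0$ and $\alpha+\beta/M_2\geq 0$ does not isolate $\alpha$; and the fallback ``once $\beta\geq 0$ is known, combine with large $d_j(S^i)$'' also fails, since $\beta\geq 0$ and $\alpha+\beta/M\geq 0$ are compatible with $\alpha<0$ (take $\alpha=-\beta/M$). The repair is immediate and is exactly what the paper's STEP~1 witness provides: choose the network with a single external edge $\{i,j\}$ in which the unique external neighbour $j$ has no neighbours in $S^{-i}$ (its only neighbour is $i$), so that $\varphi_i^{(\alpha,\beta)}(E,\Pi)=\alpha\cdot 1+\beta\cdot\frac{0}{1}=\alpha$, and Axiom~\ref{Ax-pen}(ii) gives $\alpha\geq 0$ directly, with no limiting device. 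With that substitution your proof is complete and equivalent to the paper's.
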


\section*{Concluding Remarks}
This paper developed a novel family of power indices designed for technological sharing networks, where firms interact  within sectors but do not compete across them. These indices are defined over undirected graphs with a priori unions and balance degree-based and rescaled eigenvector centrality via interpretable market coefficients. This structure reflects the dual nature of these environments: part collaborative, part competitive.
We analyzed structural properties of the indices—such as monotonicity—and highlighted how they capture subtle dependencies in network topology. While they are sensitive to local changes, we also observed a tendency for major actors to remain stable in the absence of strong cross-sectoral spillovers. Rankings induced by these indices offer meaningful insights into technological power concentration, especially in regimes of limited externalities.
A key theoretical contribution of this work lies in the cooperative game-theoretic foundation we provide: we show that the proposed indices coincide with the Shapley values of a family of transferable utility (TU) games defined on graphs with a priori unions. However, due to subadditivity, these games often exhibit an empty core, signaling structural instability in the cooperative arrangements they model.
Crucially, we complement this cooperative approach with a normative axiomatization of the indices. The axioms proposed are shown to be independent and reflect desirable principles. This axiomatic foundation not only clarifies the rationale behind the construction of the indices, but also strengthens their interpretability and applicability in comparative analysis across different network configurations.
Our work opens several promising avenues for future research. These include exploring dynamic settings in which coalition structures evolve endogenously in response to shifts in power distribution; extending the model to weighted networks that capture varying intensities or types of technological relationships; and further developing the axiomatic foundations of power indices on graphs with a priori unions—particularly in contexts where the sum-constant (the sum of indices over nodes is constant across graphs) axiom no longer holds.

\section*{Acknowldgements}
Earlier versions of this work were presented at the Annual Meeting of the Italian Association for Mathematics Applied to Social and Economic Sciences (AMASES), held at the Department of 
Economics, University of Palermo, in 2022, and at the 16th European meeting on game theory (SING19), hosted by the Department of Economics, University of Besançon, in 2024. 
We are indebted to the participants of these conferences, as well as to attendees of related  seminars, for their constructive feedback and insightful suggestions.

Francesco Ciardiello would like to express his sincere gratitude to  L\'{a}szl\'{o} K\'{o}czy 
and Stefano Moretti for their valuable comments and discussions, which contributed 
to the refinement of the ideas presented in this paper.

\section*{Declaration of Competing Interest}
The authors declare that they have no competing interests to disclose. This research was conducted independently and did not receive any specific grant from any commercial organization. Michele Aleandri is a member of the GNAMPA of the Istituto Nazionale di Alta Matematica (INdAM) and is supported by the Project of Significant National Interest – PRIN 2022 titled “Impact of the Human Activities on the Environment and Economic Decision Making in a Heterogeneous Setting: Mathematical Models and Policy Implications” (Cineca Code: 20223PNJ8K; CUP: I53D23004320008). Francesco Ciardiello acknowledges support from the University of Salerno through the Fondo di Ateneo per la Ricerca di Base (FARB) under projects 300399FRB23CIARD and 300399FRB24CIARD, both titled “Game Theory and Applications.”

\section*{Appendix: Proofs}\label{app:proofs1}

\begin{proof}[\bf Proof of Proposition~\ref{prop10}]
	Assume $\beta = \alpha$. By definition of the $(\alpha,\beta)$-power index, we have, $\forall (E,\Pi)\in\mathcal{G}(N)$:
	\begin{align*}
		\varphi_i^{(\alpha,\alpha)}(E,\Pi)
		&=  \alpha \left( d_i(S^{-i}) + \sum_{j \in N(i, S^{-i})} \frac{d_j(S^{-i})}{d_j(S^{i})} \right).
	\end{align*}
	Since $d_i(S^{-i}) = |N(i, S^{-i})|$, we can rewrite the expression as follows:
	\begin{align*}
		\varphi_i^{(\alpha,\alpha)}(E,\Pi)
		&= \alpha \left( \sum_{j \in N(i, S^{-i})} 1 + \sum_{j \in N(i, S^{-i})} \frac{d_j(S^{-i})}{d_j(S^{i})} \right) \\
		&= \alpha \sum_{j \in N(i, S^{-i})} \left( 1 + \frac{d_j(S^{-i})}{d_j(S^{i})} \right) \\
		&= \alpha \sum_{j \in N(i, S^{-i})} \frac{d_j(S^{i}) + d_j(S^{-i})}{d_j(S^{i})} \\
		&= \alpha \sum_{j \in N(i, S^{-i})} \frac{d_j}{d_j(S^{i})}.
	\end{align*}
	This concludes the proof.
\end{proof}

\begin{proof}[\bf Proof of Proposition~\ref{monot}]
	Let $(E,\Pi)\in\mathcal{G}(N)$ with \( i \in N \) be an incomplete bridge node, and let \( \ell \in \{1, \dots, r\} \) be such that
	\[
	\emptyset \subsetneq N(i, S_\ell) \subsetneq S_\ell.
	\]
	Then, there exist two nodes \( j, m \in S_\ell \) such that \( j \in N(i) \) and \( m \notin N(i) \). Consider the extended set of links \( E' := E \cup \{\{i, m\}\} \), and let \( (E', \Pi) \) be the updated graph with a priori unions.
	For the sake of simplicity, we define 
	$$a_i(E):= d_i(S^{-i}|E,\Pi), \quad b_i(E):=\dsum_{h \in N(i, S^{-i}|E,\Pi)} \dfrac{d_h(S^{-i}|E,\Pi)}{d_h(S^{i}|E,\Pi)}$$ 
	for each node $i\in N$. For each node \( h \in N \), let
	\[
	\varphi_h^{(\alpha,\beta)}(E, \Pi) = \alpha a_h(E) + \beta b_h(E).
	%, \qquad \varphi_h^{(\alpha,\beta)}(E', \Pi) = \alpha a_h(E') + \beta b_h(E').
	\]
	%where primed quantities refer to the modified network \( (E', \Pi) \), and unprimed ones to the original network \( (E, \Pi) \).
	First, observe that the number of neighbors of \( i \) outside its own union increases:
	\[
	a_i(E') = d_i(S^{-i}|E',\Pi) = d_i(S^{-i}|E,\Pi) + 1 = a_i(E) + 1 > a_i(E).
	\]
	Moreover, since \( N(i, S^{-i}|E',\Pi) = N(i, S^{-i}|E,\Pi) \cup \{m\} \), we have:
	\begin{align*}
		b_i(E') = & \sum_{h \in N(i, S^{-i}|E',\Pi)} \frac{d_h(S^{-i}|E',\Pi)}{d_h(S^i|E',\Pi)} \\
		& = \frac{d_m(S^{-i}|E,\Pi)}{d_m(S^i|E,\Pi) + 1} + \sum_{h \in N(i, S^{-i}|E,\Pi)} \frac{d_h(S^{-i}|E,\Pi)}{d_h(S^i|E,\Pi)} 
		= \frac{d_m(S^{-i}|E,\Pi)}{d_m(S^i|E,\Pi) + 1} + b_i(E) \\
		&\geq b_i(E).
	\end{align*}
	
	It follows that \( \varphi_i^{(\alpha,\beta)}(E', \Pi) > \varphi_i^{(\alpha,\beta)}(E, \Pi) \), if $\alpha>0$, thus proving the first part of the proposition.\\
	
	Now consider the case of node \( j \in S_\ell \) with \( j \in N(i) \), and assume \( \beta > 0 \).  In the modified network, the neighborhood of \( j \) outside its own union remains unchanged, hence:
	\[
	a_j(E') = d_j(S^{-j}|E',\Pi) = d_j(S^{-j}|E,\Pi) = a_j(E).
	\]
	For the second component:
	\[
	b_j(E') = \sum_{h \in N(j, S^{-j}|E',\Pi)} \frac{d_h(S^{-j}|E',\Pi)}{d_h(S^j|E',\Pi)} 
	= \frac{d_i(N \setminus S_\ell|E,\Pi)}{d_i(S_\ell|E,\Pi) + 1} 
	+ \sum_{\substack{h \in N(j, S^{-j}|E,\Pi) \\ h \neq i}} \frac{d_h(N \setminus S_\ell|E,\Pi)}{d_h(S_\ell|E,\Pi)}.
	\]
	By the definition of incomplete bridge, we know that \( d_i(N \setminus S_\ell|E,\Pi) > 0 \) we know that
	\[
	0\not=\frac{d_i(N \setminus S_\ell|E,\Pi)}{d_i(S_\ell|E,\Pi) + 1} 
	< \frac{d_i(N \setminus S_\ell|E,\Pi)}{d_i(S_\ell|E,\Pi)},
	\]
	and we conclude that \( b_j(E') < b_j(E) \), and hence
	$
	\varphi_j^{(\alpha,\beta)}(E', \Pi) < \varphi_j^{(\alpha,\beta)}(E, \Pi).
	$
	This completes the proof.
\end{proof}

\begin{proof}[\bf Proof of Proposition~\ref{thm change}]
	For the sake of simplicity, we define 
	$$a_i:= d_i(S^{-i}), \quad b_i:=\dsum_{h \in N(i, S^{-i})} \dfrac{d_h(S^{-i})}{d_h(S^{i})}$$ 
	for each node $i\in N$.
	Let $(E,\Pi)\in\mathcal{G}(N)$ and \( i, j \in N \) be such that \( b_i - b_j \neq 0 \), so that Assumption~\eqref{assumption change} holds for the node pair. We define the critical threshold:
	\[
	\rho^* = \rho^*(i, j) = -\frac{a_i - a_j}{b_i - b_j}>0.
	\]
	By the definition of the power index, we have:
	\[
	\varphi_i^{(\alpha,\beta)} - \varphi_j^{(\alpha,\beta)} = \alpha(a_i - a_j) + \beta(b_i - b_j).
	\]
	Dividing both sides by \( \alpha > 0 \), and letting \( \rho = \frac{\beta}{\alpha} \), we obtain:
	\[
	\varphi_i^{(\alpha,\beta)} - \varphi_j^{(\alpha,\beta)} = (a_i - a_j) + \rho(b_i - b_j).
	\]
	Hence, the sign of \( \varphi_i^{(\alpha,\beta)} - \varphi_j^{(\alpha,\beta)} \) depends on whether \( \rho \) is greater or smaller than \( \rho^* \). More precisely we obtain the following alternative cases:
	\begin{itemize}
		\item[-] If \( b_i - b_j > 0 \) then \(a_i - a_j< 0 \) and we have the following inequalities:
		\[
		\begin{cases}
			\varphi_i^{(\alpha,\beta)} > \varphi_j^{(\alpha,\beta)} & \text{if and only if } \rho > \rho^*, \\
			\varphi_i^{(\alpha,\beta)} < \varphi_j^{(\alpha,\beta)} & \text{if and only if } \rho < \rho^*.
		\end{cases}
		\]
		\item[-] If \( b_i - b_j < 0 \) then \(a_i - a_j > 0 \) and
		we have the following inequalities:
		\[
		\begin{cases}
			\varphi_i^{(\alpha,\beta)} > \varphi_j^{(\alpha,\beta)} & \text{if and only if } \rho < \rho^*, \\
			\varphi_i^{(\alpha,\beta)} < \varphi_j^{(\alpha,\beta)} & \text{if and only if } \rho > \rho^*.
		\end{cases}
		\]
	\end{itemize}
	This concludes the proof.
\end{proof}

\begin{proof}[\bf Proof of Proposition~\ref{prop29}]
	Let $(E,\Pi)\in\mathcal{G}(N)$ a graph that satisfies the restricted spillover property, we aim to show that \( b_i = 0 \) for every \( i \in N \), where
	\[
	b_i = \sum_{h \in N(i, S^{-i})} \frac{d_h(S^{-i})}{d_h(S^i)} .
	\]
	
	By the restricted spillover property, for each node \( i \in N \), there exists a unique union index \( \psi(i) \in \{1, \dots, r\} \) such that all neighbors of \( i \) belong to the union \( S_{\psi(i)} \), with \( S_{\psi(i)} \neq S^i \). In particular, this implies:
	\[
	N(i) \subseteq S_{\psi(i)} \subseteq S^{-i}.
	\]
	
	Let \( h \in N(i, S^{-i}) \). Since \( h \) is a neighbor of \( i \) and lies in \( S_{\psi(i)} \), the restricted spillover property applied to node \( h \) implies that all its neighbors are contained within a unique union \( S_{\psi(h)} \), distinct from its own union \( S^h \). 
	Noting that \( i \in N(h) \subseteq S_{\psi(h)} \) and \( i \in S^i \), we conclude that \( S_{\psi(h)} = S^i \). Therefore, all neighbors of node \( h \) are contained in \( S^i \), which implies that \( h \) has no neighbors in \( S^{-i} \). Consequently we have
	$d_h(S^{-i}) = 0.$ It follows that the contribution \( b_i \) to the power index of node \( i \) is zero.

	Hence, for every \( i \in N \), the power index reduces to \( \varphi_i^{(\alpha,\beta)}(E,\Pi) = \alpha d_i(S^{-i}) \), and the resulting ranking \( \mathcal{R}^0 \) is invariant with respect to the coefficient \( \alpha > 0 \). This concludes the proof.
\end{proof}

\begin{proof}[\bf Proof of Proposition \ref{cic}]
	Let \( (N, \nu; (E, \Pi), \alpha, \beta) \) be the TU-game associated to $(E, \Pi)\in\mathcal{G}(N)$ and $(\alpha, \beta)\in\mathbb{R}_{\geq0}^2$.We prove the first part of the Proposition.
	Let \( T \) and \( S \) be two disjoint coalitions within \( S_\ell \), for some \( \ell \in \{1, \dots, r\} \). By definition, we have:
	\[
	\begin{aligned}\label{ineqa}
		\nu(T \cup S) &= \alpha \sum_{i \in T \cup S} d_i^{\text{ext}} + \beta \sum_{h \in N(T \cup S) \cap (N \setminus S_\ell)} d_h(N\setminus S_\ell)\\
		&= \alpha \sum_{i \in T} d_i^{\text{ext}} + \alpha \sum_{i \in S} d_i^{\text{ext}} 
		+ \beta \sum_{h \in N(T) \cap (N \setminus S_\ell)}d_h(N\setminus S_\ell) 
		+ \beta \sum_{h \in N(S) \cap (N \setminus S_\ell)}d_h(N\setminus S_\ell) \\
		&\quad - \beta \sum_{h \in N(T) \cap N(S) \cap (N \setminus S_\ell)}d_h(N\setminus S_\ell) \\
		&\leq \nu(T) + \nu(S).
	\end{aligned}
	\]
	Now let \( T \) and \( S \) be any two disjoint coalitions in \( N \). The inequality  
	$
	\nu(T \cup S) \leq \nu(T) + \nu(S)
	$
	follows immediately from the linearity of \( \nu \) with respect to the elements of the partition \( \Pi \), and from the case just discussed.
	
	We prove the second part of the Proposition. Assume \( \beta = 0 \). Let \( T \) and \( S \) be two disjoint coalitions within \( S_\ell \), for some \( \ell \in \{1, \dots, r\} \). The additivity of $\nu_1$ is straightforward and implies $\nu(T\cup S)=\nu(T)+\nu(S)$.
	Now, consider two arbitrary disjoint coalitions \( T \) and \( S \) in \( N \). The value function \( \nu \) can be expressed as follows:
	\[
	\begin{aligned}
		\nu(T \cup S) &= \sum_{\ell=1}^r \nu_{1}\left((T \cup S) \cap S_\ell\right) = \sum_{\ell=1}^r \nu_{1}\left(T \cap S_\ell\right) + \sum_{\ell=1}^r \nu_1\left(S \cap S_\ell\right) \\
		&= \nu(T) + \nu(S).
	\end{aligned}
	\]
	The second equality comes from the previous case when coalition are within the same union.
	Therefore, the game is additive if \( \beta = 0 \).

	We prove the third part of the Proposition. 
	Suppose each coalition \( S_\ell \) in the partition \( \Pi \) consists of a single node \( i_\ell \), i.e., \( S_\ell = \{i_\ell\} \) for \( \ell \in \{1, \dots, n\} \). Let \( T \) and \( S \) be two arbitrary disjoint coalitions in \( N \). Then:
	
	\[
	\begin{aligned}
		\nu(T \cup S) &= \sum_{\ell=1}^{n} \nu\left((T \cup S) \cap S_\ell\right) = \sum_{\substack{\ell=1 \\ i_\ell \in T \cup S}}^{n}
		\nu\left(\{i_\ell\}\right) \\[0.3em]
		&= \sum_{\substack{\ell=1 \\ i_\ell \in T}}^{n} \nu\left(\{i_\ell\}\right) +
		\sum_{\substack{\ell=1 \\ i_\ell \in S}}^{n} \nu\left(\{i_\ell\}\right) = \nu(T) + \nu(S).
	\end{aligned}
	\]
	Thus, the game is additive when each union in \( \Pi \) is a singleton.
	
	We prove the fourth sentence of the proposition.
	Let $i,j$ be two nodes having in common a neighbors that lie externally to their own union.
	Let $T=\{i\}$ and $S=\{j\}$. We propose the same equalities in \eqref{ineqa}.
	We note that  the negative term is strictly negative since
	$N(T) \cap N(S) \cap (N \setminus S_\ell)\neq \emptyset$.
	Therefore, we obtain: $\nu(T\cup S)< \nu(T)+\nu(S)$.
	Then, additivity of the game is violated.
	
\end{proof}

\begin{proof}[\bf Proof of Proposition \ref{prop9}]
	Let \( (N, \nu; (E, \Pi), \alpha, \beta) \) be the TU-game associated to $(E, \Pi)\in\mathcal{G}(N)$ and $(\alpha, \beta)\in\mathbb{R}_{\geq0}^2$. Let \( T \subseteq N \) be a nonempty coalition that is not entirely contained in any union \( S_\ell \), with \( \ell \in \{1, \ldots, r\} \).
	The thesis is trivially true if \( T \) contains two elements.
	Assume by induction that \( \Delta_{\nu}(T) = 0 \) for all coalitions \( T \subseteq N \) with \( 2 \leq |T| < m \), which are not contained in any union \( S_\ell \).
	We now prove the claim for \( |T| = m \), under the assumption that \( T \) is not contained in any union \( S_\ell \).
	
	By the definition of Harsanyi dividends, we have:
	\[
	\Delta_{\nu}(T) = \nu(T) - \sum_{R \subsetneq T} \Delta_{\nu}(R).
	\]
	Since \( T \) is not fully contained in any \( S_\ell \), we can write:
	\[
	\nu(T) = \sum_{\ell=1}^{r} \nu(T \cap S_\ell),
	\]
	and using the linearity of the decomposition, we obtain:
	\[
	\sum_{R \subsetneq T} \Delta_{\nu}(R) = \sum_{\ell=1}^{r} \sum_{R \subsetneq T \cap S_\ell} \Delta_{\nu}(R),
	\]
	where the inner sum uses the inductive hypothesis (since \( |R| < m \)) and the fact that each \( T \cap S_\ell \subsetneq T \).
	
	Therefore, we can write:
	\[
	\Delta_{\nu}(T) = \sum_{\ell=1}^{r} \left( \nu(T \cap S_\ell) - \sum_{R \subsetneq T \cap S_\ell} \Delta_{\nu}(R) \right).
	\]
	Observe that for each \( \ell \), the term inside the sum equals \( \Delta_{\nu}(T \cap S_\ell) \). But since \( T \cap S_\ell \subsetneq T \) and \( |T \cap S_\ell| < m \), by the inductive hypothesis we have \( \Delta_{\nu}(T \cap S_\ell) = 0 \).
	
	Hence:
	\[
	\Delta_{\nu}(T) = \sum_{\ell=1}^{r} \Delta_{\nu}(T \cap S_\ell) = 0.
	\]
	This completes the induction and proves the claim.
\end{proof}

To prove Theorem \ref{pot} we premise the following 
key Lemma.
\begin{lemma}\label{lemmae2}
	Let \( (N, \nu; (E, \Pi), \alpha, \beta) \in \mathcal{U}(N) \) be the proposed TU-game, where \( \Pi = \{S_1, \dots, S_r\} \) with \( r \geq 2 \). Then, for every \( T \subseteq N \), the following identity holds:
	\begin{equation}
		\label{lemma201}
		\sum_{i \in T} \left[ P(T) - P(T \setminus \{i\}) \right] = \nu(T).
	\end{equation}
\end{lemma}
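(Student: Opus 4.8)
The plan is to reduce \eqref{lemma201} to the case of a coalition contained in a single union, and then to exploit the harmonic-sum structure that has been deliberately built into $\lambda_h$ in \eqref{kernel1}. First I would invoke the additive decomposition $P(T)=\sum_{\ell=1}^{r}P(T\cap S_\ell)$ from \eqref{kernel2} together with $\nu(T)=\sum_{\ell=1}^{r}\big(\nu_1(T\cap S_\ell)+\nu_2(T\cap S_\ell)\big)$. Since every $i\in T$ lies in exactly one union, the marginal contribution $P(T)-P(T\setminus\{i\})$ only changes the summand indexed by that union, so the left-hand side of \eqref{lemma201} splits as $\sum_{\ell=1}^{r}\sum_{i\in T\cap S_\ell}\big[P(T\cap S_\ell)-P((T\cap S_\ell)\setminus\{i\})\big]$. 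It therefore suffices to prove, for a fixed $\ell$ and every $T\subseteq S_\ell$, the single-union identity $\sum_{i\in T}\big[P(T)-P(T\setminus\{i\})\big]=\nu_1(T)+\nu_2(T)$.

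Next I would split the kernel \eqref{kernel} into its $\alpha$-part $\alpha\sum_{j\in T}d_j(S^{-j})$ and its $\beta$-part $\beta\sum_{h\in N\setminus S_\ell}\lambda_h(T)$ and handle them separately. The $\alpha$-part is immediate: for $j\in T\subseteq S_\ell$ one has $S^{-j}=N\setminus S_\ell$, and the discrete derivative telescopes to $\alpha\sum_{i\in T}d_i(N\setminus S_\ell)=\nu_1(T)$. For the $\beta$-part, swapping the order of summation gives $\beta\sum_{h\in N\setminus S_\ell}\sum_{i\in T}\big[\lambda_h(T)-\lambda_h(T\setminus\{i\})\big]$, and I would fix $h$ and evaluate the inner sum. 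The crucial observation is that $d_h(T\setminus\{i\})$ equals $d_h(T)-1$ when $i$ is a neighbour of $h$ in $T$ and equals $d_h(T)$ otherwise; hence $\lambda_h(T)-\lambda_h(T\setminus\{i\})$ vanishes unless $i\in N(h)\cap T$, in which case it equals $\tfrac{1}{d_h(T)}\,d_h(N\setminus S_\ell)$ — precisely the top term of the harmonic sum defining $\lambda_h$ in \eqref{kernel1} (this also covers the boundary case $d_h(T)=1$, where $\lambda_h(T\setminus\{i\})=0$). There are exactly $d_h(T)$ indices $i\in T$ with $i\in N(h)$, so the inner sum collapses to $d_h(N\setminus S_\ell)$ whenever $d_h(T)\geq 1$, and to $0$ when $d_h(T)=0$.

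Finally, summing over $h$, the set $\{h\in N\setminus S_\ell:\ d_h(T)\geq 1\}$ is exactly $N(T)\cap(N\setminus S_\ell)$, so the $\beta$-part equals $\beta\sum_{h\in N(T)\cap(N\setminus S_\ell)}d_h(N\setminus S_\ell)=\nu_2(T)$. Adding the two parts yields the single-union identity, and the reduction in the first paragraph then gives \eqref{lemma201} in full generality. The only delicate point is the bookkeeping of the inner sum for the $\beta$-part, namely verifying that deleting any one neighbour of $h$ strips off exactly the top $\tfrac1t$-term of $\lambda_h$ and that the number of such neighbours is $d_h(T)$; but this cancellation is exactly what the definition of $\lambda_h$ was engineered to produce, so I do not expect a genuine obstacle — only careful attention to the degenerate cases $d_h(T)\in\{0,1\}$ and to the fact that $i$ itself always belongs to $T$ in the sum.
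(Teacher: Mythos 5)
Your proposal is correct and follows essentially the same route as the paper's proof: split $P$ into its $\alpha$- and $\beta$-parts, observe that $\lambda_h(T)-\lambda_h(T\setminus\{i\})$ equals $d_h(N\setminus S_\ell)/d_h(T)$ exactly when $i\in N(h)\cap T$ (and $0$ otherwise), collapse the double sum using that there are $d_h(T)$ such neighbours, and use the union-wise additivity of $P$ and $\nu$ to pass to general $T$. The only difference is cosmetic — you perform the reduction to a single union first, while the paper does the single-union computation first and invokes additivity at the end.
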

\begin{proof}
	If \( T = \emptyset \), then \eqref{lemma201} trivially holds since both sides are zero.\\
	Let \( \ell \in \{1, \dots, r\} \), and let \( T \subseteq S_\ell \) be a non-empty set and write $ P(T) = \alpha A(T) + \beta B(T)$, where $A(T) = \sum_{j \in T} d_j(S^{-j})$ and $B(T) = \sum_{h \in N \setminus S_\ell} \lambda_h(T)$. Fix \( i \in T \). By construction, we have:
	\[
	A(T) - A(T \setminus \{i\}) = d_i(S^{-i}).
	\]
	Now consider any \( h \in N \setminus S_\ell \). Then:
	\[
	\lambda_h(T) - \lambda_h(T \setminus \{i\}) =
	\begin{cases}
		0& \text{if } h \notin N(i), \\
		&\\
		\dfrac{d_h(N \setminus S_\ell)}{d_h(T)}& \text{if } h \in N(i).
	\end{cases}
	\]
	Thus, the variation in \( B \) is given by:
	\[
	B(T) - B(T \setminus \{i\}) = \sum_{h \in N(i) \cap (N \setminus S_\ell)} \frac{d_h(N \setminus S_\ell)}{d_h(T)}.
	\]
	Summing over all \( i \in T \), we obtain:
	\begin{equation}
		\label{lemma202}
		\begin{aligned}
			\sum_{i \in T} \left[ P(T) - P(T \setminus \{i\}) \right] &= 
			\alpha \sum_{i \in T} \left[ A(T) - A(T \setminus \{i\}) \right]
			+ \beta \sum_{i \in T} \left[ B(T) - B(T \setminus \{i\}) \right] \\
			&= \alpha \sum_{i \in T} d_i(S^{-i}) + 
			\beta \sum_{i \in T} \sum_{h \in N(i) \cap (N \setminus S_\ell)} \frac{d_h(N \setminus S_\ell)}{d_h(T)} \\
			&= \alpha \sum_{i \in T} d_i^{\text{ext}} +
			\beta \sum_{h \in N(T) \cap (N \setminus S_\ell)} d_h(N \setminus S_\ell) \\
			&= \nu(T).
		\end{aligned}
	\end{equation}
	Now, consider a general coalition \( T \subseteq N \), with \( T \neq \emptyset \). From the additivity property of $P$ and  \(\nu \) over disjoint unions, we have:
	\begin{equation}
		\label{lemma203}
		\begin{aligned}
			\sum_{i \in T} \left[ P(T) - P(T \setminus \{i\}) \right]
			&= \sum_{i \in T} \sum_{\ell=1}^{r} \left[ P(T \cap S_\ell) - P(T \setminus \{i\}) \cap S_\ell) \right] \\
			&= \sum_{\ell = 1}^{r} \nu(T \cap S_\ell) \\
			&= \nu(T),
		\end{aligned}
	\end{equation}
	which concludes the proof.
\end{proof}

\begin{proof}[\bf Proof of Proposition \ref{pot}]
	The proof is an immediate consequence of  Theorem \ref{mainre}  
	and  Lemma \ref{lemmae2}. 
\end{proof}

\begin{proof}[\bf Proof of Theorem~\ref{thmshapley}]
	Let \( (N, \nu; (E, \Pi), \alpha, \beta) \in \mathcal{U}(N) \) be the TU-game associated to $(E, \Pi)\in\mathcal{G}(N)$ and $(\alpha, \beta)\in\mathbb{R}_{\geq0}^2$ and take \( i \in N \). By Theorem~\ref{mainre} and Proposition~\ref{pot}, the Shapley value of player \( i \) can be computed as the marginal contribution of the potential function:
	\[
	Sh_i(N, \nu) = \mathcal{P}(N, \nu) - \mathcal{P}(N \setminus \{i\}, \nu).
	\]
	Using the fact that \( \mathcal{P}(N, \nu) = P(N) \) and \( \mathcal{P}(N \setminus \{i\}, \nu) = P(N \setminus \{i\}) \), we obtain:
	\[
	Sh_i(N, \nu) = P(N) - P(N \setminus \{i\}) = P(S^i) - P(S^i \setminus \{i\}).
	\]
	By the explicit definition of the potential function \( P \), we then have:
	\[
	Sh_i(N, \nu) = \alpha \, d_i(S^{-i}) + \beta \sum_{j \in N(i, S^{-i})} \frac{d_j(S^{-i})}{d_j(S^i)},
	\]
	which coincides with \( \varphi_i^{(\alpha,\beta)}(E,\Pi)\). 
\end{proof}

\begin{proof}[\bf Proof of Proposition \ref{totalpower}.]
	Let \( (N, \nu; (E, \Pi), \alpha, \beta) \in \mathcal{U}(N) \) be the TU-game associated to $(E, \Pi)\in\mathcal{G}(N)$ and $(\alpha, \beta)\in\mathbb{R}_{\geq0}^2$. By the efficiency property of the Shapley value, the total value allocated to all players must equal the value of the grand coalition:
	\begin{align*}
		\sum_{i=1}^n \varphi_i^{(\alpha,\beta)}(E,\Pi)
		&= \nu(N) \\
		&= \sum_{\ell=1}^r \left[ \nu_1(S_\ell) + \nu_2(S_\ell) \right] \\
		&= \alpha \sum_{\ell=1}^r \sum_{j \in S_\ell} d_j^{\mathrm{ext}} 
		\; + \; \beta \sum_{\ell=1}^r \sum_{\substack{h \in N \setminus S_\ell \\ \ell \in u(h)}} d_h(N \setminus S_\ell) \\
		&= 2\alpha\, \epsilon^{\mathrm{ext}} 
		\; + \; \beta \sum_{\ell=1}^r \sum_{\substack{h \in N \setminus S_\ell \\ \ell \in u(h)}} d_h(N \setminus S_\ell).
	\end{align*}
\end{proof}

%%%%%%% proofs axioms

\begin{proof}[\bf Proof of Theorem \ref{teo:suff}]
	
	Axiom \ref{Ax-en} (EN) is satisfied by definition.
	
	Let $(E,\Pi)\in\mathcal{G}(N)$, with $\Pi\in\mathbf{\Pi}_r(N)$ and $r\leq n$, and let $i\in N$. Define the set $E_i^*$ as in Axiom \ref{Ax-ld} (IUN). Note that the number of neighbors of $i$ outside of $S^i$ remains unchanged in both $(E,\Pi)$ and $(E_i^*,\Pi)$, hence 
	$d_i(S^i)$ is constant. Furthermore, for every $j\in N(i, S^{-i}|E,\Pi)$, the set $E_i^*$ includes all edges $\{j,h\}$ where $h \in N(j|E,\Pi)$, so both $d_j(S^{-i})$ and $d_j(S^i)$ are preserved. Therefore, Axiom \ref{Ax-ld} (IUN) holds.

	Axiom \ref{Ax-a} (A) follows directly, since any permutation $\sigma:N\to N$ does not alter the number of connections of each node, and the function $d_{\cdot}(\cdot)$ is invariant under renaming of nodes.

	Now assume $N(i, S^{-i}) = \{i_1,\ldots,i_p\}$ for some $p\geq 1$. Define the family $\{(E_h^i, \Pi)\}_{h=1}^p$ as in Axiom \ref{Ax-l} (Li). By construction,
	\[
	\varphi^{(\alpha,\beta)}_i(E,\Pi) = \sum_{h=1}^p \left(\alpha + \beta \frac{d_{i_h}( S^{-i})}{d_{i_h}( S^i)}\right) = \sum_{h=1}^p \varphi^{(\alpha,\beta)}_i(E_h^i,\Pi),
	\]
	thus Axiom \ref{Ax-l} (Li) is satisfied.
	
	Consider $i\in N$ and $(E,\Pi)\in\mathcal{G}(N)$ such that $E \neq \emptyset$, $\Pi = (S_1,\ldots,S_r) \in \bm{\Pi}_r$, with $3 \leq r \leq n$. Without loss of generality, suppose $i\in S_1$ and $N(i, S_r) = \emptyset$. Let $\tilde{\Pi} = (S_1, \ldots, S_{r-1} \cup S_r)$. Since
	\[
	N(i, S^{-i}|E, \Pi) = N(i, S^{-i}|E, \tilde{\Pi}),
	\]
	we have $d_{i}(S^{-i}|E,\Pi) = d_{i}(S^{-i}|E,\tilde{\Pi})$. Moreover, for all $j \in N(i, S^{-i}|E,\Pi)$, the values $d_j (S^i)$ and $d_j( S^{-i})$ remain unchanged, hence $\varphi_i^{(\alpha,\beta)}(E,\Pi) = \varphi_i^{(\alpha,\beta)}(E,\tilde{\Pi})$. Therefore, Axiom \ref{Ax-UI} (UI) is satisfied.
	
	Now consider a unanimity graph $(E,\Pi)$ and nodes $i,k \in N$ satisfying the conditions of Axiom \ref{Ax-ci} (ConI). For each $j \in N(i, S^{-i})$, we compute:
	
	\begin{align*}
		\varphi_j^{(\alpha,\beta)}(E \cup \{i,k\}, \Pi) 
		&= \alpha \, d_j (S^{-j}|E \cup \{i,k\}, \Pi) \notag \\
		&\quad 
		+ \beta \sum_{h \in N(j, S^{-j}|E \cup \{i,k\}, \Pi)} 
		\frac{d_h(S^{-j}|E \cup \{i,k\}, \Pi)}{d_h (S^j|E \cup \{i,k\}, \Pi)} \notag \\
		&= \alpha \, d_j (S^{-j}|E, \Pi) \notag \\
		&\quad 
		+ \beta \sum_{h \in N(j, S^{-j}|E, \Pi)} 
		\frac{d_h(S^{-j}|E, \Pi) + \mathbf{1}_{h = i}}{d_h (S^j|E, \Pi)} \notag \\
		&= \varphi_j^{(\alpha,\beta)}(E, \Pi) 
		+ \beta \frac{1}{d_i (S^j|E, \Pi)}.
	\end{align*}

	Summing over all $j \in N(i, S^{-i}|E, \Pi)$, we obtain:
	\begin{align*}
		\sum_{j \in N(i, S^{-i}|E \cup \{i,k\}, \Pi)} \varphi_j^{(\alpha,\beta)}(E \cup \{i,k\}, \Pi) &= \sum_{j \in N(i, S^{-i}|E, \Pi)} \varphi_j^{(\alpha,\beta)}(E, \Pi) \\
		& + \sum_{j \in N(i, S^{-i}|E, \Pi)} \beta \frac{1}{d_i (S^j|E, \Pi)} \\
		&= \sum_{j \in N(i, S^{-i}|E, \Pi)} \varphi_j^{(\alpha,\beta)}(E, \Pi) + \beta.
	\end{align*}
	This implies that the difference is   constant for any $(E,\Pi)$ and then Axiom \ref{Ax-ci} (ConI) holds.
	
	Finally, consider a local unanimity graph $(E, \Pi)$ and $i\in N$ as in Axiom \ref{Ax-Ba} (Ba). By construction, $\varphi_i^{(\alpha,\beta)}(E, \Pi) = \alpha 
	d_i (S^{-i})$, since the neighbors of $i$ in $S^{-i}$ have no further connections. Likewise, for any $j \in N(i, S^{-i})$, we have $\varphi_j^{(\alpha,\beta)}(E,\Pi) = \alpha$, since $i$ has no neighbors in $S^i$. These conditions directly imply the validity of Axiom \ref{Ax-Ba} (Ba).
\end{proof}

\begin{proof}[\bf Proof of Theorem \ref{indipendence}]
	We provide specific power indices that violate the axiom under consideration while still satisfying all the remaining axioms. The list of each axiom under consideration follows:    
	\noindent\vspace{.5\baselineskip} \\
	\textbf{(EN):}  
	Consider the power index $\phi$ defined, $\forall i\in N$ and $\forall(E,\Pi)\in\mathcal{G}(N)$, by 
	\begin{equation*}
		\phi_i(E,\Pi)=\begin{cases}
			1, & \text{ if } E=\emptyset;\\
			\phi_{i}^{(\alpha,\beta)}(E,\Pi), & \text{ otherwise.}
		\end{cases}
	\end{equation*}
	This power index does not satisfy Axiom \ref{Ax-en} (EN) and  satisfies the other axioms in virtue of Theorem \ref{teo:suff}.
	
	\noindent\vspace{.5\baselineskip}\\
	\textbf{(A):} 
	Set $\mathcal{G}^A(N)\subset \mathcal{G}(N)$ the set of graphs with a priori union $(E,\Pi)$ such that $ \{1,2\},\{1,3\}\in E$, $S^2=S^3\neq S^1$. Define the power index $\phi$ in the following way:
	\begin{eqnarray}
		\phi_i(E,\Pi) &=& d_i^{\text{ext}}(E,\Pi), \quad \text{for each } i \in N \text{ if } (E,\Pi) \in \mathcal{G}(N)\setminus \mathcal{G}^A(N); \\
		\phi_i(E,\Pi) &=& 
		\begin{cases}
			d_2^{\text{ext}}(E,\Pi) + \frac{1}{2}, & \text{if } i=2, \\
			d_3^{\text{ext}}(E,\Pi) - \frac{1}{2}, & \text{if } i=3, \\
			d_i^{\text{ext}}(E,\Pi), & \text{otherwise,}
		\end{cases}
		\quad \text{if } (E,\Pi) \in \mathcal{G}^A(N).
	\end{eqnarray}

	The solution $\phi$ does not satisfy Axiom \ref{Ax-a} (A). Consider $(\tilde{E},\tilde{\Pi})$ where $\tilde{E} = \big\{\{1,2\}, \{1,3\}\big\}$, and the coalition structure satisfies the following:  $S^2 = S^3 \neq S^1$. Let $\sigma$ be a permutation such that $\sigma(2) = 3$, $\sigma(3) = 2$, and $\sigma(i) = i$ for all $i \neq 2,3$. After simple computations, we obtain:
	\[
	\phi_2(\tilde{E},\tilde{\Pi}) = \frac{3}{2} \quad \text{while} \quad \phi_{\sigma(2)}(\sigma \tilde{E}, \sigma \tilde{\Pi}) = \phi_3(\tilde{E},\tilde{\Pi}) = \frac{1}{2}.
	\]
	This discrepancy violates the anonymity requirement imposed by Axiom \ref{Ax-a}.
	
	By construction, the solution $\phi$ satisfies Axioms \ref{Ax-en} (EN), \ref{Ax-ld} (IUN), \ref{Ax-UI} (UI), and \ref{Ax-ci} (ConI), since $d_i^{\text{ext}}(E,\Pi) = d_i^{\text{ext}}(E_i^*,\Pi)$ and $d_i^{\text{ext}}(E,\Pi) = d_i^{\text{ext}}(E \cup \{i,k\},\Pi)$. It is straightforward to verify that $d_i^{\text{ext}}$ satisfies Axiom \ref{Ax-l} (Li) for all $i \in N$. 
	
	Observe that if $(E,\Pi) \notin \mathcal{G}^A(N)$, then $(E_h^i,\Pi) \notin \mathcal{G}^A(N)$ for every $i \in N$, where $E_h^i$ is defined as in Axiom \ref{Ax-l} (Li). If instead $(E,\Pi) \in \mathcal{G}^A(N)$, then $(E_h^i,\Pi) \notin \mathcal{G}^A(N)$ for all $i \neq 2,3$, while for $i = 2,3$, the family $\{(E_h^i,\Pi)\}_h$ defined in Axiom \ref{Ax-l} (Li) contains exactly one element in $\mathcal{G}^A(N)$. In this case, we have
	\[
	\phi_i(E,\Pi) = \sum_{h=1}^p \phi_i(E_h^i,\Pi) = d_i^{\text{ext}}(E,\Pi) + \frac{1}{2}(-2i + 5),
	\]
	which shows that Axiom \ref{Ax-l} (Li) is satisfied.

	Finally, consider Axiom \ref{Ax-Ba} (Ba). If a local unanimity graph is not in $\mathcal{G}^A(N)$, then $d_i^{\text{ext}}$ satisfies the axiom for all $i \in N$. If a local unanimity graph is in $\mathcal{G}^A(N)$, then:
	\[
	\sum_{j \in N(1,S^{-1})} \left[\phi_j(E,\Pi) - \phi_1(E,\Pi)\right] = \sum_{j \in N(1,S^{-1})} \left[d_j^{\text{ext}} - d_1^{\text{ext}}\right] = 0,
	\]
	thus confirming that Axiom \ref{Ax-Ba} (Ba) is satisfied in both cases.

	\noindent\vspace{.5\baselineskip}\\
	\textbf{(IUN):} 
	Define, for every $i \in N$ and every $(E, \Pi) \in \mathcal{G}(N)$, the power index $\phi$ as:
	\begin{equation*}
		\phi_i(E,\Pi) = \sum_{h \in N(i, S^{-i})} \sum_{j \in N(h, S^{-i})} d_j,
	\end{equation*}
	with the convention that the value of the sum is zero whenever $N(i, S^{-i}) = \emptyset$ or $N(h, S^{-i}) = \emptyset$. 
	
	Consider the set of agents $N = \{1, \ldots, 4\}$, the set of edges $E = \big\{\{1,2\}, \{2,3\}, \{3,4\}\big\}$, and the partition $\Pi = \big\{\{1\}, \{2,4\}, \{3\}\big\}$. Let $E_1^* = \big\{\{1,2\}, \{2,3\}\big\}$ be the set defined as in Axiom \ref{Ax-ld} (IUN). In this case, we obtain $\phi_1(E, \Pi) = 2$ and $\phi_1(E_1^*, \Pi) = 1$. Therefore, Inter-Union Neighborhood property is violated.
	
	It is straightforward to verify that Axiom \ref{Ax-en} (EN) is satisfied. Consider any permutation $\sigma: N \to N$. Since $d_i = d_{\sigma(i)}$ and $|N(h, S^{-\sigma(i)})| = |N(h, S^{-i})|$ for all $i, h \in N$, it follows that $\phi$ satisfies Axiom \ref{Ax-a} (A).
	
	Next, take any $i \in N$, and let $\{(E_h^i, \Pi)\}_{h=1}^p$ be the family defined as in Axiom \ref{Ax-l} (Li). It holds that $\phi_i(E_h^i, \Pi) = \sum_{j \in N(i_h, S^{-i})} d_j$, hence Axiom \ref{Ax-l} (Li) (Linearity) is satisfied.
	
	Axiom \ref{Ax-UI} (UI) is satisfied by construction, as the index $\phi$ is independent of the number of unions in the partition.
	
	Finally, observe that the unanimity graphs involved in the definitions of Axioms \ref{Ax-ci} (ConI) and \ref{Ax-Ba} (Ba)  consist at most of paths with two edges. In such cases, for every node $i$, $\phi_i(E, \Pi) = 0$. Thus, both axioms are trivially satisfied.

	\noindent\vspace{.5\baselineskip}\\
	\textbf{(Li):} 
	Define the power index $\phi$ as follows:
	\begin{equation*}
		\phi_i(E, \Pi) = 
		\begin{cases}
			\left( \dfrac{\sqrt{d_i (S^{-i})}}{\sum_{j \in N(i, S^{-i})} d_j (S^{i})} \right)^2, & \text{if } N(i, S^{-i}) \neq \emptyset, \\
			0, & \text{otherwise}.
		\end{cases}
	\end{equation*}
	
	Let $N = \{1, 2, 3\}$, and consider the unanimity coalition structure defined by $E = \big\{ \{1,2\}, \{1,3\} \big\}$ and $\Pi = \big\{ \{1\}, \{2,3\} \big\}$. Define $E_1^1 = \big\{ \{1,2\} \big\}$ and $E_2^1 = \big\{ \{1,3\} \big\}$, as in Axiom \ref{Ax-l} (Li). We compute:
	\[
	\phi_1(E, \Pi) = \frac{1}{2}, \quad \text{while} \quad \phi_1(E_1^1, \Pi) + \phi_1(E_2^1, \Pi) = 1 + 1 = 2.
	\]
	Hence, Axiom \ref{Ax-l} (Li) is violated.
	
	It is straightforward to verify that Axioms \ref{Ax-en} (EN), \ref{Ax-ld} (IUN), \ref{Ax-a} (A), \ref{Ax-UI} (UI), and \ref{Ax-ci} (ConI) are satisfied by construction.
	
	Now consider a local unanimity graph $(E, \Pi)$, and without loss of generality, suppose that the external degree of node 1 is $d_1^{\text{ext}} = \epsilon^{\text{ext}} = m\geq 1$. A simple calculation yields:
	\[
	\phi_1(E, \Pi) = \frac{1}{m}, \quad \text{and} \quad \phi_j(E, \Pi) = \frac{1}{m^2}, \quad \forall j \in S^{-1}.
	\]
	Thus, the total contribution of the neighbors of agent 1 outside its union is:
	\[
	\sum_{j \in N(1, S^{-1})} \phi_j(E, \Pi) = \frac{1}{m},
	\]
	which implies that Axiom \ref{Ax-Ba} (Ba) is satisfied.

	\noindent\vspace{.5\baselineskip}\\
	\textbf{(UI):} 
	Define the power index $\phi$ as follows:
	\begin{equation*}
		\phi_i(E, \Pi) = |\Pi|\,d_i^{\text{ext}}.
	\end{equation*}
	
	By construction, $\phi$ does not satisfy Axiom \ref{Ax-UI} (UI), as it explicitly depends on the number of unions. However, it satisfies Axioms \ref{Ax-en} (EN), \ref{Ax-ld} (IUN), \ref{Ax-a} (A), \ref{Ax-l} (Li), \ref{Ax-ci} (ConI), and \ref{Ax-Ba} (Ba).

	\noindent\vspace{.5\baselineskip}\\
	\textbf{(ConI)}: 
	Define the power index $\phi$, for every $i \in N$ and every $(E, \Pi) \in \mathcal{G}(N)$, as follows:
	\begin{equation*}
		\phi_i(E, \Pi) = \sum_{j \in N(i, S^{-i})} d_j.
	\end{equation*}
	
	By construction, $\phi$ satisfies Axioms \ref{Ax-en} (EN),
	\ref{Ax-ld} (IUN),
	\ref{Ax-a} (A), \ref{Ax-l} (Li),  \ref{Ax-UI} (UI), and \ref{Ax-Ba} (Ba).
	
	Let $(E, \Pi)$ be a unanimity graph such that $\epsilon^{\text{ext}} = d_i^{\text{ext}}$. Then we have:
	\begin{align*}
		\sum_{j \in N(i, S^{-i}|E \cup \{i,k\}, \Pi)} \phi_j (E \cup \{i,k\}, \Pi)
		&= \sum_{j \in N(i, S^{-i}|E \cup \{i,k\}, \Pi)} d_j (S^{-i}|E \cup \{i,k\}, \Pi) \\
		&= \sum_{j \in N(i, S^{-i}|E, \Pi)} d_j (S^{-i}|E, \Pi) + \sum_{j \in N(i, S^{-i}|E, \Pi)} 1 \\
		&= \sum_{j \in N(i, S^{-i}|E, \Pi)} \phi_j (E, \Pi) + d_i (S^{-i}|E).
	\end{align*}
	
	However, the term $d_i (S^{-i}|E)$ depends on the choice of $E$ %is not constant across the class of unanimity graphs 
	defined in Axiom \ref{Ax-ci} (ConI). As a result, Axiom \ref{Ax-ci} (ConI) is not satisfied.

	\noindent\vspace{.5\baselineskip}\\
	\textbf{(Ba)}:  
	Let $\mathcal{G}^A(N) \subset \mathcal{G}(N)$ denote the subset of graphs with a priori union $(E,\Pi)$ such that $\{1,2\}, \{1,3\} \in E$ and $S^2 = S^3 \neq S^1$. Define the power index $\phi$ as follows:
	\begin{equation*}
		\phi_i(E,\Pi) = 
		\begin{cases}
			d_i^{\text{ext}}(E,\Pi), & \text{if } (E,\Pi) \in \mathcal{G}(N) \setminus \mathcal{G}^A(N), \\
			d_i^{\text{ext}}(E,\Pi) + 1, & \text{if } (E,\Pi) \in \mathcal{G}^A(N) \text{ and } i \in \{2,3\}, \\
			d_i^{\text{ext}}(E,\Pi), & \text{if } (E,\Pi) \in \mathcal{G}^A(N) \text{ and } i \notin \{2,3\}.
		\end{cases}
	\end{equation*}
	
	The power index $\phi$ does not satisfy Axiom \ref{Ax-Ba} (Ba). Consider the graph $(\tilde{E}, \tilde{\Pi})$ with $\tilde{E} = \{\{1,2\}, \{1,3\}\}$ and $S^2 = S^3 \neq S^1$. In this case, we have:
	\[
	\phi_2(\tilde{E}, \tilde{\Pi}) = \phi_3(\tilde{E}, \tilde{\Pi}) = 2, \quad \text{but} \quad \phi_1(\tilde{E}, \tilde{\Pi}) = 2.
	\]
	
	The same considerations discussed in \textbf{(A)} apply here as well. By construction, the index $\phi$ satisfies Axioms \ref{Ax-en} (EN), \ref{Ax-ld} (IUN), \ref{Ax-a} (A), \ref{Ax-l} (Li), \ref{Ax-UI} (UI), and \ref{Ax-ci} (ConI).
\end{proof}

Before stating the proof of the main result, we present three useful lemmas.
\begin{lemma}\label{lemma:IUN+EN}
	Let $\phi$ be a power index that satisfies Axioms~\ref{Ax-en} (EN) and~\ref{Ax-ld} (IUN).  
	Then, for any $(E,\Pi) \in \mathcal{G}(N)$ and any node $i \in N$ with zero external degree, i.e., $d_i^{\text{ext}} = 0$, it holds that
	\begin{equation*}
		\phi_i(E, \Pi) = 0.
	\end{equation*}
\end{lemma}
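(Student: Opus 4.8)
The plan is simply to chain the two hypothesized axioms, since the statement is an immediate consequence of their combination. First I would invoke Axiom~\ref{Ax-ld} (IUN) to write
\[
\phi_i(E,\Pi) = \phi_i(E_i^*,\Pi), \qquad E_i^* = \big\{\{j,k\}\in E : j \in N(i,S^{-i}\,|\,E,\Pi),\ k\in N\big\}.
\]
The key observation is that the hypothesis $d_i^{\text{ext}} = d_i(S^{-i}) = |N(i,S^{-i})| = 0$ is equivalent to $N(i,S^{-i}\,|\,E,\Pi) = \emptyset$. Hence there is no node $j$ satisfying the membership condition in the definition of $E_i^*$, which forces $E_i^* = \emptyset$.

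It then remains to apply Axiom~\ref{Ax-en} (EN) to the edgeless graph $(\emptyset,\Pi)$: since $E_i^* = \emptyset$, we have $\phi(E_i^*,\Pi) = \phi(\emptyset,\Pi) = \underline{0}$, and in particular $\phi_i(E_i^*,\Pi) = 0$. Substituting into the IUN identity gives $\phi_i(E,\Pi) = 0$, as claimed. There is essentially no obstacle here beyond the bookkeeping remark that an empty external neighborhood makes $E_i^*$ vacuous; no computation is needed, and the argument is purely definitional.
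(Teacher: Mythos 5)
Your argument is correct and is essentially the paper's own proof: the paper likewise applies Axiom~\ref{Ax-ld} (IUN) to replace $E$ by $E_i^*$, notes (implicitly) that $d_i^{\text{ext}}=0$ forces $E_i^*=\emptyset$, and concludes via Axiom~\ref{Ax-en} (EN). Your version merely makes the step $N(i,S^{-i})=\emptyset \Rightarrow E_i^*=\emptyset$ explicit, which is a fair bit of bookkeeping the paper glosses over.
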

\begin{proof}
	Take a power index $\phi$ satisfying Axioms \ref{Ax-ld} (IUN) and \ref{Ax-en} (EN) and $(E,\Pi)\in\mathcal{G}(N)$. For every node $i$ that has not external links we have $\phi_i(E,\Pi)=\phi_h(E_i^*,\Pi)=\phi_i(\emptyset,\Pi)=0$, where the first equality follows from  Axiom \ref{Ax-ld} (IUN) and the last one from Axiom \ref{Ax-en} (EN).
	%The proof is provided in Appendix~\ref{app:proofs3}.
\end{proof}

\begin{lemma}\label{lemma:IUN+A}
	Let $\phi$ be a power index satisfying Axioms~\ref{Ax-ld} (IUN) and~\ref{Ax-a} (A).  
	Then, for any $(E,\Pi)\in\mathcal{G}(N)$ such that $\epsilon^{\mathrm{ext}} = d_i(S^{-i}) = p$ for some $i\in N$, it holds that
	\begin{equation*}
		\phi_j(E,\Pi) = \phi_{j'}(E,\Pi),
	\end{equation*}
	for all $j, j' \in N(i,S^{-i})$ such that $S^{j} = S^{j'}$.
\end{lemma}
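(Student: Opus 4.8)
The plan is to combine Axiom~\ref{Ax-a} (A) with Axiom~\ref{Ax-ld} (IUN), leveraging the hypothesis $\epsilon^{\mathrm{ext}} = d_i(S^{-i}) = p$. This hypothesis forces all $p$ external edges of $(E,\Pi)$ to be incident to the node $i$, so that every bridge node distinct from $i$ has $i$ as its unique external neighbor. Note that $j,j'\in N(i,S^{-i})\subseteq S^{-i}$ while $i\in S^i$, so in particular $i\neq j$ and $i\neq j'$.

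First I would use IUN to reduce both quantities to be compared to the same graph. Since $j$ (resp.\ $j'$) has $i$ as its only external neighbor, i.e.\ $N(j,S^{-j}\,|\,E,\Pi)=N(j',S^{-j'}\,|\,E,\Pi)=\{i\}$, the reduced edge set $E_j^*$ (resp.\ $E_{j'}^*$) appearing in Axiom~\ref{Ax-ld} coincides with the set $E_i:=\{\{i,k\}\in E: k\in N\}$ of all edges incident to $i$. Hence Axiom~\ref{Ax-ld} (IUN) gives $\phi_j(E,\Pi)=\phi_j(E_i,\Pi)$ and $\phi_{j'}(E,\Pi)=\phi_{j'}(E_i,\Pi)$.

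Second I would apply Anonymity to the reduced graph $(E_i,\Pi)$ using the transposition $\sigma:N\to N$ that interchanges $j$ and $j'$ and fixes every other node. Because $S^j=S^{j'}$ we have $\sigma\Pi=\Pi$; because both $\{i,j\}$ and $\{i,j'\}$ belong to $E_i$ (as $j,j'\in N(i)$) while $\sigma$ fixes $i$ together with every endpoint of an edge in $E_i$ other than $j,j'$, one checks $\sigma E_i=E_i$, so $\sigma$ is a graph automorphism of $(E_i,\Pi)$. Axiom~\ref{Ax-a} (A) then yields $\phi_j(E_i,\Pi)=\phi_{\sigma j}(\sigma E_i,\sigma\Pi)=\phi_{j'}(E_i,\Pi)$, and chaining this with the two identities from the previous step gives $\phi_j(E,\Pi)=\phi_{j'}(E,\Pi)$.

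The only genuinely delicate point is the first step: applying Anonymity directly to $(E,\Pi)$ would fail, since swapping $j$ and $j'$ generally does not preserve $E$ — their neighborhoods inside the common union $S^j$ may be completely different. The role of IUN is precisely to strip away this asymmetric internal information, so that the comparison takes place on the star $E_i$ centered at $i$, where $\sigma$ genuinely acts as an automorphism and Anonymity applies.
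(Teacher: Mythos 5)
Your proof is correct and follows essentially the same route as the paper: both arguments hinge on the fact that $\epsilon^{\mathrm{ext}}=d_i(S^{-i})$ forces every external edge to be incident to $i$, so the IUN-reduced edge sets of $j$ and $j'$ coincide with the star at $i$, and then the transposition of $j$ and $j'$ is exploited via Anonymity. The only difference is the order of application (the paper applies Anonymity on $(E,\Pi)$ first and then uses IUN twice via $(\sigma E)^*_{j'}=E^*_{j'}$, whereas you reduce by IUN first and note that the transposition is an automorphism of the star), which is an immaterial rearrangement of the same argument.
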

\begin{proof}
	Take $(E,\Pi)\in\mathcal{G}(N)$ and let $i\in N$ be such that the number of external neighbors is $\epsilon^{\mathrm{ext}} = d_i(S^{-i}) = p$. 
	Let $j, j' \in N(i, S_h)$, where $S_h \in \Pi$ is an a priori union different from $S^i$.
	Consider a permutation $\sigma : N \to N$ such that $\sigma(j) = j'$, $\sigma(j') = j$, and $\sigma(k) = k$ for all $k \neq j, j'$. 
	By Axiom~\ref{Ax-a} (A), we have:
	\[
	\phi_{j}(E,\Pi) = \phi_{\sigma(j)}(\sigma E, \sigma \Pi) = \phi_{j'}(\sigma E, \Pi),
	\]
	since $j$ and $j'$ belong to the same a priori union and, hence, their roles are symmetric.
	Now, apply Axiom~\ref{Ax-ld} (IUN): 
	\[
	\phi_{j'}(\sigma E, \Pi) = \phi_{j'}((\sigma E)^*_{j'}, \Pi) = \phi_{j'}(E_{j'}^*, \Pi) = \phi_{j'}(E, \Pi),
	\]
	where the second equality follows by construction, as $(\sigma E)^*_{j'} = E_{j'}^*$.
	Repeating the same argument for all $p$ nodes in $N(i, S^{-i})$, we obtain the desired result.
	%The proof is provided in Appendix~\ref{app:proofs3}.
\end{proof}

\begin{lemma}\label{lemma:ConI+A+UI}
	Let $\phi$ be a power index that satisfies Axioms~\ref{Ax-a} (A),~\ref{Ax-UI} (UI), and~\ref{Ax-ci} (ConI).  
	Then there exists a constant $\beta \in \mathbb{R}$ such that, for every unanimity graph $(E, \Pi) \in \mathcal{G}(N)$ satisfying the assumptions of Axiom~\ref{Ax-ci} (ConI), the following holds:
	\begin{equation*}
		\sum_{j \in N(i, S^{-i}|E, \Pi)} \left[ \phi_j(E \cup \{\{i,k\}\}, \Pi) - \phi_j(E, \Pi) \right] = \beta.
	\end{equation*}
\end{lemma}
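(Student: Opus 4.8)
The plan is to first show, using Axioms~\ref{Ax-ci} (ConI) and~\ref{Ax-a} (A) only, that the quantity
$I(E,\Pi,i,k):=\sum_{j\in N(i,S^{-i}|E,\Pi)}\bigl[\phi_j(E\cup\{\{i,k\}\},\Pi)-\phi_j(E,\Pi)\bigr]$
does not depend on the particular admissible data (a unanimity graph $(E,\Pi)$, a node $i$ with $d_i^{\mathrm{ext}}=\epsilon^{\mathrm{ext}}$, a node $k\in S^i$ with $\{i,k\}\notin E$) except possibly through the size $|S^i|$, and then to use Axiom~\ref{Ax-UI} (UI) to remove that last dependence as well.

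\textbf{Step 1: reduction to a function of $|S^i|$.} Axiom ConI states precisely that $I(E,\Pi,i,k)=I(E',\Pi,i,k)$ whenever $(E,\Pi)$ and $(E',\Pi)$ are unanimity graphs in which $i$ has maximal external degree and $k$ is admissible for both. I first drop the dependence on $k$: if $k,k'$ are both admissible for $(E,\Pi)$, then in a unanimity graph $k$ and $k'$ are isolated nodes of $S^i$, so the transposition $(k\,k')$ fixes $(E,\Pi)$ and Axiom A gives $I(E,\Pi,i,k)=I(E,\Pi,i,k')$. To compare two unanimity graphs with the same $(\Pi,i)$ that share no admissible $k$, I route through the minimal unanimity graph $E^\circ=\{\{i,w\}\}$ with $w\in S^{-i}$ and every other node isolated, for which every $k\in S^i\setminus\{i\}$ is admissible; chaining ConI-equalities through $E^\circ$ shows $I$ depends only on $(\Pi,i)$. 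Finally, applying Axiom A to a relabeling $\sigma$ with $\sigma\Pi=\Pi'$ and $\sigma i=i'$ shows $I$ is constant on $\mathrm{Sym}(N)$-orbits of $(\Pi,i)$; since $\Pi\in\mathbf{\Pi}_2(N)$, such an orbit is determined by $|S^i|$. Hence $I=g(|S^i|)$ for some $g\colon\{2,\dots,n-1\}\to\mathbb{R}$ (the value $|S^i|=1$ is excluded because an admissible $k$ requires $|S^i|\ge 2$).

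\textbf{Step 2: collapsing $g$ via UI.} I claim $g(a)=g(a+1)$ for every $a$ with $2\le a\le n-2$. Fix a partition with parts $S_1$ ($|S_1|=a$, $i\in S_1$) and $S_2$ ($|S_2|=n-a\ge 2$), set $E=\{\{i,w_0\}\}$ with $w_0\in S_2$ and all other nodes isolated, and choose $k\in S_1\setminus\{i\}$ and $v\in S_2\setminus\{w_0\}$. With $\Pi=(S_1,S_2)$ and $\tilde\Pi=(S_1\cup\{v\},\,S_2\setminus\{v\})$, both $(E,\Pi)$ and $(E,\tilde\Pi)$ are unanimity graphs with $i$ central and $N(i,S^{-i}|E,\Pi)=N(i,S^{-i}|E,\tilde\Pi)=\{w_0\}$, so $g(a)=\phi_{w_0}(E\cup\{\{i,k\}\},\Pi)-\phi_{w_0}(E,\Pi)$ and $g(a+1)=\phi_{w_0}(E\cup\{\{i,k\}\},\tilde\Pi)-\phi_{w_0}(E,\tilde\Pi)$. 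Now introduce the three-union partition $\Pi'=(S_1,\{v\},S_2\setminus\{v\})\in\mathbf{\Pi}_3(N)$. The only neighbor of $w_0$, both under $E$ and under $E\cup\{\{i,k\}\}$ (since $\{i,k\}\subseteq S_1$), is $i$, which lies in neither $\{v\}$ nor $S_2\setminus\{v\}$; hence Axiom UI applies to $w_0$ twice. Merging $\{v\}$ with $S_1$ is legitimate because $S_1\ne S^{w_0}=S_2\setminus\{v\}$ in $\Pi'$, and yields $\phi_{w_0}(\,\cdot\,,\Pi')=\phi_{w_0}(\,\cdot\,,\tilde\Pi)$; merging $S_2\setminus\{v\}$ with $S'=\{v\}$ is legitimate because $\{v\}\ne S^{w_0}$, and yields $\phi_{w_0}(\,\cdot\,,\Pi')=\phi_{w_0}(\,\cdot\,,\Pi)$; both hold for $\,\cdot\,=E$ and $\,\cdot\,=E\cup\{\{i,k\}\}$. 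Subtracting gives $g(a+1)=g(a)$.

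\textbf{Step 3: conclusion and main obstacle.} The chain $g(2)=g(3)=\dots=g(n-1)$ (vacuous when $n=3$) shows that $g$ equals a single constant $\beta$. Any unanimity graph fulfilling the assumptions of ConI admits an admissible $k\in S^i$, hence $|S^i|\ge 2$, so the displayed sum equals $g(|S^i|)=\beta$, which proves the lemma. I expect the heart of the argument to be Step~2, and within it the careful use of Axiom UI: the naive single merge that collapses $\Pi'$ directly onto $\Pi$ puts $\{v\}$ and $S_2\setminus\{v\}$ in roles that violate the requirement $S'\ne S^{w_0}$, so one must observe that only the role-swapped merge (merging $S_2\setminus\{v\}$ into $S'=\{v\}$) is admissible, while the passage to $\tilde\Pi$ is unobstructed. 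The remaining delicate point, the well-definedness of $g$ in Step~1 given that ConI only compares graphs sharing an admissible $k$, is routine once the interpolation through $E^\circ$ is noted.
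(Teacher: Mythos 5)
Your proposal is correct and takes essentially the same approach as the paper: Axioms A and ConI are used to remove the dependence on $k$, on labels and on the edge set (the paper likewise routes through the minimal graph $E^*=\{\{i,j\}\}$), and Axiom UI is then used to neutralize the size of $S^i$ from the external neighbor's viewpoint — you do this stepwise through the refinement $(S_1,\{v\},S_2\setminus\{v\})$, whereas the paper collapses in one shot to the partition $(S_1\cup(S_2\setminus\{j\}),\{j\})$ and handles independence of $i$ by a final permutation-plus-ConI argument. The delicate point you flag, namely that passing between the two partitions for the node $w_0$ forces a merge in which $w_0$'s own union plays the role of $S$ (with $S'\neq S^{w_0}$), is exactly the reading of UI that the paper's own proof implicitly relies on at the corresponding step, so your treatment is, if anything, more explicit about it.
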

\begin{proof}
	%The proof is given in Appendix~\ref{app:proofs3}.
	Take a unanimity graph $(E,\Pi)\in\mathcal{G}_N$ as in Axiom \ref{Ax-ci} (ConI) and define 
	\begin{equation*}
		\beta_i^k(E,\Pi)=  \sum_{j \in N(i, S^{-i}|E, \Pi)} \phi_j(E \cup \{i,k\}, \Pi) -  \phi_j(E, \Pi).
	\end{equation*}
	If $n=3$ the lemma follows straightforward because up to a permutation there is a unique unanimity graph that satisfies the hypothesis of Axiom \ref{Ax-ci} (ConI). From now onward, we take $n\geq4$.   \\
	
	We start proving that $\beta_i^k(E,\Pi)=\beta_i(E,\Pi)$, namely it does not depend on the label of chosen node $k$. 
	Without loss of generality we can take $i=1\in S_1$.  Consider the following unanimity graph $E=\big\{\{1,4\}\big\}$ and $\Pi=\big\{\{1,2,3\},N\setminus\{1,2\}\big\}$. Take $\phi$ a power index satisfying Axioms \ref{Ax-a} (A), \ref{Ax-UI} (UI), and \ref{Ax-ci} (ConI) and suppose, by contradiction, that $\beta_1^2(E,\Pi)\neq \beta_1^3(E,\Pi)$. Take the permutation $\sigma:N\to N$ such that $\sigma(2)=3$, $\sigma(3)=2$, and $\sigma(j)=j$, for all $j\neq 2,3$, then 
	\begin{align*}
		\beta_1^2(E,\Pi)&=\phi_4(E\cup\{1,2\},\Pi)-\phi_4(E,\Pi)=\phi_{\sigma(4)}(\sigma(E\cup\{1,2\}),\sigma(\Pi))-\phi_{\sigma(4)}(\sigma(E),\sigma(\Pi))\\
		&= \phi_4(E\cup\{1,3\},\Pi)-\phi_4(E,\Pi)=\beta_1^3(E,\Pi),
	\end{align*}
	where the second equality follows from Axiom \ref{Ax-a} (A). 
	
	Define the unanimity graph $(E^*,\Pi)$ as in Axiom \ref{Ax-ci} (ConI) such that $i\in S_1$ and $E^*=\{\{i,j\}\}$ with $j\in S_2$, then by Axiom \ref{Ax-ci} (ConI) $\beta_i(E,\Pi)=\beta_i(E^*,\Pi)$. Using Axiom \ref{Ax-ci} (UI) we have $\beta_i(E^*,\Pi)=\beta_i(E^*,\Pi^*)$, where $\Pi^*=(S_1^*,S_2^*)$ with   $S_1^*=S_1\cup (S_2\setminus\{j\})$ and $S_2^*= S_2\setminus (S_2\setminus\{j\})=\{j\}$. Taking any permutation $\sigma$ that is $i$-indifferent, namely $\sigma(i)=i$, by Axiom \ref{Ax-a} (A) we have that $\beta_i(E^*,\Pi^*)=\beta_i(\sigma(E^*),\sigma(\Pi^*))$. By the generality of $\sigma$, $\beta_i(E^*,\Pi^*)$ does not depend on the labels of $j$ and it is constant among all the possible pair $(E^*,\Pi^*)$. Then collecting all the arguments there exist a constant $\beta_i^*$ such that $\beta_i^k(E,\Pi)=\beta_i^*$. We conclude the proof showing that $\beta_i^*=\beta$, namely it does not depend on the index $i$. Suppose, by contradiction, that there are $i,j\in N$ such that $\beta_i^*\neq \beta_j^*$; without loss of generality we can take $i=1$ and $j=2$. Consider the following unanimity graph $E=\big\{\{1,3\}\big\}$ and $\Pi=\big\{\{1,2\},\{3\}\big\}$. Take $\phi$ a power index satisfying Axioms \ref{Ax-ci} (ConI) and \ref{Ax-a} (A), then $\phi_3(E\cup\{1,2\},\Pi)-\phi_3(E,\Pi)=\beta_1$. Take the permutation $\sigma:N\to N$ such that $\sigma(1)=2$, $\sigma(2)=1$, and $\sigma(3)=3$, then 
	\begin{align*}
		\beta_1^*&=\phi_3(E\cup\{1,2\},\Pi)-\phi_3(E,\Pi)=\phi_{\sigma(3)}(\sigma(E\cup\{1,2\}),\sigma(\Pi))-\phi_{\sigma(3)}(\sigma(E),\sigma(\Pi))\\
		&= \phi_3(\{2,3\}\cup\{1,2\},\Pi)-\phi_3(\{2,3\},\Pi)=\beta_2^*,
	\end{align*}
	where the second equality follows from Axiom \ref{Ax-a} (A) and the third one from Axiom \ref{Ax-ci} (ConI). This contradicts the hypothesis that $\beta_1\neq\beta_2$.
\end{proof}

\begin{proof}[\bf Proof of Theorem \ref{main}]
	
	Let $\phi$ be a power index satisfying Axioms~\ref{Ax-en} (EN), \ref{Ax-ld} (IUN), \ref{Ax-a} (A), \ref{Ax-l} (Li), \ref{Ax-UI} (UI), and \ref{Ax-ci} (ConI)
	and \ref{Ax-Ba} (Ba).
	The proof is divided into several steps. 
	
	At each step, we examine specific coalition structures with a priori unions, 
	and for each of these, we calculate the power index of all the involved nodes. 
	The objective is to show that the computed power index coincides with the power index \( \varphi^{(\alpha,\beta)} \), where the pair \( (\alpha,\beta) \) will be defined uniquely throughout the steps.
	
	\paragraph{STEP 1} 
	Suppose $r = 2$, and let $(E,\Pi) \in \mathcal{G}(N)$ be such that the external degree $\epsilon^{\mathrm{ext}} = 1$, with $\{i,j\} \in E$, $i \in S_1$, $j \in S_2$, and $N(i,S_1)=N(j,S_2)=\emptyset$ (see for example the figure below).
	\begin{center}
		\begin{tikzpicture}[scale=0.40]
			% Vertices
			\Vertex[size=.6,color=Green,opacity=.4,label=1]{1}
			\Vertex[x=0,y=-3.0,size=.6,color=Green,opacity=.4,label=2]{2}
			\Vertex[x=-2,y=-1.5,size=.6,color=Green,opacity=.4,label=3]{3}
			\Vertex[x=4,y=0.0,size=.6,color=red,opacity=.4,label=4]{4}
			\Vertex[x=4,y=-3.0,size=.6,color=red,opacity=.4,label=5]{5}
			\Vertex[x=6,y=-2,size=.6,color=red,opacity=.4,label=6]{6}
			\Vertex[x=6,y=-0.4,size=.6,color=red,opacity=.4,label=7]{7}
			
			% Edges
			\Edge[lw=1,color=black,bend=0](3)(2)
			\Edge[lw=1,color=black,bend=0](1)(4)
			\Edge[lw=1,color=black,bend=0](7)(6)
			\Edge[lw=1,color=black,bend=0](5)(6)
			
			\node[draw, ellipse, fit=(1) (2) (3), label=above:$S_1$] {};
			\node[draw, ellipse, fit=(4) (5) (6), label=above:$S_2$] {};
		\end{tikzpicture} 
	\end{center}
	Observe that $E_i^* = E_j^* = \{\{i,j\}\}$, where $E_i^*$ and $E_j^*$ are defined as in Axiom~\ref{Ax-ld} (IUN). 
	Since $\{\{i,j\}\}$ is a local unanimity graph, Axioms~\ref{Ax-ld} (IUN) and~\ref{Ax-Ba} (Ba) imply:
	\[
	\phi_i(E,\Pi) = \phi_i(E_i^*,\Pi) = \phi_j(E_j^*,\Pi) = \phi_j(E,\Pi).
	\]
	We define $\alpha := \phi_i(E,\Pi)$. Observe that the constant $\alpha$  does not depend neither on $i,j$ nor on $(E,\Pi)$ because of Axiom \ref{Ax-a} (A). Now consider a graph $(E,\Pi)$ such that $\epsilon^{\mathrm{ext}} = d_i^{\text{ext}} = p > 1$, with $d_i^{\text{int}} = 0$ for some $i \in S_1$, and $d_j^{\text{int}} = 0$ for all $j \in N(i, S_2)$ (see for example the figure below).
	
	\begin{center}
		\begin{tikzpicture}[scale=0.40]
			% Vertices
			\Vertex[size=.6,color=Green,opacity=.4,label=1]{1}
			\Vertex[x=0,y=-3.0,size=.6,color=Green,opacity=.4,label=2]{2}
			\Vertex[x=-2,y=-1.5,size=.6,color=Green,opacity=.4,label=3]{3}
			\Vertex[x=4,y=0.0,size=.6,color=red,opacity=.4,label=4]{4}
			\Vertex[x=4,y=-3.0,size=.6,color=red,opacity=.4,label=5]{5}
			\Vertex[x=6,y=-2,size=.6,color=red,opacity=.4,label=6]{6}
			\Vertex[x=6,y=-0.4,size=.6,color=red,opacity=.4,label=7]{7}
			
			% Edges (ESEMPIO)
			\Edge[lw=1,color=black,bend=0](1)(4)
			\Edge[lw=1,color=black,bend=0](1)(5)
			\Edge[lw=1,color=black,bend=0](2)(3)
			
			% Groupings
			\node[draw, ellipse, fit=(1) (2) (3), label=above:$S_1$] {};
			\node[draw, ellipse, fit=(4) (5) (6), label=above:$S_2$] {};
		\end{tikzpicture}
	\end{center}
	Observe that $E_i^* = E_j^*$ for all $j \in N(i,S_2)$, and $(E_i^*,\Pi)$ is a local unanimity graph.
	
	Let $\{E_h^i\}_{h=1}^p$ be the family of all subsets of edges obtained by removing from $E_i^*$ exactly $p-1$ edges connecting $i$ with nodes in $N(i, S_2)$, as allowed by Axiom~\ref{Ax-l} (Li). From the previous calculation, each $(E_h^i,\Pi)$ is such that $\phi_i(E_h^i,\Pi) = \alpha$.
	
	By Axioms~\ref{Ax-ld} (IUN) and~\ref{Ax-l} (Li), we conclude that:
	\[
	\phi_i(E,\Pi) = p \alpha.
	\]
	Moreover, by Axioms~\ref{Ax-ld} (IUN) and~\ref{Ax-Ba} (Ba),
	\[
	\sum_{j \in N(i,S_2)} \phi_j(E,\Pi) = \sum_{j \in N(i,S_2)} \phi_j(E_i^*,\Pi) = p \alpha.
	\]
	Finally, Lemma~\ref{lemma:IUN+A} implies that $\phi_j(E,\Pi) = \alpha$, for all $j \in N(i,S_2)$.
	
	\medskip

	\paragraph{STEP 2} Suppose \( r = 2 \). Take a pair \( (E, \Pi) \) such that \( \epsilon^{\mathrm{ext}} = d_i^{\text{ext}} = p \geq 1 \), \( d_i^{\text{int}} = 1 \) with \( i \in S_1 \), and \( d_j^{\text{int}} = 0 \) for all \( j \in N(i, S_2) \) (see for example the figure below).  
	
	\begin{center}
		\begin{tikzpicture}[scale=0.40]
			% Vertices
			\Vertex[size=.6,color=Green,opacity=.4,label=1]{1}
			\Vertex[x=0,y=-3.0,size=.6,color=Green,opacity=.4,label=2]{2}
			\Vertex[x=-2,y=-1.5,size=.6,color=Green,opacity=.4,label=3]{3}
			\Vertex[x=4,y=0.0,size=.6,color=red,opacity=.4,label=4]{4}
			\Vertex[x=4,y=-3.0,size=.6,color=red,opacity=.4,label=5]{5}
			\Vertex[x=6,y=-2,size=.6,color=red,opacity=.4,label=6]{6}
			\Vertex[x=6,y=-0.4,size=.6,color=red,opacity=.4,label=7]{7}
			
			% Edges
			\Edge[lw=1,color=black,bend=0](3)(2)
			\Edge[lw=1,color=black,bend=0](1)(4)
			\Edge[lw=1,color=black,bend=0](1)(2)
			%\Edge[lw=1,color=black,bend=0](1)(3)
			\Edge[lw=1,color=black,bend=0](1)(5)
			\node[draw, ellipse, fit=(1) (2) (3), label=above:$S_1$] {};
			\node[draw, ellipse, fit=(4) (5) (6), label=above:$S_2$] {};
		\end{tikzpicture} 
	\end{center}

	First, note that \( (E_i^*, \Pi) \) is a local unanimity graph, so we can apply the computation from STEP 1. Then, by Axiom~\ref{Ax-ld} (IUN), we get:
	\[
	\phi_i(E, \Pi) = p \alpha.
	\]
	
	Observe that for all \( j, j' \in N(i, S_2) \), we have \( E_j^* = E_{j'}^* =: E^* \), and \( (E_j^*, \Pi) \) is a unanimity graph.  
	
	Now, take \( k \in N(i, S_1) \). By Axiom~\ref{Ax-ci} (ConI) and Lemma~\ref{lemma:ConI+A+UI}, there exists \( \beta \in \mathbb{R} \) such that:
	\begin{align*}
		\sum_{j \in N(i, S_2|E^*, \Pi)} \phi_j(E^*, \Pi)
		&= \sum_{j \in N(i, S_2|E^*, \Pi)} \phi_j(E^* \setminus \{i, k\}, \Pi) + \beta \\
		&= p \alpha + \beta.
	\end{align*}
	
	Finally, by Lemma~\ref{lemma:IUN+A}, the values \( \phi_j(E^*, \Pi) \) are all equal across \( j \in N(i, S_2) \), and since there are \( d_i(S_2) \) of them, it follows that:
	\[
	\phi_j(E, \Pi) = \alpha + \frac{\beta}{d_i(S_2)}.
	\]
	Now, we argue by induction on the internal degree of node $i$. Consider the pair $(E, \Pi)$ such that $\epsilon^{\mathrm{ext}} = d_i^{\text{ext}} = p > 1$, $d_i^{\text{int}} = m + 1$, with $i \in S_1$, and $d_j^{\text{int}} = 0$ for all $j \in N(i, S_2)$ (see for example the figure below).

	\begin{center}
		\begin{tikzpicture}[scale=0.40]
			% Vertices
			\Vertex[size=.6,color=Green,opacity=.4,label=1]{1}
			\Vertex[x=0,y=-3.0,size=.6,color=Green,opacity=.4,label=2]{2}
			\Vertex[x=-2,y=-1.5,size=.6,color=Green,opacity=.4,label=3]{3}
			\Vertex[x=4,y=0.0,size=.6,color=red,opacity=.4,label=4]{4}
			\Vertex[x=4,y=-3.0,size=.6,color=red,opacity=.4,label=5]{5}
			\Vertex[x=6,y=-2,size=.6,color=red,opacity=.4,label=6]{6}
			\Vertex[x=6,y=-0.4,size=.6,color=red,opacity=.4,label=7]{7}
			
			% Edges
			\Edge[lw=1,color=black,bend=0](3)(2)
			\Edge[lw=1,color=black,bend=0](1)(4)
			\Edge[lw=1,color=black,bend=0](1)(2)
			\Edge[lw=1,color=black,bend=0](1)(3)
			\Edge[lw=1,color=black,bend=0](1)(5)
			\node[draw, ellipse, fit=(1) (2) (3), label=above:$S_1$] {};
			\node[draw, ellipse, fit=(4) (5) (6), label=above:$S_2$] {};
		\end{tikzpicture} 
	\end{center}
	As before, note that $(E_i^*, \Pi)$ is a local unanimity graph, so we can apply the calculation from STEP 1. Then, by Axiom \ref{Ax-ld} (IUN), we have $\phi_i(E, \Pi) = p\alpha$. 
	
	Next, take $k \in N(i, S_1)$ and assume that $\phi_j(E \setminus \{i, k\}, \Pi) = \alpha + \frac{\beta m}{d_i(S_2)}$ for all $j \in N(i, S_2)$. 
	
	Observe that $E_j^* = E_{j'}^* =: E^*$ for all $j, j' \in N(i, S_2)$, and that $(E_j^*, \Pi)$ is a unanimity graph. By Axiom \ref{Ax-ci} (ConI) and Lemma \ref{lemma:ConI+A+UI}, we now have the following:
	
	\begin{align*}
		\sum_{j \in N(i,S_2|E^*,\Pi)}\phi_j(E^*,\Pi)&= \sum_{j \in N(i,S_2|E^*,\Pi)}\phi_{j}(E^*\setminus \{i,k\},\Pi)+\beta\\
		&=  p \alpha + \frac{\beta m p}{d_i(S_2)} + \beta = p \alpha + \beta(m+1).
	\end{align*}
	Then by Lemma \ref{lemma:IUN+A} we know that 
	$$ \phi_j(E,\Pi)=   \alpha + \frac{\beta(m+1)}{d_i(S_2)}=   \alpha + \beta\frac{d_i(S_1)}{d_i (S_2)}=\varphi^{(\alpha,\beta)}_j(E,\Pi).$$

	\paragraph{STEP 3} 
	Suppose $r=2$. Take an arbitrary graph with a priori unions $(E,\Pi)$ such that $\epsilon^{\mathrm{ext}}=d_i^{\text{ext}}=p\geq1$,  
	with $i\in S_1$  
	(see for example the figure below).
	\begin{center}
		\begin{tikzpicture}[scale=0.40]
			% Vertices
			\Vertex[size=.6,color=Green,opacity=.4,label=1]{1}
			\Vertex[x=0,y=-3.0,size=.6,color=Green,opacity=.4,label=2]{2}
			\Vertex[x=-2,y=-1.5,size=.6,color=Green,opacity=.4,label=3]{3}
			\Vertex[x=4,y=0.0,size=.6,color=red,opacity=.4,label=4]{4}
			\Vertex[x=4,y=-3.0,size=.6,color=red,opacity=.4,label=5]{5}
			\Vertex[x=6,y=-2,size=.6,color=red,opacity=.4,label=6]{6}
			\Vertex[x=6,y=-0.4,size=.6,color=red,opacity=.4,label=7]{7}
			
			% Edges
			\Edge[lw=1,color=black,bend=0](3)(2)
			\Edge[lw=1,color=black,bend=0](1)(4)
			\Edge[lw=1,color=black,bend=0](1)(2)
			\Edge[lw=1,color=black,bend=0](1)(3)
			\Edge[lw=1,color=black,bend=0](1)(5)
			\Edge[lw=1,color=black,bend=0](4)(7)
			\Edge[lw=1,color=black,bend=0](6)(4)
			\Edge[lw=1,color=black,bend=0](7)(5)
			\Edge[lw=1,color=black,bend=0](7)(6)
			\Edge[lw=1,color=black,bend=0](5)(6)
			\node[draw, ellipse, fit=(1) (2) (3), label=above:$S_1$] {};
			\node[draw, ellipse, fit=(4) (5) (6), label=above:$S_2$] {};
		\end{tikzpicture} 
	\end{center}
	
	First, observe that $E_j^* = E_{j'}^* =: E^*$ for all $j, j' \in N(i,S_2)$, and that $(E_j^*, \Pi)$ is a unanimity graph. Therefore, we can apply the calculations from STEP 2 to obtain $\phi_j(E, \Pi) = \varphi^{(\alpha,\beta)}_j(E, \Pi)$. 
	
	Next, consider the family $\{ E_h^i \}_{h=1}^p$ of all possible edge sets obtained by removing the $p-1$ edges connecting node $i$ to the nodes in $N(i, S_2|E_i^*, \Pi)$. Using Axiom \ref{Ax-l} (Li), we define $j_h$ as the unique node in $S_2$ that is connected to $i$ in $E_h^i$. The graphs $(E_h^i, \Pi)$ are unanimity graphs, so we can apply STEP 2, where node $i$ plays the role of the nodes $j$. 
	
	Thus, we have:
	\begin{align*}
		\phi_i(E, \Pi) &= \sum_{h=1}^p \phi_i(E_h^i, \Pi) = \sum_{h=1}^p \alpha + \beta \frac{d_{j_h}(S_2)}{d_{j_h}(S_1)} \\
		&= \alpha d_i(S_2) + \beta \sum_{j \in N(i,S_2)} \frac{d_j(S_2)}{d_j(S_1)} = \varphi^{(\alpha,\beta)}_i(E, \Pi).
	\end{align*}

	\paragraph{STEP 4} Suppose $r=2$.
	Consider an arbitrary graph with a priori unions $(E,\Pi)$ such that $\epsilon^{\mathrm{ext}}>1$ (see for example the figure below). 
	\begin{center}
		\begin{tikzpicture}[scale=0.40]
			% Vertices
			\Vertex[size=.6,color=Green,opacity=.4,label=1]{1}
			\Vertex[x=0,y=-3.0,size=.6,color=Green,opacity=.4,label=2]{2}
			\Vertex[x=-2,y=-1.5,size=.6,color=Green,opacity=.4,label=3]{3}
			\Vertex[x=4,y=0.0,size=.6,color=red,opacity=.4,label=4]{4}
			\Vertex[x=4,y=-3.0,size=.6,color=red,opacity=.4,label=5]{5}
			\Vertex[x=6,y=-2,size=.6,color=red,opacity=.4,label=6]{6}
			\Vertex[x=6,y=-0.4,size=.6,color=red,opacity=.4,label=7]{7}
			
			% Edges
			\Edge[lw=1,color=black,bend=0](3)(2)
			\Edge[lw=1,color=black,bend=0](1)(4)
			\Edge[lw=1,color=black,bend=0](1)(2)
			\Edge[lw=1,color=black,bend=0](1)(3)
			\Edge[lw=1,color=black,bend=0](1)(5)
			\Edge[lw=1,color=black,bend=0](2)(7)
			\Edge[lw=1,color=black,bend=0](2)(5)
			\Edge[lw=1,color=black,bend=0](4)(7)
			\Edge[lw=1,color=black,bend=0](6)(4)
			\Edge[lw=1,color=black,bend=0](7)(5)
			\Edge[lw=1,color=black,bend=0](7)(6)
			\Edge[lw=1,color=black,bend=0](5)(6)
			\node[draw, ellipse, fit=(1) (2) (3), label=above:$S_1$] {};
			\node[draw, ellipse, fit=(4) (5) (6), label=above:$S_2$] {};
		\end{tikzpicture} 
	\end{center}
	
	For each $i\in N$ such that $d_i^{\text{ext}}=p>1$, applying Axiom \ref{Ax-l} (Li) it holds 
	$$
	\phi_i(E,\Pi) = \sum_{h=1}^p \phi_i(E_h^i,\Pi). 
	$$
	
	Using Axiom \ref{Ax-ld} (IUN) again on node $i$ it holds
	$$
	\phi_i(E_h^i,\Pi) = \phi_i((E_h^i)^*_i,\Pi). 
	$$
	Observe that the graph $((E_h^i)^*_i,\Pi)$ satisfies the hypothesis of STEP 2, then $\phi_i((E_h^i)^*_i,\Pi)=\alpha+\beta\frac{d_{j_h}(S^{-i})}{d_{j_h}(S^i)}$, where $j_h$ is the unique node in $S^{-i}$ connect to $i$ in $E_{h}^i$. We have
	$$
	\phi_i(E,\Pi) = \alpha d_i(S^{-i}) + \beta\sum_{j\in N(i,S^{-i})} \frac{d_j(S^{-i})}{d_j(S^i)} = \varphi_i^{(\alpha,\beta)}(E,\Pi). 
	$$
	
	\paragraph{STEP 5} Take $r>2$. Consider a generic graph with a priori unions $(E,\Pi)\in\mathcal{G}(N)$ with $\Pi\in\mathbf{\Pi}_{r}(N)$ and $r\leq n$ (see for example the figure below). 
	
	\begin{center}
		\begin{tikzpicture}[scale=0.40]
			% Vertices
			\Vertex[size=.6,color=Green,opacity=.4,label=1]{1}
			\Vertex[x=0,y=-3.0,size=.6,color=Green,opacity=.4,label=2]{2}
			\Vertex[x=-2,y=-1.5,size=.6,color=Green,opacity=.4,label=3]{3}
			\Vertex[x=4,y=0.0,size=.6,color=red,opacity=.4,label=4]{4}
			\Vertex[x=4,y=-3.0,size=.6,color=red,opacity=.4,label=5]{5}
			\Vertex[x=6,y=-2,size=.6,color=red,opacity=.4,label=6]{6}
			\Vertex[x=6,y=-0.4,size=.6,color=red,opacity=.4,label=7]{7}
			\Vertex[x=11.5,y=-3.0,size=.6,color=Blue,opacity=.4,label=8]{8}
			\Vertex[x=10,y=-1.8,size=.6,color=Blue,opacity=.4,label=9]{9}
			\Vertex[x=10.5,y=-0.2,size=.6,color=Blue,opacity=.4,label=10]{10}
			% Edges
			\Edge[lw=1,color=black,bend=0](3)(2)
			\Edge[lw=1,color=black,bend=0](1)(4)
			\Edge[lw=1,color=black,bend=0](1)(2)
			\Edge[lw=1,color=black,bend=0](1)(3)
			\Edge[lw=1,color=black,bend=0](1)(5)
			\Edge[lw=1,color=black,bend=0](2)(7)
			\Edge[lw=1,color=black,bend=0](2)(5)
			\Edge[lw=1,color=black,bend=0](4)(7)
			\Edge[lw=1,color=black,bend=0](6)(4)
			\Edge[lw=1,color=black,bend=0](7)(5)
			\Edge[lw=1,color=black,bend=0](7)(6)
			\Edge[lw=1,color=black,bend=0](5)(6)
			\Edge[lw=1,color=black,bend=0](10)(7)
			\Edge[lw=1,color=black,bend=0](9)(7)
			\Edge[lw=1,color=black,bend=-15](2)(8)
			\Edge[lw=1,color=black,bend=0](9)(8)
			\Edge[lw=1,color=black,bend=0](10)(8)
			\node[draw, ellipse, fit=(1) (2) (3), label=above:$S_1$] {};
			\node[draw, ellipse, fit=(4) (5) (6), label=above:$S_2$] {};
			\node[draw, ellipse, fit=(8) (9) (10), label=above:$S_3$] {};
		\end{tikzpicture} 
	\end{center}
	
	Without loss of generality, assume that \( i \in S_1 \) and \( d_i^{\text{ext}} = p > 0 \). By Axiom \ref{Ax-l} (Li), we have
	
	\[
	\phi_i(E,\Pi) = \sum_{h=1}^p \phi_i(E_h^i,\Pi).
	\]
	
	Next, applying Axiom \ref{Ax-ld} (IUN) again to node \( i \), we obtain
	
	\[
	\phi_i(E_h,\Pi) = \phi_i((E_h^i)_i^*, \Pi).
	\]
	
	Let \( j_h \) be the unique node in \( S^{-i} \) that is connected to \( i \) in \( E_h^i \). Notice that the uions, other than \( S^{j_h} \), do not have links to \( S_1 \). Therefore, by applying Axiom \ref{Ax-UI} (UI), we get
	
	\[
	\phi_i((E_h^i)_i^*, \Pi) = \phi_i((E_h^i)_i^*, \{S_1, \cup_{j \neq 1} S^j\}) = \alpha + \beta \sum_{j \in N(i,S^{-i})} \frac{d_j(S^{-i})}{d_j(S_1)}.
	\]
	
	Thus, we have
	
	\[
	\phi_i(E, \Pi) = \alpha d_i(S^{-i}) + \beta \sum_{j \in N(i, S^{-i})} \frac{d_j(S^{-i})}{d_j(S_1)} = \varphi_i^{(\alpha,\beta)}(E, \Pi).
	\]
	
	The proof is then complete.
\end{proof}

\begin{proof}[\bf Proof of Theorem \ref{mainp}]
	The proof retraces the same steps of the proof of Theorem \ref{main}. Observe that at STEP 1 we get $\phi_i(E,\Pi)=\alpha\geq 0$, by Axiom \ref{Ax-pen} (EP). Moreover, if in Lemma \ref{lemma:ConI+A+UI} we substitute Axiom \ref{Ax-ci} (ConI) with Axiom \ref{Ax-pci} (PConI), then we obtain $\beta\geq0$.
\end{proof}

\bibliographystyle{elsarticle-harv}
\bibliography{References}

\end{document}